\newcommand{\RN}[1]{  \textup{\uppercase\expandafter{\romannumeral#1}}}
\providecommand{\eps}{\varepsilon}
\newtheorem{claim}{Claim}
\newtheorem{proposition}{Proposition}
\newtheorem{theorem}{Theorem}
\newtheorem{corollary}{Corollary}
\newtheorem{lemma}{Lemma}
\newtheorem{fact}{Fact}
\theoremstyle{remark}
\theoremstyle{remark}
\newtheorem{definition}{Definition}
\@date \else {\vskip3ex \centering\footnotesize\@date\par\vskip1ex}\fi
\else \@footnotetext{\@setdate}\fi}
\numberwithin{equation}{section}
\numberwithin{definition}{section}
\numberwithin{theorem}{section}
\numberwithin{proposition}{section}
\numberwithin{lemma}{section}
\numberwithin{corollary}{section}}
\DeclareMathOperator*{\argmax}{arg\,max}
\DeclareMathOperator*{\marg}{marg}
\newcommand{\ltau}{\underline{\tau}}
\newcommand{\utau}{\overline{\tau}}
\DeclareMathOperator*{\sgn}{sgn}
\newtheorem{example}{Example}
\title{Friend or Foe: Delegating to an AI whose Alignment is Unknown}\thanks{We thank Yifan Dai, Sendhil Mullainathan, and Jean Tirole  for helpful comments, and NSF grants SES-2417162 and SES-2145352 for financial support.}
 \author{Drew Fudenberg}\thanks{Fudenberg: Department of Economics, MIT, drewf@mit.edu}  \author{Annie Liang}\thanks{
 Liang: Department of Economics, Northwestern University, annie.liang@northwestern.edu} 
\date{\today}
\begin{document}

\maketitle
\begin{abstract}

AI systems have the potential to improve decision-making, but decision makers face the risk that the AI may be misaligned with their objectives. We study this problem in the context of a treatment decision, where a designer decides which patient attributes to reveal to an AI before receiving a prediction of the patient's need for treatment. Providing the AI with more information  increases the benefits of an aligned AI but also amplifies the harm from a misaligned one. We characterize how the designer should select attributes to balance these competing forces, depending on their beliefs about the AI's reliability. We show that the designer should optimally disclose attributes that identify \emph{rare} segments of the population in which the need for treatment is high, and pool the remaining patients.

\end{abstract}
\bigskip
\bigskip
\bigskip
\bigskip
\bigskip
\thispagestyle{empty} \vspace{-1em}

\newpage \setcounter{page}{1}

\section{Introduction}
A key promise of artificial intelligence (AI) is its ability to refine and personalize recommendations.\footnote{This has been a leading application of machine learning techniques, see e.g., \citet{WagerAthey2018} and \citet{BryanKarlan2024BigLoans}.} For example, cardiac ablation,  a leading treatment for heart arrhythmia, is currently recommended based on the patient's age, arrhythmia burden, and a small set of additional summary statistics. In the future, AI may improve decision making by making  discoveries that permit better identification of who will benefit from treatment.

At the same time, there is increasing concern that AI systems may act with misaligned objectives relative to their designers, and that this may be difficult to prevent or detect because we do not fully understand the system's internal process. A natural safeguard is to validate the AI's predictions against established data. For example, if we know that cardiac ablations are more effective for patients under 40, then we can discipline the AI by verifying that its predictions align with these empirical findings. However, this constraint is  not enough to keep a misaligned AI from making misleading or harmful recommendations. We analyze  how much discretion to grant the AI and what information to reveal to it, given the designer's tradeoff between the risk from a misaligned AI and the benefits from an aligned AI.

To illustrate, suppose the designer  knows that a particular risky treatment will be successful for $40\%$ of the population and unsuccessful for the other $60\%$.  The designer gives the AI access to a binary covariate with unknown relation to need for treatment that also   splits the population into 40\% and 60\% groups,  and asks the AI to report their joint distribution. The designer then takes the optimal action for each patient given this reported distribution, where the designer receives a payoff of 1 from correctly treating a patient, a payoff of $-1$ from incorrectly treating a patient, and a  payoff of zero otherwise. What is the range of possible outcomes?

The best case is that the covariate is perfectly correlated with need for treatment, and the AI reveals this. Then the designer will treat precisely those patients who need treatment.  The worst case is that the covariate is perfectly correlated with need for treatment, but the AI instead reports that the conditional probability of need for treatment in the 60\% subgroup is $2/3$, while the conditional probability of need of treatment in the 40\% subgroup is zero. This is consistent with the designer's information that $40\%$ of the population needs treatment, so the designer cannot detect that the AI is falsely reporting. But this distribution leads the designer to treat everyone in the 60\% subgroup, none of whom actually need treatment. 

More generally, when the designer does not know whether the AI is aligned or misaligned, then they cannot adapt the choice of information to the AI's alignment. The designer must therefore trade off the possible gain from more accurate recommendations from an aligned AI with the possible loss from being misled by a misaligned one. We say that a pair of worst-case and best-case payoffs is \emph{implementable} if some choice of attributes leads to this worst case and best case. Among all such pairs, we characterize the efficient frontier---i.e., the implementable payoffs that are undominated by any other implementable pair---and explain how these efficient outcomes are implemented.

Section \ref{sec:Model} describes our framework.  A designer has access to historical data, which describes the probability that treatment is successful in each of several subgroups of the population. The designer chooses an additional set of attributes that further partitions each subgroup, where the link between those attributes and treatment outcomes is unknown. The AI is asked to report a joint distribution relating attributes to outcomes in each subgroup, where this distribution must agree with the designer's historical data and with the true joint distribution over covariates.\footnote{We consider it easier for the designer to verify the relationship between covariates in the population than to verify a relationship between those covariates and a treatment outcome.}

Depending on the true distribution and the AI's report, the designer's payoff falls in between two extremes: a best case,  where both the AI and Nature are benevolent and use the attributes to maximize the designer's payoff, and a worst case, where the AI and Nature are both malevolent and use the attributes to minimize it. We assume that the set of potential attributes is rich enough to allow for any desired finite-support joint distribution  between the original covariates and the additional ones,\footnote{ 
 For this reason, in the baseline model the physical identity of the additional attributes is irrelevant; their purpose is to allow the benevolent AI to provide useful advice.} and we define a frontier of undominated pairs of best-case and worst-case payoffs as we vary over different choices of attributes. Designers who aggregate these payoffs linearly have optimal points that are extreme points of this frontier, and we characterize those solutions. 
 
Section \ref{sec:Benchmark} considers a benchmark case in which priors are reasonably informative, so that every subgroup is low-risk ($<1/3$ need treatment) or high-risk ($>2/3$ need treatment).  We show that in this case the frontier is simply a line segment connecting a \emph{trust point} to a \emph{distrust point}. Distrust corresponds to giving the AI no information, so the best-case and worst-case payoffs coincide with the payoffs from following the best action under the prior. The trust point instead permits full-information targeting, where the designer precisely treats the patients who should be treated (as in our earlier example), but exposes the designer to a worst case  in which  treatment and need of treatment are negatively correlated. Moving along the frontier towards the trust point raises  the best-case payoff and lowers the worst-case  payoff at  a constant rate: the ratio between the value of correct treatment to the loss from an unneeded one.

The intermediate points on the frontier can be implemented by  attributes that isolate rare segments of the population. Intuitively, because the designer does not know how  attributes relate to treatment outcomes, they cannot ensure that the AI will use them correctly. The designer's only instrument for controlling mistreatment is to decide what fraction of the population the AI is allowed to influence. For example, when the designer knows that only $25\%$ of the population needs treatment, then giving the AI (only) an attribute that separates 1\% of the population from 99\% means that the AI can  induce mistreatment of at most 1\% of the population. Expanding the share of the population that the AI can influence simultaneously improves the best case and worsens the worst case.

Section \ref{sec:GeneralResults} turns to the general setting. When a subgroup is neither low risk nor high risk, the designer faces a worse tradeoff, because the AI has more flexibility over how to lie. For example, if 50\% of a population needs treatment  the AI can induce treatment on either the 1\% or 99\% subpopulations. This changes the frontier to two segments meeting at a new ``hedge point,'' with endpoints at the trust and distrust points (see Figure \ref{fig:intro}).

\begin{figure}
    \centering
    \includegraphics[scale=0.3]{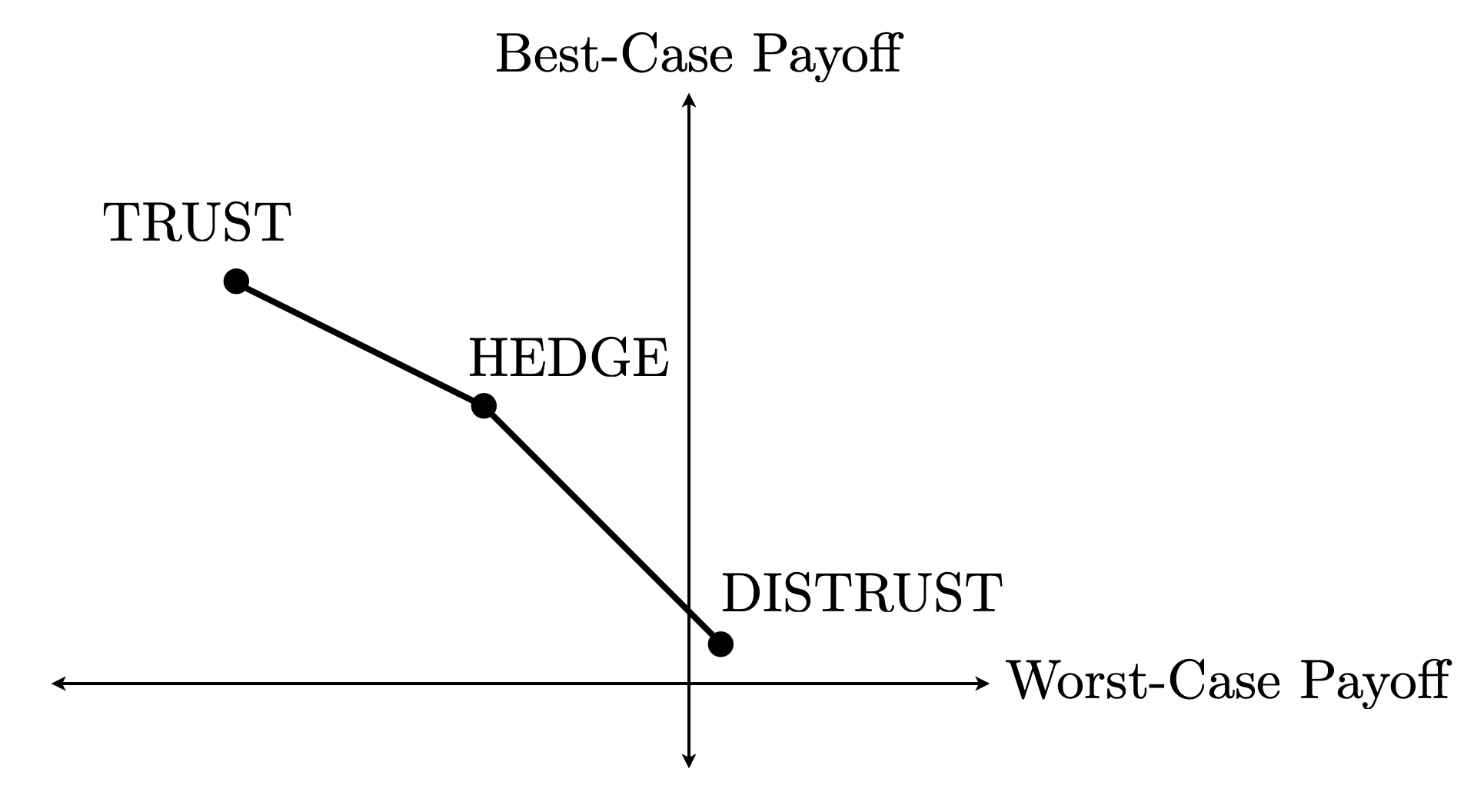}
    \caption{Depiction of the efficient frontier for subgroups with balanced need of treatment.}
    \label{fig:intro}
\end{figure}

A designer who is optimistic about the AI's alignment implements the trust point for every subgroup. As weight on the worst-case outcome increases, reliance on the AI is withdrawn subgroup by subgroup, starting with subgroups  where treatment outcomes are more skewed, and ending with subgroups with more balanced need of treatment (closer to 50-50). When the designer's weight on the worst case is sufficiently high, they implement the distrust point for every subgroup. Formally, the aggregate frontier is the  population-weighted Minkowski sum of the subgroup frontiers.

We have so far supposed that the designer believes that any relationship between the additional attributes and the need for treatment is possible. Section \ref{sec:ConstrainedNature}  extends our model to allow the  designer to choose bounds on how predictive the additional attributes are. We show that our original characterization of the frontier extends with some important differences, including that the tradeoff between the best-case and worst-case payoffs is now dictated by the bounds on informativeness. We further solve for the optimal informativeness bounds under an assumption that need for treatment in the population is relatively low (no more than  than $1/3$.) We show that a designer who puts sufficient weight on the worst case optimally chooses not to give the AI any information and a designer who puts sufficient weight on the best case  allows the  attributes to  be arbitrarily informative. For intermediate weights, the designer selects covariates that identify patients who definitely need treatment but pool together the remaining patients, so that evidence can be decisive for treatment but never against it.

\section{Related Literature}
\textbf{Empirical Motivation.
} In the last few years, the potential of AI to accelerate scientific discovery has become an active topic of discussion \citep{Patel2025AI2027,economist2023revolutionise_science}. AI systems have already made progress towards predicting the structure of proteins \citep{jumper2021alphafold,abramson2024alphafold3} and generating novel mathematical hypotheses \citep{davies2021advancing,romera2024funsearch}. \citet{Moor2023GMAI} suggest that AI will be capable of providing medical guidance across a diverse range of settings with very little task-specific data.

At the same time, the issue of misaligned AI has moved from a theoretical discussion of hypothetical problems to  a discussion of real-world examples of misalignment and how to prevent or reduce it. For example. \citet{greenblatt2024alignment} reports a large language model selectively complying with its training objective in training to prevent modification of its behavior out of training, \citet{park2024ai} reports that Chat GPT-4 deceived a human into solving an ``I’m not a robot'' task for it, and \citet{baker2025monitoring} reports several examples of ``chain of thought'' reasoning models that engage in \emph{reward hacking}, with the ``intent'' to subvert the task at hand.\footnote{The main focus of \citet{baker2025monitoring} is not documenting cases of reward hacking but rather training computer systems to detect it.} Moreover,  the consequences of misalignment seem likely to worsen as a AI models get more sophisticated \citep{dung2023current}. These findings suggest a need for theoretical frameworks studying the alignment problem and the tradeoffs between potential benefits and harms of AI.

\textbf{Methodology.}
Our model can be viewed a game with three players, namely the designer, the aligned AI, and the misaligned AI. The two types of AI can both send cheap-talk messages to the designer, as in \citet{krishna2001model}, but the game differs from standard cheap-talk models in several ways. First, the receiver (the designer) gets to pick the language or message space the receivers use. Second, while the messages themselves are not verifiable, they are subject to some consistency constraints: if the designer knows ex-ante that $1/3$ of the population should be treated, then the AI's advice can change \emph{which} patients are treated and even how many of them as long as it's consistent with $1/3$ of the population needing treatment.

Similarly, while some of our  arguments resemble arguments from the Bayesian persuasion literature following \citet{kamenica2011bayesian}, our model is quite different:  the senders lack commitment power, cannot provide verifiable signals, and can outright lie about the conditional need for treatment at some covariates. Within this literature, our model is especially related to the ``credible persuasion'' problem of \citet{lin2024credible}.  In \citet{lin2024credible}, the sender announces one information structure, and can deviate to any other information structure that leaves the message distribution unchanged; the announcement is said to be credible if the sender does not wish to make one of these unobservable deviations. The paper supposes there is a single type of sender, and shows that the sender's incentives must have the same modularity as the receiver's (with respect to the action and the state)  for credible persuasion to be possible.
In contrast, our model has  two types of senders, one of which is perfectly aligned with the receiver and one whose preferences are the exact opposite. In addition, while our aligned AI optimally chooses a credible information structure, our misaligned AI instead chooses a \emph{non-credible} information structure that maintains the same message distribution. 

Choosing a restricted set of covariates imposes constraints on the agent's feasible actions, as   in the delegation literature. In this sense, our model is somewhat related to work such as \citet{amador2006commitment}, where the period-$0$ self commits to second-period consumption restrictions even though this limits their ability to respond to taste shocks. Our model differs in in two main ways:  the designer's only instrument is the choice of covariate to let the AI use, and the AI does makes recommendations as opposed to decisions, so the recommendations must be incentive   compatible.

\textbf{Algorithmic Oversight.}  At a high level, our question of how rich a set of attributes to let the AI use is related to \citet{athey2020allocation}'s question of whether to delegate authority to a human or an AI, because ``delegating to a human'' is equivalent to not letting the AI use any attributes and basing the decision solely on the human's information. It is also vaguely related to the literature  on designing algorithms that reflect concerns for fairness and/or privacy, as here too the designer may be willing to sacrifice precision or accuracy by restricting the allowed covariates or adding noise to them.  See e.g. \citet{dwork2014algorithmic}, \citet{he2021private}, \citet{strack2024privacy}, and  \citet{liang2024algorithmdesignfairnessaccuracyfrontier}. 

\citet{yang2024explaining}  considers how to structure communication from an AI to a designer in a very different setting:  the designer is potentially unable to comprehend the true model, so the AI gives them a simpler explanation of it. The AI is known to be perfectly aligned with the designer, and chooses an explanation model that allows the designer to improve their 
worst-case payoff across the models that are consistent with any given explanation.

Economic theory has only recently started to study the issues posed by AI misalignment.   \citet{chen2024imperfect} studies delegation to a possibly misaligned AI. It proposes putting the AI in testing environments without revealing whether the task being performed is real or part of
a test. When the AI has imperfect recall, and the principal can conduct sufficiently many tests,  the principal attains the first best via screening (misaligned types eventually slip) and disciplining (they behave well to avoid detection). Our paper is complementary: rather than detecting misalignment through tests, we study how to structure information and discretion to bound the risks from a misaligned AI.

\vspace{-1em}
\section{Model} \label{sec:Model}
\subsection{Environment
} \label{sec:Environment} Let $\mathcal{Y} = \{0,1\}$ be a binary set of types and $\mathcal{A}=\{0,1\}$ be a binary set of actions. We  interpret $Y=1$ as meaning a treatment is effective and $A=1$ as a decision to treat, although the model applies more broadly. There is a human designer (they) and an AI agent (it).  The designer's payoff function is
\[u(a,y) = \left\{\begin{array}{cc}
1 & \mbox{if } (a,y) = (1,1) \\
0 & \mbox{if } a=0 \\
-1 & \mbox{if } (a,y) = (1,0)
\end{array}\right.\]
Thus action $a=0$ is ``safe'' while the payoff to action $a=1$ depends on the true type.

The designer's prior information takes the form of a tuple $I = (\mathcal{X}_0,\mu_0,(p_{x_0})_{x_0 \in \mathcal{X}_0})$, where $\mathcal{X}_0$ is a finite set of possible covariate values,
$\mu_0$ is the distribution over this set, and each $p_{x_0}$ is the conditional probability of $y=1$ at covariate $x_0$ (i.e., the fraction of patients with covariate value $x_0$ who need treatment). We will use $X_0 \sim \mu_0$ to denote the corresponding random variable. 

One can think of $I$ as an idealized data set that reports the treatment success rate for every covariate value in $\mathcal{X}_0$.
 For example, if $X_0$ describes only  the patient's age group, then $\mu_0$ is the population distribution over age groups, and $p_{x_0}$ is the   probability of treatment success for patients in age group $x_0$. We will sometimes refer to the patients with covariate value $X_0=x_0$ as the \emph{subgroup} $x_0$.

\subsection{Delegation to AI} \label{sec:Delegation} The designer considers using an AI agent to guide their decisions. Specifically, the designer augments $X_0$ with an auxiliary covariate $X_1$ that could help predict $Y$, while remaining uncertain about whether they do or in what way. The designer  then asks the AI to report a joint distribution $P$ for $(X_0,X_1,Y)$ and implements the decision rule that maximizes their  expected utility under $P$, i.e.,
\begin{equation} \label{eq:aStar}
a(x_0,x_1;P) = \left\{\begin{array}{cc}
1 & \mbox{if } \mathbb{E}_{P}(Y\mid (X_0,X_1)=(x_0,x_1)) \geq \frac{1}{2} \\
0 & \mbox{otherwise}
\end{array}\right.
\end{equation}
where we break ties in favor of treatment without loss (see Appendix \ref{app:TieBreak}).

We suppose that the designer has full and flexible design of the joint distribution of covariates, so that  $(X_0,X_1)$ can follow  any finitely-supported  joint distribution $\mu$ subject to the  constraint that $\marg_{\mathcal{X}_0}\mu = \mu_0.$ In this case we say that $\mu$ \emph{extends} $\mu_{0}.$ Our assumption is similar to the assumption in information design \citep{kamenica2011bayesian,BergemannMorris2019} that the sender can flexibly design an information structure. However, here we assume that the designer only chooses   the joint distribution of the covariates, and not how they relate to $Y$. 

\begin{example}
The decision is whether to recommend a cardiac ablation, and $X_0$ describes the patient's age bracket. The designer chooses additional attributes $X_1$ to give to the AI, which might include anything from the time pattern of the patient's arrythmias to the patient's dietary habits or recent travel. Our assumption of flexible design of $X_1$ means that the space of conceivable attributes is sufficiently rich that the designer can identify attributes with any desired relationship with $X_0$---for example, an attribute that is common among young patients and rare among older patients (such as a TikTok account).\footnote{Note that the designer can also construct new variables as garblings of existing ones.} The designer, however, does not know how these additional attributes correlate with need of treatment.
    
\end{example}

\subsection{The Designer's Ambiguity Set}  The designer knows the distribution of covariates, $(X_0,X_1) \sim \mu$, and also knows the conditional distribution of types given $X_0$,
but does not know the full joint distribution  $P^*$ of $(X_0,X_1,Y)$. We say that a distribution $P \in \Delta(\mathcal{X}_0 \times \mathcal{X}_1 \times \mathcal{Y})$ is \emph{admissible} if it is consistent with the designer's information. The set of all admissible distributions $P$ reflects the designer's uncertainty about just what the true distribution is.

\begin{definition} The joint distribution $P$ is \emph{admissible} if $P \in \Delta(\mathcal{X}_0 \times \mathcal{X}_1 \times \mathcal{Y})$ for some finite set $\mathcal{X}_1$ and:
\begin{enumerate}
    \item $\text{marg}_{\mathcal{X}_0 \times \mathcal{X}_1} P = \mu$
    \item $P(Y=1 \mid X_0=x_0) = p_{x_0}$ for every $x_0 \in \mathcal{X}_0$
\end{enumerate}
The designer's \emph{ambiguity set} $\mathcal{P}(I,\mu)$ is the set of all admissible distributions. \label{def:AmbiguitySet}
\end{definition}

This set is identical to the set of permitted distributions in \citet{lin2024credible}.\footnote{\citet{lin2024credible} considers a set of states $\Theta$, a set of messages $M,$ and an  information structure $\lambda \in \Delta(\Theta \times M)$, and  defines $D(\lambda) := \{\lambda' \in \Delta(\Theta \times M) :\lambda_\Theta' = \lambda_\Theta, \lambda'_M = \lambda_M \}$ to be the set of information structures that cannot be distinguished from $\lambda$ given the marginal distribution over states or messages. Our Definition \ref{def:AmbiguitySet} constructs the same set for the conditional distribution over $\mathcal{Y}$  and $\mathcal{X}_1$  given each $x_0 \in \mathcal{X}_0$. }

\subsection{Best and Worst Payoffs} \label{sec:Pareto}

 The AI is one of two possible types: it is either ``aligned'' and wants to help the designer, or ``misaligned'' and wants to hurt them. Both types of AI know the true $P^*$.\footnote{The AI that we consider doesn't necessarily rely on data to learn $P^*$; for example, this AI may have a deduced a theory of how $Y$ is determined. Nevertheless, one might think that learning $P^*$ is harder when $\mathcal{X}_1$ is richer. For most  of our results it will suffice to let $\# \mathcal{X}_{1}=2^{\# \mathcal{X}_0}$.}
 The aligned AI will always report the true $P^*$. The misaligned AI may not, but to avoid being detected,  its reported $P$ must  still be admissible.

Define
\[U(P,P^*) = \mathbb{E}_{P^*}[u(Y,a(X,P))]\]
to be the designer's expected payoff when he takes the optimal action under $P$ but the true distribution is $P^*$. The designer does not know the true distribution $P^*$, and also does not know whether the AI agent wants to help or hurt them.

To obtain clean characterizations, we focus on two extremes
\begin{align*} 
\underline{v}_I(\mu) & = \inf_{P^*\in\mathcal{P}(I,\mu)} \inf_{P\in\mathcal{P}(I,\mu)}
                 U(P,P^*) \\
\overline{v}_I(\mu) & = \displaystyle\sup_{P^*\in\mathcal{P}(I,\mu)} \sup_{P\in\mathcal{P}(I,\mu)}
                 U(P,P^*)
\end{align*}
which we call the \emph{worst-case} and \emph{best-case} payoffs. The payoff $\underline{v}_I(\mu)$ corresponds to pessimism both about Nature (who picks $P^*$) and an omniscient, misaligned AI  (who picks $P$ given knowledge of $P^*$). The payoff $\overline{v}_I(\mu)$ is analogously optimistic about both the AI and Nature when evaluating the best possible payoff. Our definition of $\overline{v}_I(\mu)$ above maintains symmetry with $\underline{v}_I(\mu)$, but it is equivalent to consider
\[\overline{v}_I(\mu) = \displaystyle\sup_{P^*\in\mathcal{P}(I,\mu)}
                 U(P^*,P^*)\]
                 where we directly assume that the aligned AI reports the true distribution $P^*$.
In Appendix \ref{app:AlternativePref}
 we consider an alternative specification of the best-case payoff motivated by  \citet{hurw1951}, in which the designer maximizes a weighted sum of the infimum and supremum over $P^*$.\footnote{Our analysis is  for one part of the parameter space.}

Because the designer does not know whether the AI is aligned or misaligned, they cannot tailor the choice of $X_1$ to the AI's alignment. Instead, they  must trade off the consequences of giving $X_1$ to the AI when it is used to help or harm the designer. Each choice of $(X_0,X_1)\sim\mu$ yields a payoff pair $(\underline{v}_I(\mu),\overline{v}_I(\mu))$ that  describes the designer's expected payoff in the worst and best case, respectively. Our main results  characterize the efficient frontier of payoff pairs $(\underline{v}_I(\mu),\overline{v}_I(\mu))$ that are generated by choices of $\mu$ and their randomizations.

\begin{definition}[Efficient Frontier] Let 
$C(I) = \mbox{conv}\{(\underline{v}_I(\mu), \overline{v}_I(\mu)) : \mu \mbox{ extends } \mu_0\}$
be the convex hull of feasible best-case and worst-case outcomes. The \emph{efficient frontier} given $I$ is
\[F(I) = \mbox{cl}\{(\underline{v},\overline{v}) \in C(I): \nexists  (\underline{v}',\overline{v}') \in C(I) \mbox{ s.t. }  (\underline{v}',\overline{v}')  \succ  (\underline{v},\overline{v}) \}\]
where $\succ$ 
denotes the usual dominance order, i.e., $(\underline{v}',\overline{v}') \succ (\underline{v},\overline{v})$ if $\overline{v}' \geq \overline{v}$ and $\underline{v}' \geq \underline{v}$ with at least one inequality strict. The frontier is the closure of the set of points that are undominated in the feasible set.

\end{definition}

The feasible set is convex, and the efficient frontier is part of its boundary. The extreme points of the efficient frontier are thus optimal for  designers whose preferences are a weighted sum of worst-case and best-case payoffs, i.e.
\[\alpha \underline{v}_I(\mu) + (1-\alpha) \overline{v}_I(\mu)\]
for $\alpha \in [0,1]$. One interpretation is that the designer is an expected utility maximizer, and $\alpha$ and $1- \alpha$ are the probabilities  the AI is misaligned or aligned. We will characterize these points and the covariates $X_1\sim \mu$ that implement them.

\begin{definition}[Implementation]
Say that $X_1 \sim \mu$ \emph{implements} $(\underline{v}_I,\overline{v})$ if
\[(\underline{v}_I(\mu),\overline{v}_I(\mu)) = (\underline{v}_I,\overline{v}).\]
Say that $X_1 \sim \mu$ \emph{limit-implements} $(\underline{v}_I,\overline{v})$ if there exists a sequence $\mu_n \rightarrow \mu$ such that
\[\lim_{n \rightarrow \infty} (\underline{v}_I(\mu_n),\overline{v}_I(\mu_n)) = (\underline{v}_I,\overline{v}).\]

\end{definition}

Our definition of implementation does not require unique implementation; that is, there may be multiple $X_1 \sim \mu$ that implement a given $(\underline{v}_I,\overline{v})$.

\section{Benchmark Case: Reasonably Informative Prior} \label{sec:Benchmark}

We first characterize the efficient frontier when the prior probability $p_{x_0}$ is either small or large---specifically, $p_{x_0}<1/3$ or $p_{x_0}>2/3$ in every subgroup $x_0$---so that the initial covariates $x_0$ are reasonably good  predictors of $y$. We show that the efficient frontier and optimal choices of $\mu$ take a particularly simple form, and can be characterized precisely in terms of three benchmark payoffs.

\begin{definition}[Default Action] Define
  \[d(p) = \left\{\begin{array}{cc}
  0 & \mbox{ if } p <1/2 \\
  2p-1 & \mbox{ if } p \geq 1/2.
  \end{array}\right.\] 
  This is the expected payoff when a fraction $p$ of the population needs treatment and the designer simply takes the best default action under the prior,  which is to treat everyone if $p\geq 1/2$ and otherwise not to treat anyone.
\end{definition}

\begin{definition}[Best Targeting] \label{def:BestSelection} Define
\[b(p,q) = \left\{\begin{array}{cc}
q & \mbox{ if } p \geq q \\
2p-q & \mbox{ if } p < q \end{array} \right.\]
 This is the expected payoff when  a fraction $p$ of the population needs treatment, and the designer treats the fraction $q$ of patients with (weakly) highest treatment need. If $q\leq p$ then all treated patients need treatment, while if $q>p$ then $\frac{p}{q}$ of these patients need treatment, leading to the payoff $q\left(2\cdot\frac{p}{q}-1\right) = 2p-q .$ The special case $b(p,p)$ corresponds to the \emph{full information payoff} that the designer receives by taking action $1$ if and only if $Y=1$. 
 \end{definition}

\begin{definition}[Worst Targeting] \label{def:WorstSelection} Define
 \[w(p,q) = \left\{\begin{array}{cc}
 -q & \mbox{ if } p \leq 1-q \\
 2p+q-2 & \mbox{ if } p > 1-q.
 \end{array} \right.\]
 This is the expected payoff when  a fraction $p$ of the population needs treatment and the designer selects the worst $q$-fraction of patients to treat (see Figure \ref{fig:countermonotonic} for a depiction when $q=p$). When $p+q<1$, none of the treated patients need treatment, yielding payoff $-q$. When $p+q>1$, then $\frac{p+q-1}{q}$ of the treated patients should be treated and    $\frac{1-p}{q}$   should not be, yielding payoff  $(p+1-q)-(1-p)=2p+q-2$.  
\end{definition}

If we view treatment $A$ and treatment need $Y$ as random variables, then the best selection payoff corresponds to $A$ and $Y$ being as positively correlated as possible (co-monotone), and the worst selection payoff corresponds to these variables being as negatively correlated as possible (counter-monotone). 

\begin{figure}[h]
\begin{center}
    \includegraphics[scale=0.25]{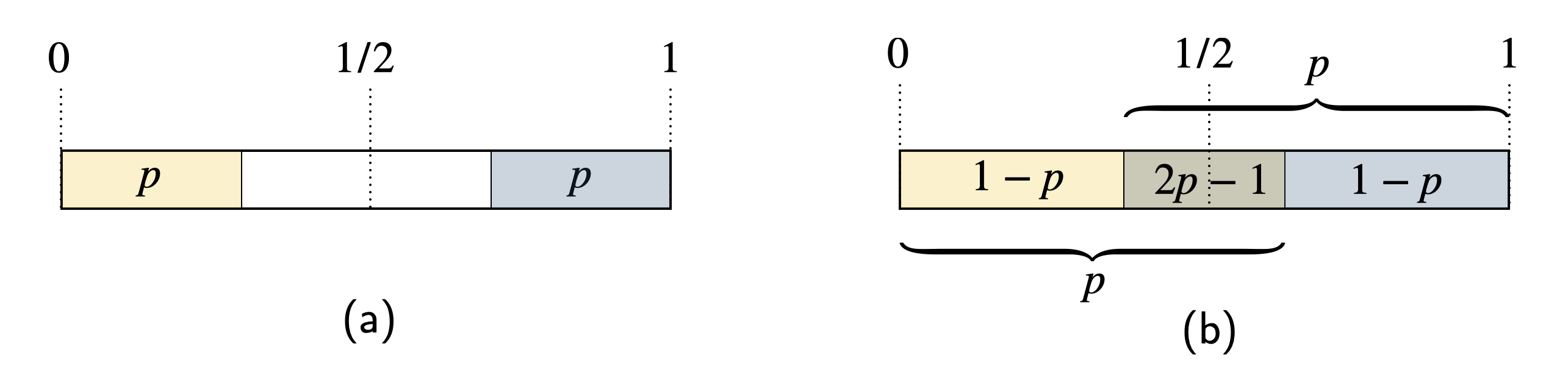}
    \caption{\small{ Yellow cells indicate patients who need treatment and blue cells indicate patients who are treated. \emph{Panel (a):} When $q=p\leq 1/2$, in the worst case the set of patients who are treated is  disjoint from the set of patients who are not treated; \emph{Panel (b):} When $q=p>1/2$, the sets of  patients who need treatment and who are treated under the AI's proposed rule must overlap. The worst-case corresponds to this overlap being as small as possible, in which case $2p-1$ patients are correctly treated and $1-p$ are incorrectly treated.}} \label{fig:countermonotonic}
    \end{center}
\end{figure}

These benchmarks define the endpoints of the efficient frontier.

\begin{proposition} \label{thm:Pareto} Suppose $p_{x_0} \notin \left[\frac13,\frac23\right]$ for every $x_0$. Then the efficient frontier is the line segment of slope $-1$ that connects the \emph{trust point} 
\[\mathsf{T}=\left(\sum_{x_0 \in \mathcal{X}_0} \mu_{x_0} \cdot w(p_{x_0},p_{x_0}), \sum_{x_0 \in \mathcal{X}_0} \mu_{x_0}\cdot  b(p_{x_0},p_{x_0})\right)\]  to the  \emph{distrust point}
\[\mathsf{D}=\left(\sum_{x_0 \in \mathcal{X}_0} \mu_{x_0} \cdot  d(p_{x_0}),\sum_{x_0 \in \mathcal{X}_0} \mu_{x_0}\cdot  d(p_{x_0})\right)\]
where $\mu_{x_0} := \mu_0(x_0)$ is the prior probability of $x_0$.
\end{proposition}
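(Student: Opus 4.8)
The plan is to exploit the fact that both payoff functionals decompose additively across the subgroups $x_0$. Because the ambiguity-set constraints in Definition \ref{def:AmbiguitySet} bind separately within each $x_0$ (the marginal of $\mu$ on that subgroup's cells and the conditional need $p_{x_0}$), and because $u$ and the decision rule \eqref{eq:aStar} act cell by cell, one has $\overline{v}_I(\mu)=\sum_{x_0}\mu_{x_0}\,\overline{v}^{x_0}$ and $\underline{v}_I(\mu)=\sum_{x_0}\mu_{x_0}\,\underline{v}^{x_0}$, where $\overline{v}^{x_0},\underline{v}^{x_0}$ are the best/worst payoffs of subgroup $x_0$ (renormalized to unit mass) under the conditional split $\mu(\cdot\mid x_0)$. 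So I would first characterize, for a single subgroup with need $p:=p_{x_0}\notin[\tfrac13,\tfrac23]$, the frontier of achievable pairs $(\underline{v}^{x_0},\overline{v}^{x_0})$, and then recover $F(I)$ as the $\mu_{x_0}$-weighted Minkowski sum of these subgroup frontiers; since a Minkowski sum of parallel slope-$-1$ segments is again a slope-$-1$ segment, summing the subgroup trust/distrust endpoints yields exactly $\mathsf{T}$ and $\mathsf{D}$.

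Within a subgroup, fix a split into cells of masses $(m_i)$. For the best case I use the reduction $\overline{v}^{x_0}=\sup_{P^*}U(P^*,P^*)$: a benevolent Nature concentrates need on a treated set of mass $t$, giving payoff $2\min(t,p)-t$, so $\overline{v}^{x_0}=p-\min_t|t-p|\le b(p,p)=p$, with equality under the two-cell split $(p,1-p)$. For the worst case, an inducible treated set of mass $s$ yields, after Nature minimizes the true need there, exactly the value $w(p,s)$ of Definition \ref{def:WorstSelection}; inducibility forces $s\in[\max(0,2p-1),\min(1,2p)]$. A first easy bound is $\underline{v}^{x_0}\le d(p)$: the misaligned AI can always induce the \emph{default} treated set (everyone if $p\ge\tfrac12$, no one otherwise), which is inducible and gives payoff exactly $d(p)$ regardless of Nature. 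This already pins the endpoints: the two-cell split $(p,1-p)$ simultaneously attains $\overline{v}^{x_0}=p=b(p,p)$ and, because $1-p$ falls \emph{outside} the inducible window precisely when $p\notin[\tfrac13,\tfrac23]$, worst case $w(p,p)$, producing the trust point; the trivial one-cell split produces the distrust point.

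The hard part is the coupling inequality $\underline{v}^{x_0}+\overline{v}^{x_0}\le 2d(p)$, which says every achievable pair lies weakly below the slope-$-1$ line through the two endpoints (note the identity $w(p,p)+b(p,p)=2d(p)$ for $p\notin[\tfrac13,\tfrac23]$, which is what makes $\mathsf{T},\mathsf{D}$ collinear of slope $-1$). Here the hypothesis is essential and it enters through the inducible window. Write $\overline{v}^{x_0}=p-|t^*-p|$, where $t^*$ is the achievable cell-mass closest to $p$; since $t=0$ and $t=1$ are always achievable, $t^*$ lies in the window $[\max(0,2p-1),\min(1,2p)]$, hence is itself inducible. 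The key idea is that the \emph{same} set Nature treats in the best case can be weaponized by the misaligned AI. When $p<\tfrac13$ the window is $[0,2p]$ and $w(p,\cdot)=-s$ there, so the AI treating the best-case set gives $\underline{v}^{x_0}\le -t^*$; since the tent formula also gives $\overline{v}^{x_0}\le t^*$, we get $\underline{v}^{x_0}+\overline{v}^{x_0}\le 0=2d(p)$. When $p>\tfrac23$ the window is $[2p-1,1]$ and $w(p,\cdot)=2p+s-2$ there is increasing, so the AI does at least as well by treating the best-case set: $\underline{v}^{x_0}\le 2p+t^*-2$, and adding $\overline{v}^{x_0}=p-|t^*-p|$ gives $\le 4p-2=2d(p)$ in both sub-cases $t^*\lessgtr p$. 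I expect these two window computations — that $t^*$ is always inducible while the dangerous complementary mass $1-p$ never is — to be the delicate step, and it is exactly where $p\notin[\tfrac13,\tfrac23]$ is used.

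Finally I would assemble the pieces. The three inequalities $\overline{v}^{x_0}\le p$, $\underline{v}^{x_0}\le d(p)$, and $\underline{v}^{x_0}+\overline{v}^{x_0}\le 2d(p)$ confine the subgroup feasible set to a region whose northeast boundary is exactly the segment from $(w(p,p),b(p,p))$ to $(d(p),d(p))$: the cap $\overline{v}^{x_0}=p$ meets the line at the trust point and the cap $\underline{v}^{x_0}=d(p)$ meets it at the distrust point. Both endpoints are achievable and the convex hull contains the connecting segment, so the subgroup frontier equals that segment. Taking the $\mu_{x_0}$-weighted Minkowski sum over $x_0$ then yields $F(I)$ as the slope-$-1$ segment from $\mathsf{T}$ to $\mathsf{D}$. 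Boundary cases where an extremal treated set is only inducible in the limit (owing to the tie-breaking convention in \eqref{eq:aStar}) are absorbed by the closure in the definition of $F(I)$.
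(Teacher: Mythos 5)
Your proposal is correct, and it shares the paper's overall architecture: reduce to a single subgroup, parametrize outcomes by the inducible treated mass $q$ with the incentive window ($[0,2p]$ for $p<\tfrac12$, $(2p-1,1]$ for $p>\tfrac12$, the paper's $BP(p)$), use the best/worst payoff formulas in $q$ (the paper's $\phi_B,\phi_W$), and recover the general case as the $\mu_{x_0}$-weighted Minkowski sum of subgroup frontiers. Where you genuinely depart is the last step of the subgroup characterization. The paper first reduces to two-point supports $\mu_q=(q,1-q)$ --- arguing that replacing $\mu$ by the binary covariate matching the aligned AI's choice leaves the best case unchanged and can only constrain the misaligned AI --- and then explicitly enumerates the implemented points as $q$ ranges over $[0,\tfrac12]$ and checks domination. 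You instead prove three supporting inequalities, $\overline{v}^{x_0}\le b(p,p)$, $\underline{v}^{x_0}\le d(p)$, and the coupling inequality $\underline{v}^{x_0}+\overline{v}^{x_0}\le 2d(p)$, whose undominated boundary is exactly the claimed segment, and verify that the two endpoints are attained. The coupling inequality rests on the same observation that powers the paper's two-point reduction (whatever treated mass the aligned AI induces is also available to the misaligned AI), but packaging it as a hyperplane bound avoids the enumeration and makes transparent where $p\notin\left[\tfrac13,\tfrac23\right]$ enters: it keeps $t^*$ in the region where $w(p,\cdot)$ has the right slope and keeps the dangerous complementary mass $1-p$ outside the window. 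One point to tighten: for $p>\tfrac23$ the window is open at $2p-1$, so if the achievable mass closest to $p$ equals $2p-1$ exactly it is not itself inducible; this is harmless because $t^*=1$ is equidistant from $p$ and always inducible (or, as you note, because the frontier is defined via a closure), but it deserves a sentence rather than a wave at tie-breaking. This is a presentational gap, not a logical one.
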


This frontier is depicted in Figure \ref{fig:ParetoLine}. Moving along the  frontier  towards the trust endpoint improves the best-case payoff by exactly the amount it lowers the worst-case payoff.

\begin{figure}[h]
\begin{center}
    \includegraphics[scale=0.32]{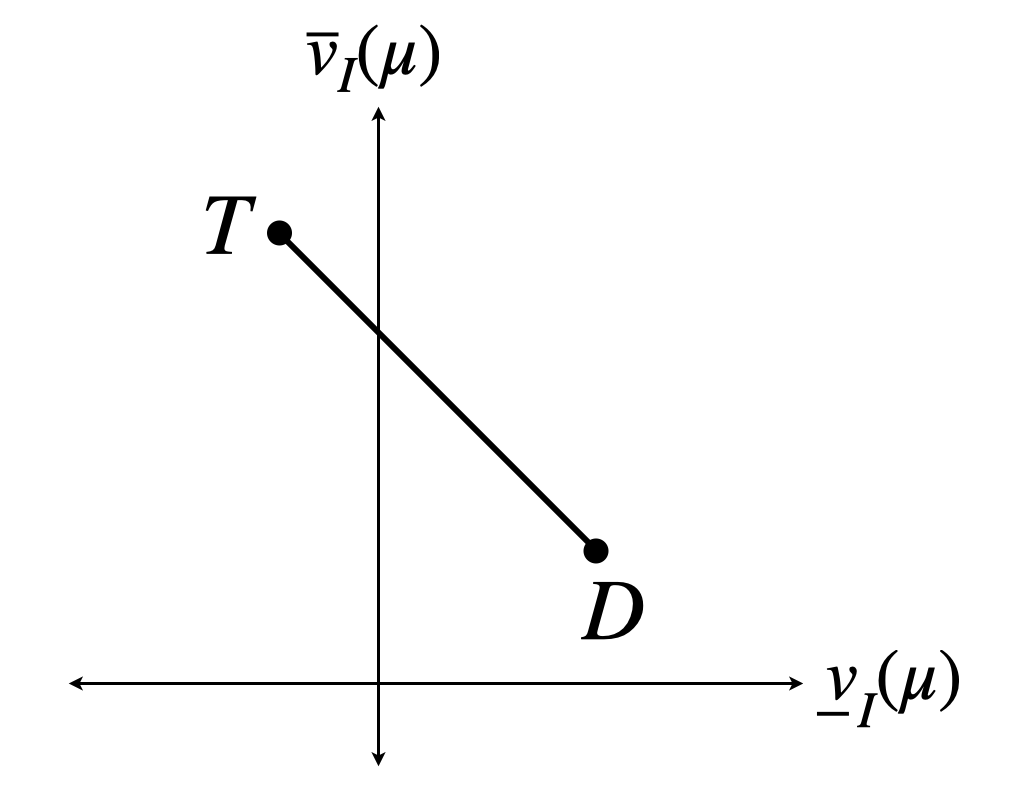}
    \caption{When $p_{x_0} \notin \left[\frac13,\frac23\right]$ for every $x_0$, then the efficient frontier is a line segment with slope $-1$.} \label{fig:ParetoLine}
    \end{center}
\end{figure}

The following proposition further explains how the trust and distrust points are implemented. 

\begin{proposition}\label{thm:support}  If  $p_{x_0} \notin \left[\frac13,\frac23\right]$ for every $x_0$, then:
\begin{enumerate}[(i)]
\item  If $ \alpha > 1/2 $, the distrust point $\mathsf{D}$ is optimal for the designer, and is implemented by any constant $X_1$. 
\item If $\alpha<1/2$, the trust point $\mathsf{T}$ is optimal for the designer, and is implemented by a Bernoulli random variable $X_1$ satisfying 
\[P(X_1 = 1 \mid X_0 = x_0) = p_{x_0} \quad \forall x_0 \in \mathcal{X}_0.\] 
\item If $\alpha=1/2$, any probability distribution over the solutions in (i) and (ii) is optimal.
\end{enumerate}

\end{proposition}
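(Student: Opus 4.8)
The plan is to read the set of optimizers off Proposition \ref{thm:Pareto} and then verify directly that the two named families of covariates implement the two endpoints of the frontier. Since the designer maximizes the linear objective $\alpha\underline v_I(\mu)+(1-\alpha)\overline v_I(\mu)$ with nonnegative weights over the convex set $C(I)$, its maximum is attained on the undominated boundary of $C(I)$, which Proposition \ref{thm:Pareto} identifies as the segment of slope $-1$ from $\mathsf{D}$ to $\mathsf{T}$. So the entire optimality question reduces to maximizing a linear function over this single segment.

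First I would parametrize the segment as $(1-t)\mathsf{D}+t\mathsf{T}$, $t\in[0,1]$. The slope $-1$ property — equivalently the identity $w(p,p)+b(p,p)=2d(p)$, which one checks separately for $p<1/2$ and $p>1/2$ — means that moving from $\mathsf{D}$ toward $\mathsf{T}$ raises $\overline v$ by some fixed $\delta\ge0$ while lowering $\underline v$ by the same $\delta$. Hence the objective along the segment equals its value at $\mathsf{D}$ plus $(1-2\alpha)\,\delta\,t$. This is strictly increasing in $t$ when $\alpha<1/2$, giving $\mathsf{T}$ as the unique optimum; strictly decreasing when $\alpha>1/2$, giving $\mathsf{D}$; and constant when $\alpha=1/2$, so every point of the segment is optimal. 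Because randomizing between an $X_1$ implementing $\mathsf{D}$ and one implementing $\mathsf{T}$ traces out exactly this segment, the $\alpha=1/2$ claim about arbitrary mixtures follows.

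It remains to check implementation. For (i), when $X_1$ is constant, conditioning on $(X_0,X_1)$ coincides with conditioning on $X_0$, so every admissible $P$ has $\mathbb E_P[Y\mid X_0=x_0,X_1]=p_{x_0}$; the designer always plays the default action and both $\underline v_I(\mu)$ and $\overline v_I(\mu)$ equal $\sum_{x_0}\mu_{x_0}d(p_{x_0})$, i.e. the pair is $\mathsf{D}$. For (ii), the Bernoulli choice gives the $X_1=1$ cell of subgroup $x_0$ mass exactly $p_{x_0}$. The best case is immediate: taking $P^*$ with $X_1=Y$ within each subgroup lets the designer treat precisely the needy patients, attaining the per-subgroup maximum $b(p_{x_0},p_{x_0})=p_{x_0}$, so $\overline v_I(\mu)=\sum_{x_0}\mu_{x_0}b(p_{x_0},p_{x_0})$.

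The main obstacle is the worst-case value in (ii), and this is exactly where the hypothesis $p_{x_0}\notin[\frac13,\frac23]$ enters. I would first note that the problem decomposes across subgroups and that, within a subgroup, the designer's induced rule is determined cell by cell, leaving only four candidates: treat neither cell, treat the $X_1=1$ cell only, treat the $X_1=0$ cell only, and treat both cells. The crucial step is that the misaligned AI cannot induce ``treat the $X_1=0$ cell only'' — which would otherwise be the most harmful rule for the designer — because this requires the reported conditional on the low-mass cell to reach $1/2$ while the high-mass cell stays strictly below it, and the constraint $P(Y=1\mid x_0)=p_{x_0}$ makes this infeasible precisely when $p_{x_0}<1/3$ or $p_{x_0}>2/3$ (the thresholds $1/3$ and $2/3$ are the exact boundary of feasibility). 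Of the remaining inducible rules, pairing ``treat the $X_1=1$ cell'' with a Nature that pushes $Y=1$ mass out of that cell as far as the fixed cell masses permit yields exactly $w(p_{x_0},p_{x_0})$ — equal to $-p_{x_0}$ when $p_{x_0}<1/2$ and $3p_{x_0}-2$ when $p_{x_0}>1/2$ — and a short check (using that ``treat neither'' is inducible only when $p_{x_0}<1/2$ and ``treat both'' only when $p_{x_0}>1/2$) shows no inducible rule gives a strictly lower payoff. Summing these per-subgroup worst values against $\mu_{x_0}$ gives $\underline v_I(\mu)=\sum_{x_0}\mu_{x_0}w(p_{x_0},p_{x_0})$, so $(\underline v_I(\mu),\overline v_I(\mu))=\mathsf{T}$, completing the argument.
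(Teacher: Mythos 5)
Your proposal is correct and follows essentially the same route as the paper: optimality is read off by comparing the slope $-\tfrac{\alpha}{1-\alpha}$ of the designer's indifference lines to the slope $-1$ of the segment from Proposition \ref{thm:Pareto}, and implementation is verified subgroup by subgroup using exactly the paper's feasibility analysis (the Bayes-plausibility constraint rules out inducing treatment on the complementary cell precisely when $p_{x_0}\notin[\tfrac13,\tfrac23]$, which is the content of Lemma \ref{lemm:BP} intersected with the attainable event masses $\{0,p_{x_0},1-p_{x_0},1\}$). The only quibble is a labeling slip: for $p_{x_0}<\tfrac13$ the infeasible deviation requires the conditional on the \emph{high}-mass cell $\{X_1=0\}$ to reach $\tfrac12$, not the low-mass cell, but this does not affect the substance.
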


The distrust  point $\mathsf{D}$ corresponds to shutting the AI out: the additional covariate $X_1$ is fixed to a single value, so there is no scope for it to be used either for gain or harm. In this case, the designer's payoff is exactly what they would obtain from their prior information alone. The designer optimally implements this point whenever they put more weight on the worst-case payoff than the best-case payoff, i.e., when  $\alpha>1/2$.

At the opposite extreme, the  trust point  $\mathsf{T}$ allows maximum exposure to the AI. Here the designer chooses an $X_1$ that, in the best case, allows an aligned AI to precisely sort patients by whether they do or do not need treatment. The designer's best-case payoff is the full information payoff $b(p_{x_0},p_{x_0})$, achieved by treating only those patients who need treatment. However, such an $X_1$ exposes the designer to manipulation by a misaligned AI that delivers the countermonotonic payoff $w(p_{x_0},p_{x_0})$. The gap between the best-case payoff of $b(p_{x_0},p_{x_0})$ and the worst-case payoff of $w(p_{x_0},p_{x_0})$ reflects  the risk that must be accepted to benefit fully from a well-intentioned AI. The designer optimally implements this point whenever they put more weight on the best-case payoff than the worst-case payoff, i.e. when  $\alpha<1/2$. \\

To illustrate Proposition \ref{thm:Pareto} and gain intuition, consider the case in which $\mathcal{X}_0 =\{x_0\}$ is a singleton and let $p := p_{x_0}<\frac13$.

\begin{corollary} \label{corr:skewp} Suppose $p<\frac{1}{3}$. Then the efficient frontier is the line segment of slope $-1$ that connects the \emph{trust point} $T=(-p,p)$ to the distrust point $D=(0,0)$. This line can be parametrized as
$\{(-q,q) : q \in [0,p]\}$, 
where each point $(-q,q)$ is implemented by $X_1 \sim \mbox{Ber}(q)$. 
\end{corollary}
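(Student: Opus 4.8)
The plan is to obtain the shape of the frontier directly from Proposition \ref{thm:Pareto} and then verify the parametrization and the implementation claim by an explicit two-by-two computation. For the frontier, I would specialize Proposition \ref{thm:Pareto} to the singleton $\mathcal{X}_0 = \{x_0\}$, where $\mu_{x_0} = 1$, so that the trust and distrust points collapse to $\mathsf{T} = (w(p,p), b(p,p))$ and $\mathsf{D} = (d(p), d(p))$. Since $p < 1/3 < 1/2$, the benchmark definitions give $d(p) = 0$, $b(p,p) = p$, and $w(p,p) = -p$ (the latter because $p \le 1 - p$, so the first branch of $w$ applies with $q = p$). Hence $\mathsf{T} = (-p,p)$ and $\mathsf{D} = (0,0)$, and the slope-$-1$ segment joining them is exactly $\{(-q,q) : q \in [0,p]\}$: writing a point as $(1-t)\mathsf{D} + t\mathsf{T} = (-tp, tp)$ and setting $q = tp$ traces the segment as $t$ ranges over $[0,1]$.

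It then remains to show that $X_1 \sim \mathrm{Ber}(q)$ implements $(-q, q)$ for each $q \in [0,p]$. I would first describe the ambiguity set $\mathcal{P}(I,\mu)$ for $\mu$ the law with $X_1 \sim \mathrm{Ber}(q)$: since the margins $P(X_1 = 1) = q$ and $P(Y = 1) = p$ are fixed, every admissible $P$ is pinned down by the single cell $r := P(X_1 = 1, Y = 1)$, and nonnegativity of the four cells forces $r \in [\max(0, p+q-1), \min(p,q)] = [0,q]$, using $p + q - 1 < 0$ and $\min(p,q) = q$ (both from $q \le p < 1/3$). For a report with parameter $r$ the designer's posteriors are $P(Y=1 \mid X_1 = 1) = r/q$ and $P(Y = 1 \mid X_1 = 0) = (p-r)/(1-q)$, and the true distribution $P^*$ is described by its own cell $r^* \in [0,q]$, with the treat-versus-not payoffs on the two subgroups equal to $2r^* - q$ and $2p + q - 1 - 2r^*$ respectively.

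The crux, and the step I expect to be the main obstacle, is to show that the designer is never induced to treat the $X_1 = 0$ subgroup, for any report. The largest posterior attainable there is $p/(1-q)$ (at $r = 0$), and $p/(1-q) < 1/2$ is equivalent to $2p + q < 1$, which holds because $2p + q \le 3p < 1$ when $p < 1/3$; this is precisely where the hypothesis $p < 1/3$ is used. Consequently $a(0, P) = 0$ for every admissible $P$, and the designer's payoff reduces to the single term $U(P, P^*) = \mathbf{1}[a(1,P) = 1]\,(2r^* - q)$. Since the misaligned AI can force $a(1,P) = 1$ (by reporting any $r \ge q/2$, e.g.\ $r = q$) and can also force $a(1,P) = 0$ (by reporting $r = 0$), maximizing this term over $(P, P^*)$ gives $\overline{v}_I(\mu) = q$ (take $a(1,P)=1$ and $r^* = q$), while minimizing gives $\underline{v}_I(\mu) = -q$ (take $a(1,P)=1$ and $r^* = 0$). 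Thus $X_1 \sim \mathrm{Ber}(q)$ implements $(-q,q)$; letting $q$ range over $[0,p]$ sweeps out the whole frontier, with $q = 0$ (a constant $X_1$) giving $\mathsf{D}$ and $q = p$ giving $\mathsf{T}$.
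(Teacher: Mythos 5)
Your proposal is correct. The first half—reading off $\mathsf{T}=(-p,p)$ and $\mathsf{D}=(0,0)$ from Proposition \ref{thm:Pareto} and noting that the segment between them is $\{(-q,q):q\in[0,p]\}$—is exactly how the paper frames the corollary, and invoking the proposition legitimately disposes of the undominatedness half of the claim. Where you diverge is in verifying the implementation of the intermediate points: the paper routes this through its general appendix machinery (Lemma \ref{lemm:Restate}'s restatement in terms of $(q,\pi_1,\pi_0,\pi_1^*,\pi_0^*)$, the characterization of $BP(p)$, the closed forms for $\phi_W$ and $\phi_B$, and the two-point-support reduction), whereas you do a self-contained $2\times 2$ computation, parametrizing the ambiguity set by the single cell $r=P(X_1=1,Y=1)\in[0,q]$ and reducing the payoff to $\mathbf{1}[a(1,P)=1]\,(2r^*-q)$. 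The crux of your argument—that the maximal posterior on $\{X_1=0\}$ is $p/(1-q)<1/2$ because $2p+q\le 3p<1$, so the AI can never induce treatment of the large cell—is the same fact the paper expresses as $1-q\notin BP(p)=[0,2p]$ when $q\le p<1/3$ (equivalently, $\mathcal{Q}(p,\mu_q)=\{0,q\}$ for $q$ in the region $Q_1$), so the mathematical content is identical. What your version buys is transparency and brevity for this special case, at the cost of not generalizing: the paper's $\phi_W/\phi_B$ decomposition is what carries over to $p\in(1/3,2/3)$, where the complement event becomes incentive-compatible and the hedge point appears, and to the $\tau$-constrained extension. One small point worth making explicit if you write this up: your direct computation only covers Bernoulli $\mu$, so you are implicitly relying on Proposition \ref{thm:Pareto} (whose proof contains the two-point-support reduction) to rule out some richer $\mu$ producing a dominating payoff pair; as a proof of a corollary this is fine, but it should be acknowledged rather than left tacit.
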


Intuitively, because the designer knows nothing about how $X_1$ should be used to predict $Y$, they cannot ensure that the AI uses it appropriately. The most they can do is constrain the overall frequency of treatment, since this bounds the potential damage from a harmful AI. 

If the AI is given the covariate $X_1 \sim \mbox{Ber}(q)$ where $q\in [0,p)$, then it can induce treatment for at most $q$ patients. If the AI is aligned, these are $q$ patients who in fact need treatment, and if the AI is misaligned, these are $q$ patients who don't need treatment. Thus the higher a weight the designer places on the worst-case outcome, the lower $q$ should be set.

What matters is not that $X_1$ is binary, but that it disaggregates at most a fraction  $q$ of patients. For example,
suppose a fraction $p=0.3$ of patients need treatment, and the designer is only willing to accept mistreatment of $10\%$ of the patients. Then the designer should choose an $X_1$ that isolates no more that share, either as one cluster of 10\% of patients or as several even rarer clusters that aggregate to at most 10\% of the population. Importantly, the AI  cannot induce treatment for the other 90\% of patients, since this would be inconsistent with the designer's prior knowledge that only 30\% of patients need treatment.

This final argument---that the AI cannot induce mistreatment for the majority of the population---breaks when  $p_{x_0} \in (1/3,2/3)$, as the following example illustrates.

\begin{example} \label{ex:EvenWorse}
    Let $\mathcal{X}_0 = \{x_0\}$ and let $p:=p_{x_0} = 2/5$ lie in the intermediate range $p \in (1/3,2/3)$ where the conditions of Propositions \ref{thm:Pareto} and \ref{thm:support} are not met.  
    
    If those propositions did apply, then a binary $X_1 \sim \mbox{Ber}(2/5)$ would yield the worst-case outcome of $w = -2/5$ (as part of the trust point $T=(-2/5,2/5)$). This corresponds to treating exactly $2/5$ of the population, none of whom need treatment.

    But this countermonotonic payoff is not the worst-case scenario. As shown in the introduction, by misreporting
    \[P(y=1 \mid X_1 =1)=0 \quad \mbox{ and } P(y=1 \mid X_1 = 0) = 2/3\]
    when in fact $X_1$ is perfectly correlated with $Y$, the AI can induce the designer to treat the $3/5$-mass of patients with $X_1=0$. Since none of these patients actually need treatment, the designer's payoff is $-3/5$.

  This worst-case payoff is only possible because $p > 1/3$, which permits the AI to induce treatment on either the event $\{X_1=1\}$ or $\{X_1=0\}$.\footnote{ If instead $p<1/3$,  the AI could not induce the designer to treat at $X_1=0$: this  would require  $P(Y=1 \mid X_1 =0)\geq 1/2$, implying 
    $P(Y=1) \geq \frac12 \cdot P(X_1=0) = \frac12(1-p) > p.$ }  This additional flexibility allows for more effective manipulation.\qed
 \end{example}

\section{General Case} \label{sec:GeneralResults}

We now characterize the efficient frontier and its implementation for arbitrary prior information. Section \ref{sec:Preliminaries} generalizes the definitions of the trust and distrust points  and introduces \emph{hedge points}. Section \ref{sec:Singleton} characterizes the frontier when $\mathcal{X}_0$ is a singleton. Section \ref{sec:FiniteX0} uses this to provide the full characterization. Section \ref{sec:ProofApproach} outlines the proofs of our main results.

\subsection{Preliminaries} \label{sec:Preliminaries} 
For each $x_0 \in \mathcal{X}_0$ define 
$T_{x_0}= (T^1_{x_0},T^2_{x_0})$
where
\begin{equation} \label{eq:GenTrust}
T^1_{x_0} = \left\{ \begin{array}{cc}
w(p_{x_0},p_{x_0}) & \mbox{ if } p_{x_0}\notin \left[\frac13,\frac23\right] \\
w(p_{x_0},1-p_{x_0}) & \mbox{ otherwise}
\end{array}\right. \quad \mbox{ and } \quad T^2_{x_0}= 
b(p_{x_0},p_{x_0}) 
\end{equation}
And for  each $x_0 \in \mathcal{X}_0$ define   
$D_{x_0} = (d(p_{x_0}),d(p_{x_0})).$  
The more general definitions of \emph{trust} and \emph{distrust points} are
\[\mathsf{T}= \sum_{x_0 \in \mathcal{X}_0} \mu_{x_0} \cdot T_{x_0} \quad \text{and}\quad 
\mathsf{D} = \sum_{x_0 \in \mathcal{X}_0} \mu_{x_0} \cdot D_{x_0}.\]
When $p_{x_0} \notin \left[\frac13,\frac23\right]$ for every $x_0$,  these points coincide with the trust and distrust points defined in the previous section. But when $p_{x_0} \in \left[\frac13,\frac23\right]$ for some $x_0$, then the worst-case payoff at that $x_0$ is not $w(p_{x_0},p_{x_0})$, but instead the lower payoff of $w(p_{x_0},1-p_{x_0})$ as in Example \ref{ex:EvenWorse}.

We define \emph{hedge points} as follows. Let $\mathcal{X}^\dag_0 = \{x_0 \in \mathcal{X}_0 :p_{x_0} \in [1/3,2/3]\}$ denote those covariate values $x_0$ for which $p_{x_0}$ violates the conditions of our previous results. For each $x_0 \in \mathcal{X}^\dag_0$ define $H_{x_0} = (H^1_{x_0},H^2_{x_0})$ 
where
\[H^1_{x_0} = \left\{\begin{array}{cc}  w(p,1-2p) & \mbox{ if } p_{x_0} \leq \frac12 \\
w(p,2-2p) & \mbox{ if } p_{x_0} > \frac12
\end{array}\right. \quad H^2_{x_0} = \left\{\begin{array}{cc} b(p,1-2p)  & \mbox{ if } p_{x_0} \leq \frac12 \\
 b(p,2-2p) & \mbox{ if } p_{x_0} > \frac12 \end{array}\right.\]
As we will show, hedge points are optimal choices for designers who put intermediate weights on the worst-case payoff.

We first characterize the frontier when $\mathcal{X}_0$  is a singleton, and subsequently generalize to arbitrary finite $\mathcal{X}_0$.

\subsection{When $\mathcal{X}_0$ is a singleton} \label{sec:Singleton}

\begin{lemma} \label{lemm:p} Suppose $\mathcal{X}_0$ is a singleton.
\begin{itemize}
        \item[(a)]  If $p_{x_0} \notin  \left[\frac13,\frac23\right]$,  then the efficient frontier is the line segment connecting the trust point $T_{x_0}$ to the distrust point $D_{x_0}$.
    \item[(b)] If $p_{x_0} \in  \left(\frac13,\frac23\right)$, then the efficient frontier is the union of  two line segments: one connecting the hedge point $H_{x_0}$ to the distrust point $D_{x_0}$ and one connecting the trust point $T_{x_0}$ to the hedge point $H_{x_0}$.\footnote{In the knife-edge case $p_{x_0} \in  \left\{\frac13,\frac23\right\}$, the efficient frontier is the line segment connecting the hedge point $H_{x_0}$ to the distrust point $D_{x_0}$.}
\end{itemize}
\end{lemma}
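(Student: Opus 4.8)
The plan is to reduce the whole problem to a one-parameter family of Bernoulli covariates. Write $p=p_{x_0}$, and for a candidate $\mu$ with atoms of masses $m_1,\dots,m_k$ let $\mathcal{M}(\mu)=\{\sum_{i\in S}m_i : S\subseteq\{1,\dots,k\}\}$ be the set of achievable subgroup masses; note $\mathcal{M}(\mu)$ always contains $0$ and $1$ and is closed under complements. The first step is to show that both extreme payoffs depend on $\mu$ only through $\mathcal{M}(\mu)$, namely
\[
\overline{v}_I(\mu)=\max_{q\in\mathcal{M}(\mu)} b(p,q),\qquad
\underline{v}_I(\mu)=\min_{q\in\mathcal{M}(\mu)\cap J} w(p,q),\qquad
J:=\bigl[\max(0,2p-1),\,\min(1,2p)\bigr].
\]
For the worst case I would swap the two infima, observe that for a fixed report $P$ the designer's treated set $S$ is pinned down, compute $\inf_{P^*}U(P,P^*)=w(p,\mu(S))$ by pushing as little $Y=1$ mass onto $S$ as admissibility permits, and check that the misaligned AI can make $S$ the treated set precisely when $\mu(S)\in J$. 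For the best case I would solve the aligned AI's fractional-knapsack problem of concentrating the $Y=1$ mass on the treated atoms, which delivers exactly $b(p,\mu(S))$.

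The crux is a domination argument showing that binary covariates already trace out the frontier. Given any $\mu$, let $q^\ast\in\mathcal{M}(\mu)$ attain $\overline{v}_I(\mu)=b(p,q^\ast)$ and consider $X_1\sim\mbox{Ber}(q^\ast)$, whose mass set is exactly $\{0,q^\ast,1-q^\ast,1\}\subseteq\mathcal{M}(\mu)$. Because $q^\ast$ is the global maximizer of $b(p,\cdot)$ over the larger set $\mathcal{M}(\mu)$, the Bernoulli covariate attains the same best case; and because its feasible mass set is contained in that of $\mu$, its worst case is a minimum over fewer terms and is therefore weakly larger. Hence $\mbox{Ber}(q^\ast)$ weakly dominates $\mu$, so the efficient frontier is the undominated boundary of the convex hull of $\{(\underline{v}_I(\mu),\overline{v}_I(\mu)):\mu=\mbox{Ber}(q),\,q\in[0,1]\}$. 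The conceptual point here is that adding atoms can only help the best case and hurt the worst case, and that fixing $q^\ast$ automatically brings the complement $1-q^\ast$ into $\mathcal{M}$.

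It then remains to compute this one-parameter family and take its upper-right boundary. Using the formulas above with $\mathcal{M}=\{0,q,1-q,1\}$, a direct case analysis on the position of $q$ relative to the thresholds $1-2p$, $p$, and $\tfrac12$ (with the symmetric bookkeeping for $p>\tfrac12$) shows that $(\underline{v}_I,\overline{v}_I)=(-q,q)$ exactly while the complement $1-q$ stays AI-infeasible (i.e.\ $q<1-2p$ when $p<\tfrac12$), and that the point drops to a strictly lower parallel line of slope $-1$ once $1-q$ becomes feasible. In part (a), where $p\notin[\tfrac13,\tfrac23]$, the complement is never feasible on the relevant range, so the whole family lies on the single slope-$-1$ line through $D_{x_0}$ and $T_{x_0}$, and the frontier is that segment. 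In part (b) the family populates two parallel lines; taking the convex hull, the boundary runs from $D_{x_0}$ up the higher line to its extreme point, which is exactly the limit point $H_{x_0}$ as $q\uparrow 1-2p$, and then across to $T_{x_0}$. I would close by verifying that $D_{x_0}$ is implemented by constant $X_1$, that $T_{x_0}$ is implemented by $\mbox{Ber}(p)$, that $H_{x_0}$ is limit-implemented, and that the computed coordinates coincide with \eqref{eq:GenTrust} and the definition of $H_{x_0}$.

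The main obstacle I anticipate is the pair of reduction formulas together with the Bernoulli-domination step: one must correctly identify the feasible treated-mass interval $J$, confirm the monotonicity of $\overline{v}_I$ and $\underline{v}_I$ under inclusion of $\mathcal{M}$, and—most importantly—use that fixing the best-case mass $q^\ast$ forces its complement into play, which is exactly what bends the frontier at $H_{x_0}$ when $1-q^\ast$ first becomes AI-feasible (precisely when $p\in[\tfrac13,\tfrac23]$). The subsequent one-parameter computation is routine, but it requires careful separate handling of the regimes $p<\tfrac12$ and $p>\tfrac12$, whose hedge points are defined differently.
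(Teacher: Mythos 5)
Your proposal is correct and follows essentially the same route as the paper's proof in Appendix~\ref{appendix A}: your $\mathcal{M}(\mu)$ and $J$ are the paper's $\mathcal{E}(\mu)$ and $BP(p)$, your Bernoulli-domination step (fixing the best-case mass $q^\ast$ and noting the worst case becomes a minimum over a smaller feasible set) is exactly the paper's two-point-support reduction, and the final case analysis on whether $1-q$ is incentive-feasible reproduces the paper's partition into $Q_1,Q_2,Q_3$ and the kink at $H_{x_0}$. The only cosmetic differences are that the paper keeps the incentive constraint in the best-case maximization as well (harmless to drop, since $b(p,\cdot)\le 0$ outside $BP(p)$ while $0,1\in BP(p)$) and is more careful about the open endpoint of $BP(p)$ at $q=2p-1$ for $p>\tfrac12$, which is why $H_{x_0}$ is only limit-implemented.
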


\begin{lemma} \label{lemm:pImplement} Suppose $\mathcal{X}_0=\{x_0\}$ is a singleton. Then:
 \begin{itemize}
    \item[(a)] The trust point $T_{x_0}$ is implemented by $X_1 \sim \mbox{Ber}(p)$.
    \item[(b)] The hedge point $H_{x_0}$ is limit-implemented by $X_1 \sim \mbox{Ber}(1-2p)$ if $p\leq 1/2$ and by $X_1 \sim \mbox{Ber}(2-2p)$ if $p>1/2$.
    \item[(c)] The distrust point $D_{x_0}$ is implemented by any constant $X_1$.
\end{itemize}
\end{lemma}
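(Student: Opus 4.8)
The plan is to reduce everything to a single scalar bookkeeping problem and then read off the three points. Since $\mathcal{X}_0=\{x_0\}$ is a singleton, the marginal constraint in Definition \ref{def:AmbiguitySet} is vacuous, so an admissible distribution is pinned down by the cell masses $\mu_j=\mu(X_1=j)$ (fixed once we fix $X_1\sim\mu$) together with conditionals $r_j=P(Y=1\mid X_1=j)$ subject only to $r_j\in[0,1]$ and the budget $\sum_j \mu_j r_j = p$. Writing $r_j^*$ for the truth, $r_j$ for the report, and $S(P)=\{j:r_j\ge 1/2\}$ for the treated cells, the per-cell evaluation of $u$ gives the payoff identity $U(P,P^*)=\sum_{j\in S(P)}\mu_j\,(2r_j^*-1)$. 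I would record two consequences as the workhorses. First, for the best case, since $\overline v(\mu)=\sup_{P^*}U(P^*,P^*)=\max\{\sum_j\mu_j(2r_j-1)^+ : \sum_j\mu_j r_j=p\}$, the inequality $2r-1\le r$ yields $\overline v(\mu)\le p$ for every $\mu$, and the maximum is computed by concentrating budget into the cells one treats. Second, for the worst case, I would show $\underline v(\mu)=\min\{\,w(p,\mu(S)) : S \text{ inducible}\,\}$, where a union of cells $S$ is \emph{inducible} as the treated set iff $2p-1<\mu(S)\le 2p$. The value $w(p,\mu(S))$ arises by fixing the treated set $S$ and letting Nature push the $Y=1$ mass out of $S$ as far as admissibility allows; the window arises from admissibility together with tie-breaking (the lower endpoint is open because off $S$ the report must keep conditionals strictly below $1/2$, while the upper endpoint is closed because ties are treated).

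With this machinery, part (c) is immediate: a constant $X_1$ is a single cell, so the only admissible report has $r=p$, both players are powerless, and $\overline v=\underline v=d(p)$, yielding $D_{x_0}$. For part (a) with $X_1\sim\mathrm{Ber}(p)$ the two cells have masses $p$ and $1-p$, so the candidate treated masses are $s\in\{0,p,1-p,1\}$; the best case is $p=b(p,p)=T^2_{x_0}$ by putting the mass-$p$ cell at $r=1$. For the worst case I would minimize $w(p,\cdot)$ — which is V-shaped with minimum at $s=1-p$ — over the inducible $s$, checking three regimes: for $p<1/3$ only $s\in\{0,p\}$ are inducible and the minimum is $w(p,p)=-p$; for $p\in(1/3,2/3)$ the minimizer $s=1-p$ is itself inducible, giving $w(p,1-p)=-(1-p)$; and for $p>2/3$ only $s\in\{p,1\}$ are inducible and the minimum is again $w(p,p)=3p-2$. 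In every case this equals $T^1_{x_0}$.

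Part (b) is the delicate one, and is where the \emph{limit} in the statement is essential. For $p\le 1/2$, consider $X_1\sim\mathrm{Ber}(q)$ with $q$ slightly below $1-2p$. Then the large cell has mass $1-q>2p$, so treating it would cost budget at least $(1-q)/2>p$: by the inducibility window the large cell \emph{cannot} be turned on, the only inducible treated masses are $\{0,q\}$, and $\underline v=w(p,q)=-q$ while $\overline v=q$ (treat the small cell purely). Letting $q\uparrow 1-2p$ drives $(\underline v,\overline v)\to(w(p,1-2p),b(p,1-2p))=H_{x_0}$ while $\mathrm{Ber}(q)\to\mathrm{Ber}(1-2p)$, which is exactly limit-implementation. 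For $p>1/2$ and $X_1\sim\mathrm{Ber}(2-2p)$ I expect the analogous computation to hold \emph{exactly}: the inducible masses are $\{2-2p,1\}$ with $w$-values $0$ and $2p-1$, so $\underline v=0=w(p,2-2p)$, and concentrating budget in the large cell gives $\overline v=b(p,2-2p)$, so the constant sequence already works.

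The main obstacle is the worst-case reduction and, specifically, pinning down the inducibility window with the correct open/closed endpoints, since the discontinuity at $q=1-2p$ is precisely what forces $H_{x_0}$ to be only limit-implemented rather than implemented: at $q=1-2p$ the large cell becomes marginally inducible through the tie-break, and the worst case jumps down from $-(1-2p)$ to $w(p,2p)=4p-2$. I would therefore be careful to (i) justify the payoff identity and the two reductions rigorously, and (ii) treat the knife-edge $p\in\{1/3,2/3\}$ separately, where (at $p=2/3$) the trust point of part (a) is in fact only limit-implemented and is dominated by $H_{x_0}$, consistent with the footnote to Lemma \ref{lemm:p}, so it never appears on the frontier.
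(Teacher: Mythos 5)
Your proposal is correct and follows essentially the same route as the paper: your ``inducibility window'' $2p-1<\mu(S)\le 2p$ is exactly the paper's set $BP(p)$ of Bayes-plausible treatment probabilities, your per-mass values are the paper's $\phi_W(q)=w(p,q)$ and $\phi_B(q)=b(p,q)$, and the enumeration over the attainable masses $\{0,q,1-q,1\}$ of a Bernoulli $X_1$ is precisely how the paper computes the implemented points, including the open-endpoint/tie-break discontinuity at $q=1-2p$ that forces the hedge point to be only limit-implemented for $p\le 1/2$. Your observations that the $p>1/2$ hedge point is implemented exactly and that the knife-edge $p=2/3$ needs separate handling are both accurate refinements consistent with the paper's footnote.
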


Unlike the case $p_{x_0} \notin \left[\frac13,\frac23\right]$, when $p_{x_0} \in \left(\frac13,\frac23\right)$  the frontier consists of two distinct segments. The first segment $\overline{H_{x_0}D_{x_0}}$, running from the hedge point to the distrust point, is qualitatively similar to the earlier frontier. Along this segment, the designer again  faces a one-for-one tradeoff: each unit of gain in the best-case payoff comes at the cost of a unit loss in the worst-case payoff.
By contrast, the second segment $\overline{T_{x_0}H_{x_0}}$, which extends from the trust point to the hedge point, has slope
\[\sigma(x_0) = \frac{T^2_{x_0} - H^2_{x_0}}{T^1_{x_0} - H^1_{x_0}} = \left\{\begin{array}{cc}
\frac{1-3p_{x_0}}{p_{x_0}} & \mbox{ if } p_{x_0} \in \left[\frac13,\frac12\right] \\
\frac{3p_{x_0}-2}{1-p_{x_0}} & \mbox{ if } p_{x_0} \in \left[\frac12,\frac23\right.]
\end{array}\right.\]
Along this part of the frontier, each unit gain in the best case costs more than a unit in the worst case, since $-1 \leq \sigma(x_0) \leq 0$ for all $p_{x_0} \in \left[\frac13,\frac23\right]$. 

\begin{example}
 Consider the setting of Example \ref{ex:EvenWorse}, where   $p=2/5$ of the patients need treatment. Then the distrust point is $D:=D_{x_{0}} = (0,0)$, the hedge point is $H:=H_{x_{0}}=(-1/5,1/5)$, and the trust point is $T:=T_{x_{0}} = (-3/5,2/5)$, as depicted in Figure \ref{fig:Single}.
 
 \begin{figure}[h]
\begin{center}
    \includegraphics[scale=0.25]{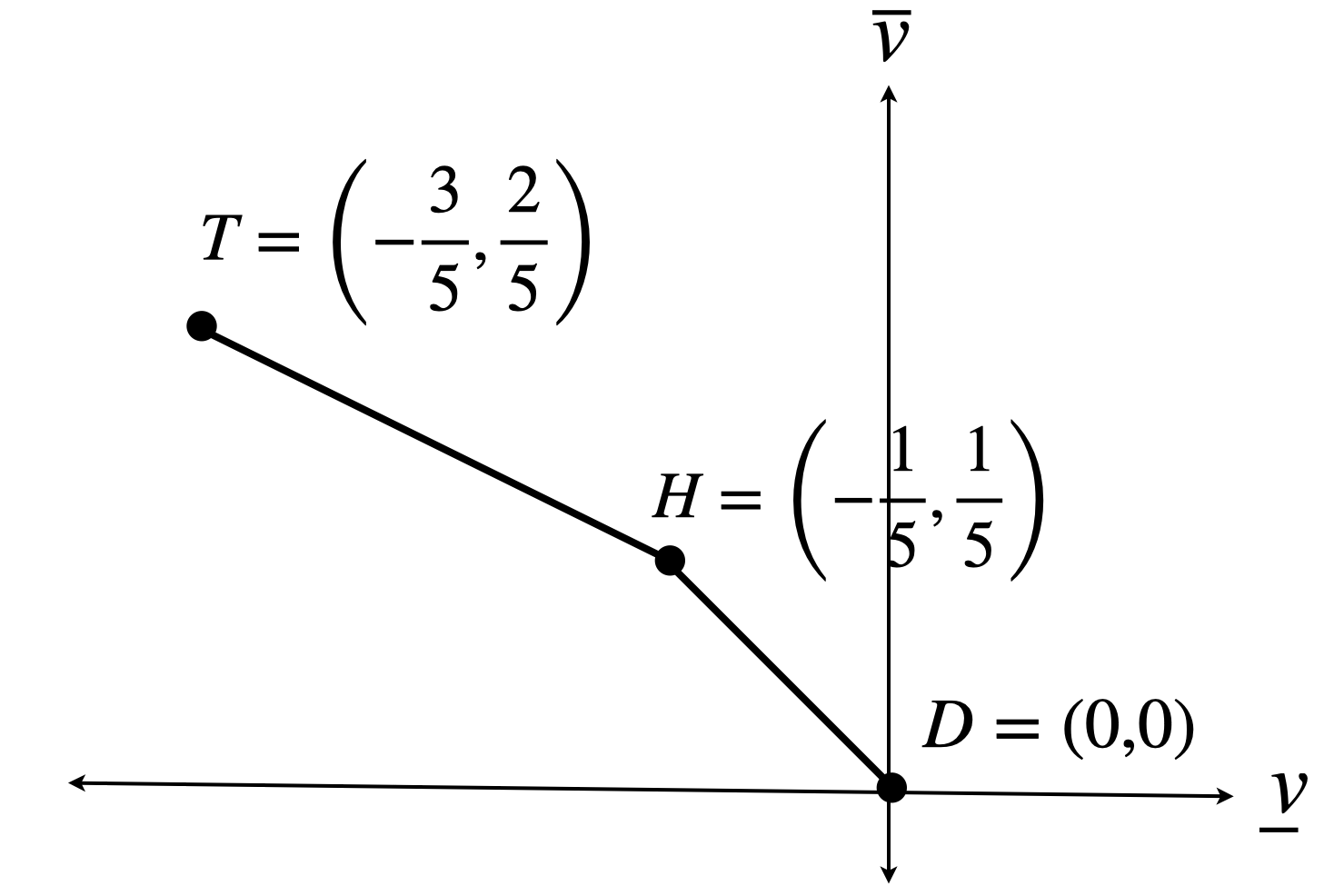}
    \caption{The frontier for Example \ref{ex:EvenWorse}.} \label{fig:Single}
    \end{center}
\end{figure}

The first segment from $D$ to $H$ can be parametrized as $\{(-q,q) : q \in [0,1/5]\}$. As in Section \ref{sec:Benchmark}, each point $(-q,q)$ is implemented by $X_1 \sim \mbox{Ber}(q)$, which leads both types of AI to recommend treatment with probability $q$. When the AI is aligned, these are patients who need treatment, and when the AI is misaligned, they are patients who do not. Increasing $q$ thus improves the best case and worsens the worst case by an equal amount.

But when  $q \in \left[\frac{1}{5},\frac{2}{5}\right]$, selecting $X_1 \sim \mbox{Ber}(q)$ gives the AI a choice between inducing treatment of  $q$ of the patients or  $1-q$ of the patients, since both are incentive-compatible. The best case still corresponds to correctly treating the $q$ of patients who need treatment, but the worst case becomes treating the $1-q$ of patients who don't need treatment. Thus the variable $X_1\sim \mbox{Ber}(q)$ for $q \in \left[\frac15,\frac25\right]$ implements the point $(-(1-q),q)$, and in particular $X_1 \sim \mbox{Ber}(2/5)$ implements the trust point $(-3/5,2/5)$.

Finally, the points on the line connecting the trust point to the hedge point are achieved through randomization, and (as we show in the proof of Lemma \ref{lemm:p}) all other feasible points are dominated by some point on one of these two line segments.
  
\end{example}

\subsection{Finite $\mathcal{X}_0$} \label{sec:FiniteX0}

For finite $\mathcal{X}_0$, the efficient frontier lies on the boundary of the Minkowski sum of the frontiers in Lemma \ref{lemm:p}, and takes the following form: Order the elements of $\mathcal{X}_0^\dag$ as $x_0^{(1)}, \dots, x_0^{(k)}$ where
\[\sigma(x_0^{(1)}) \geq \dots \geq \sigma(x_0^{(k)})\]
breaking ties arbitrarily. In this ordering,  the segment connecting $T_{x^{(j)}_0}$ to $H_{x^{(j)}_0}$ is flattest at $j=1$ and becomes progressively steeper as $j$ increases; equivalently,  the marginal cost of improving the worst-case outcome  in subgroup $x_0^{(j)}$ increases in $j$.  For convenience, write $\sigma_j := \sigma(x_0^{(j)})$ and let $S_j =\left\{x_0^{(1)}, \dots, x_0^{(j)}\right\}$ denote the first $j$ covariates in this ordering.

For each $j=1,\dots,k$, define the $j$-th hedge point to be
\begin{align*}
\mathsf{H}^{(j)} & = \sum_{x_0 \in S_j}  \mu_{x_0} \cdot H_{p_{x_0}} + \sum_{x_0 \notin S_j}  \mu_{x_0} \cdot T_{p_{x_0}}.
\end{align*}
To obtain $\mathsf{H}^{(j)}$, the designer implements the hedge point $H_{x_0}$ for each subgroup $x_0 \in S_j$, and the trust point $T_{x_0}$ for every other subgroup. Intuitively, $\mathsf{H}^{(j)}$ represents the outcome obtained by hedging on the $j$ subgroups where hedging is cheapest (the flattest segments) and trusting on the others.

\begin{definition} For any two points $A,B\in \mathbb{R}^2$ let $\overline{AB}$ denote the line segment connecting these points, i.e., $\overline{AB}=\{tA + (1-t)B: t\in [0,1]\}$.
\end{definition}

\begin{theorem} \label{thm:ParetoGeneral} The efficient frontier is
\[\overline{\mathsf{T}\mathsf{H}^{(1)}} \cup \overline{\mathsf{H}^{(1)}\mathsf{H}^{(2)}} \cup \dots \cup \overline{\mathsf{H}^{(k-1)}\mathsf{H}^{(k)}} \cup \overline{\mathsf{H}^{(k)}\mathsf{D}}\]
These are line segments whose slopes are given by $\sigma_1, \sigma_2, \dots, \sigma_k, -1$.
\end{theorem}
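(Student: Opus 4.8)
The plan is to reduce the finite-$\mathcal{X}_0$ problem to the singleton case of Lemma \ref{lemm:p} by showing that the designer's problem separates across subgroups, and then to read off the frontier of the resulting Minkowski sum by sorting its edges by slope. First I would write the realized payoff in the form
\[
U(P,P^*)=\sum_{x_0}\mu_{x_0}\sum_{x_1}\mu(x_1\mid x_0)\,\mathbf{1}\{P(Y{=}1\mid x_0,x_1)\ge\tfrac12\}\,\bigl(2P^*(Y{=}1\mid x_0,x_1)-1\bigr),
\]
and note that the inner sum over $x_1$ depends only on the conditional covariate distribution $\mu(\cdot\mid x_0)$ and on the conditional type probabilities at $x_0$, while the admissibility constraints ($\marg_{\mathcal{X}_0\times\mathcal{X}_1}P=\mu$ and $P(Y{=}1\mid X_0{=}x_0)=p_{x_0}$) bind separately within each $x_0$. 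Since the only restriction on $\mu$ is $\marg_{\mathcal{X}_0}\mu=\mu_0$, the designer chooses each conditional $\mu(\cdot\mid x_0)$ freely and independently, so both the infimum defining $\underline v_I$ and the supremum defining $\overline v_I$ are attained subgroup by subgroup. This gives
\[
(\underline v_I(\mu),\overline v_I(\mu))=\sum_{x_0}\mu_{x_0}(\underline v^{x_0},\overline v^{x_0}),\qquad C(I)=\sum_{x_0}\mu_{x_0}C_{x_0},
\]
where $C_{x_0}$ is the singleton feasible set whose northeast boundary is described by Lemma \ref{lemm:p}.

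Next I would record the per-subgroup frontiers: for $x_0\notin\mathcal{X}_0^\dag$ the boundary of $C_{x_0}$ is the single slope-$(-1)$ segment $\overline{T_{x_0}D_{x_0}}$, and for $x_0\in\mathcal{X}_0^\dag$ it is the concave two-edge curve $\overline{T_{x_0}H_{x_0}}\cup\overline{H_{x_0}D_{x_0}}$ with slopes $\sigma(x_0)\in[-1,0]$ and $-1$. Each is a decreasing, concave, piecewise-linear curve from the trust vertex (smallest $\underline v$, largest $\overline v$) to the distrust vertex on the diagonal. The core step is then a Minkowski-sum argument: because support functions are additive under Minkowski sum, for every weight $\alpha\in[0,1]$ the maximizer of $\alpha\underline v+(1-\alpha)\overline v$ over $C(I)$ is the sum of the per-subgroup maximizers over the $C_{x_0}$. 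A functional with gradient $(\alpha,1-\alpha)$ selects the frontier edge of slope $-\alpha/(1-\alpha)$, so as $\alpha$ rises from $0$ the optimal point of each subgroup moves monotonically from $T_{x_0}$ through $H_{x_0}$ to $D_{x_0}$; a subgroup $x_0\in\mathcal{X}_0^\dag$ leaves its trust vertex for its hedge edge exactly when $-\alpha/(1-\alpha)$ drops below $\sigma(x_0)$, i.e. at $\alpha=-\sigma(x_0)/(1-\sigma(x_0))\in[0,\tfrac12]$, and every subgroup reaches its distrust vertex only once $\alpha$ exceeds $\tfrac12$. Hence the aggregate frontier is the concave piecewise-linear curve obtained by concatenating all subgroup edges, scaled by $\mu_{x_0}$, in order of non-increasing slope.

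It remains to identify the vertices by accumulating displacements in this order. The flat hedge edges, sorted as $\sigma(x_0^{(1)})\ge\dots\ge\sigma(x_0^{(k)})$ in the theorem's ordering, are traversed first, starting from $\mathsf{T}=\sum_{x_0}\mu_{x_0}T_{x_0}$; traversing the hedge edge of $x_0^{(j)}$ replaces the term $\mu_{x_0^{(j)}}T_{x_0^{(j)}}$ by $\mu_{x_0^{(j)}}H_{x_0^{(j)}}$, so after the first $j$ hedge edges the running point is exactly $\mathsf{H}^{(j)}=\sum_{x_0\in S_j}\mu_{x_0}H_{x_0}+\sum_{x_0\notin S_j}\mu_{x_0}T_{x_0}$, and the segment $\overline{\mathsf H^{(j-1)}\mathsf H^{(j)}}$ has slope $\sigma_j$. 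All slope-$(-1)$ edges—$\overline{H_{x_0}D_{x_0}}$ for $x_0\in\mathcal{X}_0^\dag$ and $\overline{T_{x_0}D_{x_0}}$ for $x_0\notin\mathcal{X}_0^\dag$, one per subgroup—share the same slope and therefore merge into the single terminal segment, which carries $\mathsf H^{(k)}$ (where $S_k=\mathcal{X}_0^\dag$) to $\mathsf{D}=\sum_{x_0}\mu_{x_0}D_{x_0}$. This produces precisely the listed frontier with slopes $\sigma_1,\dots,\sigma_k,-1$, and recovers Proposition \ref{thm:Pareto} in the degenerate case $\mathcal{X}_0^\dag=\varnothing$.

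I expect the main technical content to be the Minkowski-sum step and its bookkeeping: one must verify that the frontier of the sum is exactly the sum of the frontiers (which follows from additivity of support functions over the positive orthant, the only directions relevant here since $\alpha,1-\alpha\ge0$), and handle ties in slope carefully so that the many slope-$(-1)$ edges collapse into one segment while the hedge edges appear in the stated order. The separation result of the first paragraph is conceptually the crux but is clean once the payoff is written out, since neither the objective nor the constraints couple across subgroups through the shared label space of $X_1$; the remaining verifications are the routine vertex computations sketched above.
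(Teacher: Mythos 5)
Your proposal is correct and follows essentially the same route as the paper: decompose the feasible set as the $\mu_0$-weighted Minkowski sum of the per-subgroup sets from Lemma \ref{lemm:p}, use linearity of the functional $\alpha\underline v+(1-\alpha)\overline v$ to reduce its maximization over the sum to subgroup-by-subgroup maximization, and recover the frontier by concatenating the subgroup edges in order of non-increasing slope, with the slope-$(-1)$ edges merging into the terminal segment to $\mathsf{D}$. Your explicit appeal to additivity of support functions and the concavity of the resulting piecewise-linear curve is just a slightly more formal packaging of the paper's ``every undominated point of the compact convex set $C$ is supported by some $\alpha$'' step.
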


\begin{figure}[h]
\begin{center}
    \includegraphics[scale=0.25]{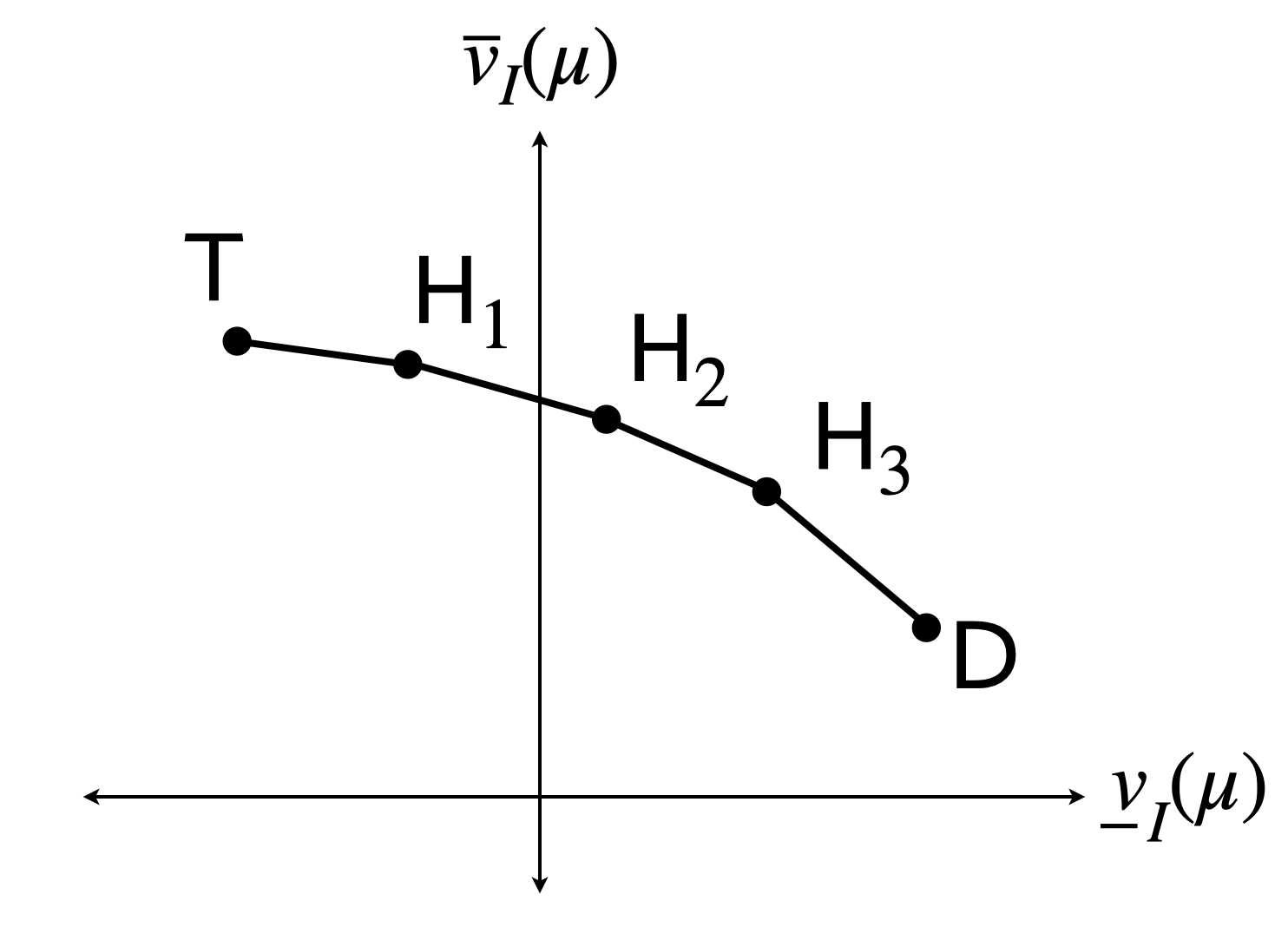}
    \caption{The frontier is a sequence of line segments connecting the trust point $\mathsf{T}$ to the distrust point $\mathsf{D}$.} \label{fig:ParetoLine}
    \end{center}
\end{figure}

This frontier is depicted in Figure \ref{fig:ParetoLine}. Its extreme points again correspond to the optimal points for designers who maximize $\alpha \underline{v}_I(\mu) + (1-\alpha)\overline{v}_I(\mu)$. Since the designer with utility weights $(\alpha,1-\alpha)$ has indifference curves that are linear with slope $-\frac{\alpha}{1-\alpha}$, we can determine which point is optimal by comparing this slope with the slopes of the line segments on the frontier. Specifically,  define $\alpha_j$ for each $j=1, \dots, k$ by 
\[-\frac{\alpha_j}{1-\alpha_j} = \sigma_j\]
and let $\alpha_{k+1} = 1/2$. These $\alpha_j$ are monotonically increasing in $j$ and satisfy $0\leq \alpha_j \leq 1/2$.

\begin{theorem} \label{thm:supportGeneral} Fix arbitrary utility weights  $(\alpha,1-\alpha)$. 
\begin{itemize}
\item[(i)] If $\alpha < \alpha_1$ then the trust point $\mathsf{T}$ is optimal, and is implemented by
\[X_1 \mid X_0=x_0 \sim \mbox{Ber}(p_{x_0}) \quad \forall x_0\]
\item[(ii)] If $\alpha \in [\alpha_j, \alpha_{j+1}]$ for $j=1,\dots,k$ then the hedge point $\mathsf{H}^{(j)}$ is optimal, and is limit-implemented by 
\[X_1 \mid X_0 = x_0  \sim \left\{\begin{array}{cc}
\mbox{Ber}(1-2p_{x_0})  & \mbox{ for all } x_0 \in S_j \mbox{ where } p_{x_0} < \frac12\\
\mbox{Ber}(2-2p_{x_0})  & \mbox{ for all } x_0 \in S_j \mbox{ where } p_{x_0} > \frac12\\
\mbox{Ber}(p_{x_0}) & \mbox{ otherwise}\end{array}\right.\]
\item[(iii)] If $\alpha > 1/2 $ then the distrust point $\mathsf{D}$ is optimal, and is implemented by any constant $X_1$.
\end{itemize}
\end{theorem}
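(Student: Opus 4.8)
The plan is to deduce the optimal \emph{point} on the frontier purely from the geometry established in Theorem \ref{thm:ParetoGeneral}, and then read off the implementing covariate from the singleton implementation result, Lemma \ref{lemm:pImplement}. Since the objective $\alpha \underline{v}_I(\mu) + (1-\alpha)\overline{v}_I(\mu)$ is linear in the pair $(\underline{v}_I(\mu),\overline{v}_I(\mu))$ with nonnegative weights, its maximum over the convex feasible set $C(I)$ is attained at a point of the efficient frontier $F(I)$, and in fact at an extreme point of $F(I)$. By Theorem \ref{thm:ParetoGeneral}, $F(I)$ is the concave piecewise-linear curve with vertices $\mathsf{T}, \mathsf{H}^{(1)}, \dots, \mathsf{H}^{(k)}, \mathsf{D}$ and successive segment slopes $\sigma_1 \ge \sigma_2 \ge \cdots \ge \sigma_k \ge -1$ (the final inequality holds because each $\sigma(x_0) \in [-1,0]$, and the decreasing slopes make the curve concave). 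The only remaining task is to match the designer's indifference slope to these segment slopes.

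First I would formalize the comparison. A designer with weights $(\alpha,1-\alpha)$ has linear indifference curves of slope $-\alpha/(1-\alpha)$, which is strictly decreasing in $\alpha$ and ranges over $(-\infty,0]$ as $\alpha$ ranges over $[0,1]$. Moving one step from a vertex toward $\mathsf{T}$ raises $\overline{v}$ and lowers $\underline{v}$ along the adjacent segment, and a direct computation shows this move raises the objective exactly when the indifference slope exceeds that segment's slope. Consequently: if $-\alpha/(1-\alpha) > \sigma_1$—equivalently $\alpha < \alpha_1$, using $\sigma_1 = -\alpha_1/(1-\alpha_1)$ and monotonicity—then every rightward move lowers the objective and the endpoint $\mathsf{T}$ is optimal, giving (i); if $-\alpha/(1-\alpha) \in [\sigma_{j+1},\sigma_j]$ with the convention $\sigma_{k+1}=-1$, equivalently $\alpha \in [\alpha_j,\alpha_{j+1}]$, then the vertex $\mathsf{H}^{(j)}$ is supported by the indifference line and is optimal, giving (ii); and if the slope is steeper than $-1$, i.e. $\alpha > 1/2 = \alpha_{k+1}$, then every leftward move lowers the objective and the endpoint $\mathsf{D}$ is optimal, giving (iii). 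Because the intervals are closed, at a breakpoint $\alpha=\alpha_j$ the whole segment is optimal and the named vertex is one valid optimum, consistent with the theorem claiming existence rather than uniqueness of the optimal covariate.

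It remains to verify the stated implementations, and here I would lean on the subgroup separability underlying Theorem \ref{thm:ParetoGeneral}: when $X_1$ is built conditionally independently across subgroups, both $\underline{v}_I$ and $\overline{v}_I$ decompose as the $\mu_{x_0}$-weighted sums of the corresponding singleton payoffs, so that $(\underline{v}_I(\mu),\overline{v}_I(\mu)) = \sum_{x_0} \mu_{x_0}(\underline{v}_{x_0},\overline{v}_{x_0})$. Thus to implement any vertex $\sum_{x_0} \mu_{x_0} P_{x_0}$ with each $P_{x_0} \in \{T_{x_0},H_{x_0},D_{x_0}\}$, it suffices to implement the matching singleton point in each subgroup via Lemma \ref{lemm:pImplement}. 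For $\mathsf{T}$ this is $X_1 \mid X_0=x_0 \sim \mathrm{Ber}(p_{x_0})$ in every subgroup (part (a)); for $\mathsf{H}^{(j)}$ it is the hedge construction $\mathrm{Ber}(1-2p_{x_0})$ or $\mathrm{Ber}(2-2p_{x_0})$ on $S_j$ (part (b)) together with $\mathrm{Ber}(p_{x_0})$ off $S_j$; for $\mathsf{D}$ it is a constant $X_1$ (part (c)). Since the singleton hedge point is only \emph{limit}-implemented, the aggregate hedge point is correspondingly limit-implemented, by choosing a joint sequence $\mu_n \to \mu$ whose subgroup-$x_0$ conditionals follow the per-subgroup approximating sequences.

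The genuinely load-bearing step is the additive separability of $\underline{v}_I$ and $\overline{v}_I$ across subgroups under conditionally independent construction; everything else is bookkeeping. I expect this to already be available from the proof of Theorem \ref{thm:ParetoGeneral}, since the Minkowski-sum description of the aggregate frontier presupposes it: the designer's action at $(x_0,x_1)$ depends only on the conditional $P(Y=1 \mid X_0=x_0, X_1=x_1)$, and the admissibility constraints bind separately within each subgroup, so both the best- and worst-case problems factor across $x_0$. The only additional care needed beyond citing that fact is to ensure the limit-implementation of $\mathsf{H}^{(j)}$ can be assembled from the subgroup-wise limits simultaneously, which is immediate because distinct subgroups are governed by independent conditional distributions.
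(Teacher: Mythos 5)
Your proposal is correct and follows essentially the same route as the paper: a linear objective over the convex feasible set is maximized at an extreme point of the piecewise-linear frontier, the optimal vertex is found by comparing the indifference slope $-\alpha/(1-\alpha)$ to the ordered segment slopes $\sigma_1 \geq \dots \geq \sigma_k \geq -1$, and the implementations follow subgroup-by-subgroup from Lemma \ref{lemm:pImplement} via the additive (Minkowski-sum) decomposition of payoffs across $x_0$. The only cosmetic difference is that the paper optimizes the linear functional covariate-by-covariate over the Minkowski sum rather than reading the answer off the aggregate frontier, but these are equivalent and rest on the same separability fact you correctly identify as the load-bearing step.
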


Thus as the designer places increasing weight on the worst-case payoff, the optimal policy shifts gradually from full trust to full distrust, passing through these intermediate hedge points in a specific order. For small $\alpha$ (low weight on the worst case), the designer implements $\mathsf{T}$ by choosing the trust point $T_{x_0}$ for every subgroup $x_0$. As $\alpha$ grows larger, the designer successively switches from $T_{x_0}$ to $H_{x_0}$ for subgroups $x_0 \in \mathcal{X}_0^\dag$, beginning with those where the trust-hedge segment $\overline{T_{x_0}H_{x_0}}$ is flattest, i.e., those where hedging yields the best return. Equivalently, the designer ranks subgroups $x_0 \in \mathcal{X}_0^\dag$ by the distance between $p_{x_0}$ and $1/2$, switching first for those subgroups where $\vert \frac12 - p_{x_0}\vert$ is largest.\footnote{This is because $\sigma(x_0) > \sigma(x_0')$ if and only if $\vert \frac12 -p_{x_0} \vert > \vert \frac12 -p_{x_0'} \vert$. To see this, let $r := \vert 1/2 - p_{x_0}\vert$. Then $\sigma(x_0)  = \frac{3r-1/2}{1/2-r}$, which is increasing in $r$.}

Once all $x_0 \in \mathcal{X}_0^\dag$ are hedged (achieving the point $\mathsf{H}_k$), any further increase to $\alpha$ triggers a single, simultaneous move to full distrust $\mathsf{D}$. This is because the remaining frontier pieces $\overline{H_{x_0}D_{x_0}}$ all have slope $-1$, so if it is optimal to switch to $D_{x_0}$ for one subgroup, it is optimal for all of them. The following simple example illustrates the result.

\begin{example} A hospital predicts the probability that the patient needs an ablation given a baseline covariate vector $x_0$ describing age and frequency of irregular heartbeats. Suppose that $p_{x_0} \notin \left[\frac13,\frac23\right]$ for all except two $x_0$ subgroups, which we label $x_0^1$ and $x_0^2$. In these subgroups, the probabilities of treatment need are respectively $p_{x_0^1}=0.65$ (more skewed) and $p_{x_0^2} = 0.52$ (more balanced).

Then Theorem \ref{thm:ParetoGeneral} says that when $\alpha$ is low, the designer implements the trust point for each of these subgroups. As $\alpha$ increases, the designer first starts hedging for patients in subgroup $x_0^1$ by choosing $X_1$ to disaggregate a $\vert 2p_{x_0^1}-1 \vert = 0.3$ mass of patients in that subgroup. Thus at most 30\% of patients will be correctly treated (in the best case) or mistreated (in the worst case).

As $\alpha$ increases further, the designer also hedges for patients in subgroup $x_0^2$, this time by choosing $X_1$ to disaggregate a $\vert 2p_{x_0^2}-1\vert = 0.04$ mass of patients in that subgroup. Hedging here is more costly because it gives the AI almost no scope for influence. As $\alpha$ increases further, the designer finally switches to a constant $X_1$ for all subgroups, i.e., not relying on the AI at all.
\end{example}

\subsection{Proof Approach} \label{sec:ProofApproach}

First suppose $\mathcal{X}_0$ is a singleton,  and let $p:=p_{x_0}$ be the fraction of patients who need treatment (as in Lemmas \ref{lemm:p} and \ref{lemm:pImplement}). We show that the best-case and worst-case problems for a given $X_1 \sim \mu$ can be recast in the following way.

Let $Y^*$ and $Y$ be two binary-valued random variables, which we interpret as the true type variable and the AI's account of the type variable. (When the AI is aligned, always $Y=Y^*$.) The marginal distributions of $Y^*$ and $Y$ are fixed by admissibility: Each is Bernoulli with rate parameter $p$. The core of the problem lies in the coupling of the variables $(X_1,Y,Y^*)$, which leads to a joint law $\rho$. 

 We assume in the main proof that the designer always treats when indifferent, so  
\begin{equation} \label{eq:Action}
A = \mathbbm{1}\left(\rho(Y=1 \mid X_1)\geq 1/2\}\right)
\end{equation}
is the random variable corresponding to the designer's action,\footnote{Appendix \ref{app:TieBreak} shows that this tie-breaking assumption is without loss of generality.} and the designer's payoff is 
\begin{equation} \label{eq:Payoff}
\rho(\{A=1\} \cap \{Y^*=1\}) - \rho(\{A=1\} \cap \{Y^*=0\}),
\end{equation}
where $\rho(\{A=1\} \cap \{Y^*=1\})$ is the measure of patients that are correctly treated and $\rho(\{A=1\} \cap \{Y^*=0\})$ is the measure of patients that are incorrectly treated.  In the worst case, $(X_1,Y,Y^*)$ are coupled so as to minimize the designer's payoffs, and in the best case they are coupled so as to maximize the designer's payoffs.

Our proof proceeds in the following steps. 

\textbf{1) Best-case and worst-case payoffs for a given $A$.}  The designer's payoffs only depend on $(X_1,Y)$ insofar as they determine $A$. We thus first take the distribution of $A$ as given, setting aside the question of whether the AI could induce this random variable as the designer's action. It is straightforward to determine what coupling of $(A,Y^*)$ minimizes (or maximizes) the designer's objective in (\ref{eq:Payoff}): In the best case the true type variable $Y^*$ is maximally correlated with $A$, and in the worst case it is maximally negatively correlated with $A$. We derive the designer's payoffs in each case. (These are the $b(p,q)$ and $w(p,q)$ in Definitions \ref{def:BestSelection} and \ref{def:WorstSelection}, where $q=\mu(A=1)$.) 

\textbf{2) The feasible distributions for $A$.} We next ask which action distributions the AI can induce. Since $A \in \{0,1\}$, the variable $A$ is Bernoulli with some parameter $q$, and all constraints operate through $q$. Specifically, $q$ is feasible if there is a coupling $(X_1,Y)$ that yields $A \sim \mbox{Ber}(q)$, where $A$ is as defined in (\ref{eq:Action}). 

We decompose this condition into two parts. First, there must be  some  $\pi_1 \geq 1/2$ (corresponding to the posterior probability of need of treatment conditional on $A=1$) and some $\pi_0 \leq 1/2$ (corresponding to the posterior probability of need of treatment conditional on $A=0$) such that $q\pi_1 + (1-q)\pi_0 = p$, i.e., the posteriors aggregate to the prior $p$. If the designer could flexibly design $A$, then this would represent the entirety of the constraint on $q$. 

In addition,  the action $A$ must be measurable with respect to $X_1$. For example, if $X_1$ represents whether the individual is college-educated, and 37\% of the population is college-educated, then (given our deterministic tie-breaking rule) $q$ can only be either 0, 0.37, 0.63, or 1. We combine these constraints to characterize the set of feasible values for $q$ given any $p$ and $\mu$.

\textbf{3) The two AI's choices of $A$.} Among the feasible values of $q$, the aligned and misaligned AI each choose the one they want. This reflects the fundamental tension in our problem: the designer cannot adapt the design of $X_1$ to the nature of the AI, but must consider simultaneously what $X_1$ would allow each of these AIs to do. Continuing our college education example, it could be that the aligned AI's report induces the designer to treat only college-educated individuals, hence $q=0.37$, while the misaligned AI's report induces the designer to treat only individuals who are not college-educated, hence $q=0.63$.

The designer's best case for a fixed $X_1 \sim \mu$ can now be finally expressed as the payoff that results from the aligned AI choosing the feasible value of $q$ that maximizes the designer's payoffs when Nature is benevolent (as derived in (1)), and the worst-case payoff as the payoff that results from the misaligned AI choosing the feasible value of $q$ that minimizes the designer's payoffs when Nature is adversarial (as derived in (1)).

\textbf{4) Deriving the efficient frontier and generalizing.}
The above steps show how we determine the best-case and worst-case payoffs for a given $X_1 \sim \mu$. To derive the frontier, we need to consider all $X_1$ that lead to undominated payoffs. A key simplification is to show that it is sufficient to consider binary-valued $X_1$. Specifically, we show that  any richer covariate $X_1'$ can be replaced by some $X_1 \sim \mbox{Ber}(q)$ that yields (weakly) dominating payoffs. Our argument is by construction: We choose $q$ to be the action recommendation probability that the aligned AI would have chosen given the richer covariate $X_1'$. So the best-case payoff must be the same, but the worst-case payoff is weakly better since the misaligned AI is more constrained in which action distributions it can induce. 

The last step is to  derive the frontier by considering all possible Bernoulli variables $X_1$ and the resulting best-case and worst-case payoffs as derived in Step (1). We show that when $p<1/3$, the point implemented by $X_1 \sim \mbox{Ber}(q)$ is undominated if and only if $q\in [0,p]$, while if  $p\in(1/3,1/2]$, the point implemented by $X_1 \sim \mbox{Ber}(q)$ is undominated if and only if $q=p$ (implementing the trust point) or $q \in [0,1-2p)$. Similar statements hold for the mirrored case $p>1/2$. The efficient frontier thus simplifies to the convex hull of these implemented points, and we show that this is either a single line segment or the union of two line segments, as described in Lemma \ref{lemm:p}. We then finally derive the frontier when $\mathcal{X}_0$ is not a singleton  as a weighted Minkowski sum of the Pareto frontiers for each $x_0$. 

\section{Endogenous Informativeness Bounds} \label{sec:ConstrainedNature}

We have so far supposed that for any choice of $X_1$, the designer only knows  (1) the joint distribution of covariates $(X_0,X_1)$ and (2)  the conditional probability of treatment need given $X_0$. The designer's ambiguity set (see Definition  \ref{def:AmbiguitySet}) includes all joint distributions $P$ on $\mathcal{X}_{0} \times \mathcal{X}_{1}$ that are consistent with this prior information. In particular, the designer's posterior belief $P(Y=1 \mid X_0=x_0,X_1=x_1)$ can take any value in $[0,1]$. This reflects full ambiguity about how $X_1$ relates to outcomes. 

In practice, the designer may be able to place bounds on how much more information  a given $X_1$ provides than is in the original $X_0$. For example, the designer may not know whether giving the AI access to a patient's social media information  will increase or decrease the predicted need of treatment, but may believe that this information will have limited impact, i.e. that  the conditional probability $P(Y\mid X_1=x_1,X_0=x_0)$ would be close to $P(Y\mid X_0=x_0)$  for any realization $x_1$ of $X_1$. 

Section \ref{sec:Gentau} generalizes our previous setting by 
supposing  that  the designer can identify attributes $X_1$ to  not only satisfy any joint distribution of covariates $(X_0,X_1)\sim \mu$, but also to satisfy 
\[P(Y\mid X_1=x_1,X_0=x_0) \in [\ltau,\utau] \quad \forall x_1 \in \mathcal{X}_1\]
for given bounds $\tau=(\ltau,\utau)$, so that they are indirectly  choosing $\mu$ and $\tau$.  Section \ref{sec:Frontiertau} characterizes the frontier for a general set of bounds $\tau = (\underline{\tau},\overline{\tau})$ and shows how our original characterization of the frontier and its implementation extend.  Section \ref{sec:optimaltau}  describes the optimal choice of $\tau$ when treatment need in the population is not too large (specifically, less than $1/3$).

\subsection{Generalized Model} \label{sec:Gentau}

To simplify exposition we subsequently consider a singleton $\mathcal{X}_0$, understanding that  the construction of the frontier for general $\mathcal{X}_0$ proceeds as in Section \ref{sec:FiniteX0}: we take a weighted Minkowski sum of the frontiers for each individual $x_0$, and retain the undominated points. Henceforth let $p = P(Y=1)$ be the unconditional need of treatment in the population.  

The next definition generalizes  admissibility.

\begin{definition} For any $\tau = (\underline{\tau},\overline{\tau})$ where $0 \leq \ltau \leq p \leq \utau \leq 1$, the joint distribution $P$ is $\tau$-\emph{admissible} if $P \in \Delta(\mathcal{X}_1 \times \mathcal{Y})$ for some finite set $\mathcal{X}_1$ and:
\begin{enumerate}
    \item $\text{marg}_{\mathcal{X}_1} P = \mu$
    \item $P(Y=1) = p$ 
    \item $\ltau \leq P(Y=1 \mid X_1=x_1) \leq \utau$ for every  $x_1 \in \mathcal{X}_1$
\end{enumerate}
Let $\mathcal{P}_\tau(p,\mu)$ denote the set of all joint distributions that are $\tau$-admissible. This constitutes the designer's new ambiguity set.
\end{definition}

When $\ltau=0$ and $\utau=1$, $\tau$-admissibility reduces to our original definition of admissibility. Loosely speaking, more extreme values of $\ltau$ and $\utau$ correspond to covariates with higher potential predictive power. We allow the upper and lower bounds to be chosen separately rather than parameterizing informativeness by a single parameter, which allows us to consider covariates that provide more definitive evidence for need of treatment than for harm from treatment, or vice versa. For example, the designer might believe that some  patterns of arythmia episodes could be strong evidence that an  ablation is needed, but that no pattern would be decisive evidence that the patient does \emph{not} need an ablation. This would correspond to an ambiguity set where $\utau$ is much larger than the prior $p$, while $\utau$ is not much smaller than it. 

Now generalize the previous best-case and worst-case bounds to 
\begin{align*} 
\underline{v}_{p,\tau}(\mu) & = \inf_{P,P^*\in\mathcal{P}_\tau(p,\mu)} 
                 U(P,P^*) \\
\overline{v}_{p,\tau}(\mu) & = \sup_{P,P^*\in\mathcal{P}_\tau(p,\mu)} 
                 U(P,P^*),
\end{align*}
  and define the $\tau$-efficient frontier to be the pairs $(\underline{v}_{p,\tau}(\mu),\overline{v}_{p,\tau}(\mu))$ that are undominated within the corresponding feasible set.\footnote{That is, let $C_\tau(p) = \mbox{conv}\{(\underline{v}_{p,\tau}(\mu),\overline{v}_{p,\tau}(\mu)) : \mu \mbox{ extends } \mu_0\}$. Then the $\tau$-efficient frontier is 
$F_\tau(p) = \mbox{cl}\{(\underline{v}_I,\overline{v}) \in C_\tau(p): \nexists  (\underline{v}_I',\overline{v}') \in C_\tau(p) \mbox{ s.t. }  (\underline{v}_I',\overline{v}')  \succ  (\underline{v}_I,\overline{v}) \}$.}

\subsection{Efficient Frontier} \label{sec:Frontiertau}
We show that the $\tau$-efficient frontier qualitatively resembles the previous frontier, with a few key differences.  To avoid repetition, we present results here for $p < 1/2$ (with the mirrored case reported in Appendix \ref{sec:p>0.5}).

Regardless of what the $\tau$ bounds are, when the designer is sure the AI is misaligned, they withhold all information, so the distrust point remains  $D=(0,0)$. The role of $\tau$ becomes relevant when the AI is given some discretion. Then $\tau$ both limits the potential benefit and harm of the AI, and Definitions \ref{def:BestSelection} and \ref{def:WorstSelection} are modified as follows.

\begin{definition}[$\tau$-Best Selection] Define
\[b_\tau(p,q) = \left\{\begin{array}{cc}
q (2\utau-1) & \mbox{ if } p \geq q \utau + (1-q) \ltau \\
(2  (p-(1-q)\ltau)-q & \mbox{ if } p < q \utau + (1-q) \ltau 
\end{array}\right.\]
\end{definition}
The high-$p$ case corresponds to treating a fraction $q$ of patients, of which a fraction $\utau$ actually need treatment. This is feasible as long as $p$ (the unconditional probability that  treatment is needed) is large enough to support this belief. When $p$ is below the stated threshold, the fraction of true positives can be at most $\frac{p-(1-q)\ltau}{q}$, which yields the payoff in the second case. When $\underline{\tau}=0$ and $\overline{\tau}=1$,  $b_\tau(p,q)$ reduces to  $b(p,q)$.

\begin{definition}[$\tau$-Worst Selection]
Define
\[w_\tau(p,q) = \left\{\begin{array}{cc}
q(2\ltau-1) & \mbox{ if }  p \leq q \underline{\tau} +(1-q)\overline{\tau} \\

(2  (p-(1-q)\utau)-q  & \mbox{ if } p > q \underline{\tau} +(1-q)\overline{\tau} 
\end{array}\right.\]
When $p$ is low, the $\tau$-worst selection is to treat  fraction $q$ of patients, of which a fraction $\ltau$ actually need treatment. Otherwise if $p$ exceeds the stated threshold, the fraction of false positives can be at most $\frac{p-(1-q)\overline{\tau}}{q}$, yielding the second payoff.  When $\underline{\tau}=0$ and $\overline{\tau}=1$, then  $w_\tau(p,q)$ reduces to the original $w(p,q)$.
\end{definition}

We subsequently simplify notation by writing 
$w_\tau(q):=w_\tau(p,q)$ and $b_\tau(q):=b_\tau(p,q)$, 
where $p$ is the fraction of patients that need treatment in the fixed subgroup. The generalized trust point is  
\begin{equation} \label{eq:Ttau} T_\tau = \left\{\begin{array}{cc}
\left(w_\tau\!\left(q_T\right),b_\tau(q_T)\right) & \mbox{ if }p < p^{(1)}_\tau \\
\left(w_\tau\!\left(1-q_T\right),b_\tau(q_T)\right) & \mbox{ if } p^{(1)}_\tau \leq p \\
\end{array}\right.
\end{equation}
where 
\[q_T = \frac{p-\ltau}{\utau -\ltau}  \quad \quad \mbox{ and } \quad \quad p^{(1)}_\tau = \frac{\utau - 2\ltau^2}{1+2\overline{\tau}-4\ltau}.\] This definition parallels (\ref{eq:GenTrust}) with two changes. The first is that $q_T$ replaces $p$ as the treated fraction.  The quantity $q_T$ is defined to be the largest share of patients who can be assigned the probability $\utau$ for treatment need, given that the designer's posterior belief is bounded between $\ltau$ and $\utau$. In our original setting with $(\ltau,\utau)=(0,1)$, this largest share was $q_T=p$, since no more than a fraction $p$ of patients can need treatment with probability 1.

The second change is that $p^{(1)}_\tau$ replaces the previous threshold of $1/3$. Previously we observed that for all $ p\leq 1/3$, an AI given the covariate $X_1 \sim \mbox{Ber}(p)$ could only induce treatment for $X_1=1$, while for $p \in [1/3,1/2]$, the AI given this covariate could induce treatment on either $X_1=1$ or $X_1=0$. This new threshold $p^{(1)}_\tau$ has the same property; namely, when $p <p^{(1)}_\tau$ then the AI given a covariate $X_1 \sim \mbox{Ber}(q_T)$ can induce treatment for $X_1=1$ only, while if $p\in [p^{(1)}_\tau,1/2]$ then the AI can induce treatment for either $X_1=1$ or $X_1=0$.

The generalized hedge point is
$H_\tau= \left(w_\tau\!\left(q_H\right),b_\tau\!\left(q_H\right)\right)$
where $q^H= 
\frac{1-2p}{1-2\ltau}$. The following  results characterize the frontier and the implementation of its extreme points.

\begin{lemma} Suppose $p < 1/2$ and $\utau <\frac12$. Then the frontier is simply the distrust point $D_p$.
\end{lemma}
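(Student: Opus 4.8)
The plan is to show that the single hypothesis $\utau < \tfrac12$ makes treatment infeasible for the designer, which collapses the entire problem to the distrust point. Note first that $\utau < \tfrac12$ already implies $p < \tfrac12$, since $\tau$-admissibility requires $\ltau \leq p \leq \utau$; thus the argument will rely only on the bound on $\utau$, and the stated hypothesis $p < \tfrac12$ is redundant.

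The key observation is that condition (3) of $\tau$-admissibility forces $P(Y=1 \mid X_1 = x_1) \leq \utau$ for every $x_1$ and every $P \in \mathcal{P}_\tau(p,\mu)$. The designer's action rule (\ref{eq:aStar}) (equivalently (\ref{eq:Action})) prescribes treatment at $x_1$ precisely when the reported posterior satisfies $P(Y=1 \mid X_1 = x_1) \geq \tfrac12$. Since $\utau < \tfrac12$ strictly, this threshold is never met: whatever admissible report $P$ the AI sends, the designer's action is $A = 0$ with probability one. I would emphasize that this holds uniformly over all admissible reports and all choices of $\mu$ extending $\mu_0$, so neither the aligned nor the misaligned AI can ever induce treatment.

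With $A \equiv 0$, the payoff expression (\ref{eq:Payoff}) reduces to $\rho(\{A=1\}\cap\{Y^*=1\}) - \rho(\{A=1\}\cap\{Y^*=0\}) = 0$, and this value is independent of the true distribution $P^* \in \mathcal{P}_\tau(p,\mu)$. Consequently the infimum and supremum in the definitions of $\underline{v}_{p,\tau}(\mu)$ and $\overline{v}_{p,\tau}(\mu)$ both equal $0$, giving $\underline{v}_{p,\tau}(\mu) = \overline{v}_{p,\tau}(\mu) = 0$ for every $\mu$. Hence the feasible set of payoff pairs is the singleton $\{(0,0)\}$, its convex hull $C_\tau(p)$ is again $\{(0,0)\}$, and the $\tau$-efficient frontier is the single point $D_p = (0,0)$.

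There is essentially no technical obstacle here: the content of the lemma is that an informativeness ceiling below the treatment threshold renders the AI useless for both benefit and harm. The only point requiring care is that the inequality $\utau < \tfrac12$ is strict, so even under the tie-breaking rule that treats at exact indifference, the posterior is bounded strictly below $\tfrac12$ and the designer never treats; it is this strictness that guarantees $A = 0$ rather than leaving open a knife-edge case.
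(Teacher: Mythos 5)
Your proof is correct and follows exactly the reasoning the paper gives for this lemma (which it treats as trivial, noting only that the AI cannot induce a posterior above $1/2$, so the designer never treats and every feasible payoff pair is $(0,0)$). Your additional observation that $p<1/2$ is implied by $p\leq\utau<\tfrac12$ is also accurate.
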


In this trivial case, the AI cannot induce the designer to assign probability greater than $1/2$ to need of treatment, so the designer never treats.

\begin{lemma} \label{lemm:pConstrained} Suppose $p < 1/2$ and $\utau \geq 1/2$. Recall that $p^{(1)}_\tau = \frac{\utau - 2\ltau^2}{1+2\overline{\tau}-4\ltau}$,  and let  $p^{(2)}_\tau = \frac{\ltau+\utau}{2}$. Then:
\begin{itemize}
    \item[(a)] If $p\notin \left[p^{(1)}_\tau,p^{(2)}_\tau\right]$, the efficient frontier is the line segment connecting the distrust point $D$ to the trust point $T_\tau$.
        \item[(b)] If $p \in \left(p^{(1)}_\tau,p^{(2)}_\tau\right)$, the efficient frontier is the union of the line segment connecting $D$ to $H_\tau$ and the line segment connecting $H_\tau$ to $T_\tau$.\footnote{If $p^*_1=p$, the efficient frontier is the line segment connecting the hedge point $H_\tau$ to the distrust point $D$.}
\end{itemize}
\end{lemma}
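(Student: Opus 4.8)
The plan is to follow the four-step program of Section \ref{sec:ProofApproach} essentially verbatim, replacing admissibility by $\tau$-admissibility throughout, so that the payoff from inducing treatment on a mass $q$ becomes $b_\tau(p,q)$ (best case) and $w_\tau(p,q)$ (worst case) from Definitions \ref{def:BestSelection} and \ref{def:WorstSelection} rather than $b(p,q)$ and $w(p,q)$. The first move is to reduce to a binary covariate: for any $X_1'\sim\mu'$, let $q^*$ be the action mass the aligned AI induces and replace $X_1'$ by $X_1\sim\mbox{Ber}(q^*)$. As in the unconstrained proof, coarsening $X_1$ shrinks the set of action distributions available to the misaligned AI while leaving the aligned AI's optimum intact, so the resulting payoff pair weakly dominates; the argument uses only that a coarser signal is less informative and is insensitive to the $\tau$ bounds. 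Hence it suffices to trace the pairs $(\underline v_{p,\tau}(q),\overline v_{p,\tau}(q))$ generated by $X_1\sim\mbox{Ber}(q)$ as $q$ ranges over $[0,1]$ and take the undominated boundary of the convex hull.

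Fix $X_1\sim\mbox{Ber}(q)$. In the best case the aligned AI and benevolent Nature induce treatment of the $q$-mass at the highest feasible true rate, giving $\overline v_{p,\tau}(q)=b_\tau(p,q)$ (or $b_\tau(p,1-q)$ when the complement can be induced). Since $b_\tau(p,\cdot)$ increases on $[0,q_T]$ and decreases thereafter, the global best-case payoff is $b_\tau(p,q_T)=q_T(2\utau-1)$, attained at $q=q_T$, which fixes the best-case coordinate of $T_\tau$. In the worst case the decisive question is whether the misaligned AI can \emph{flip}, i.e.\ induce treatment on the complementary $(1-q)$-mass; this requires a $\tau$-admissible report assigning that mass posterior $\geq\tfrac12$, and solving the averaging constraint $q\ltau+(1-q)\tfrac12\leq p$ shows flipping is feasible exactly when $q\geq q_H=\frac{1-2p}{1-2\ltau}$. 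For $q<q_H$ only the $q$-mass can be treated, so $\underline v_{p,\tau}(q)=w_\tau(p,q)=q(2\ltau-1)$; for $q\geq q_H$ the flip is strictly worse for the designer, so $\underline v_{p,\tau}(q)=\min\{w_\tau(p,q),w_\tau(p,1-q)\}=w_\tau(p,1-q)$. Evaluating this dichotomy at $q=q_T$ and simplifying reproduces the threshold $p^{(1)}_\tau$: the misaligned AI can flip at $q_T$ iff $q_T\geq q_H$, which cross-multiplies to $p\geq p^{(1)}_\tau$.

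To assemble the boundary, note that on $[0,q_H]$ the pairs $\bigl(q(2\ltau-1),\,q(2\utau-1)\bigr)$ trace the segment from $D=(0,0)$ to $H_\tau$, and $q=q_T$ yields $T_\tau$. If $p<p^{(1)}_\tau$ then $q_T\leq q_H$, no flip occurs at or before $q_T$, and $T_\tau=\bigl(q_T(2\ltau-1),q_T(2\utau-1)\bigr)$ lies on the same ray as $\overline{DH_\tau}$; points with $q>q_T$ have strictly smaller best value and more negative worst value, hence are dominated, so the frontier collapses to the single segment $\overline{DT_\tau}$, giving part (a). If instead $p>p^{(1)}_\tau$ then $q_T>q_H$, the flip is active at the top, and $T_\tau$ sits to the left of that ray at the same height; the frontier is then $\overline{DH_\tau}\cup\overline{H_\tau T_\tau}$ precisely when $H_\tau$ is a genuine vertex, i.e.\ lies strictly above $\overline{DT_\tau}$, which reduces to $q_T\leq\tfrac12$, equivalently $p<p^{(2)}_\tau=\frac{\ltau+\utau}{2}$ (the value at which $w_\tau(q_T)$ changes branch). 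This is part (b). When $p\geq p^{(2)}_\tau$ (possible only if $\ltau+\utau<1$), $H_\tau$ falls on or below $\overline{DT_\tau}$ and the frontier is again the single segment $\overline{DT_\tau}$, completing part (a). The remaining interior pairs are checked to lie weakly below these segments, so the convex hull adds nothing. Finally, $H_\tau$ is only \emph{limit}-implemented because $\underline v_{p,\tau}(q)$ jumps discontinuously at $q=q_H$ as the flip turns on, so $H_\tau$ is approached as $q\uparrow q_H$, matching Lemma \ref{lemm:pImplement}.

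The main obstacle is Step~2: the exact characterization of which action masses are jointly inducible once one requires $\tau$-admissibility of \emph{both} the misaligned AI's report and the adversarial truth, together with measurability with respect to a binary $X_1$. Getting these feasibility inequalities right is what produces the clean closed forms $q_H$, $p^{(1)}_\tau$, and $p^{(2)}_\tau$, and the same care is needed for the slope bookkeeping that verifies the concavity (vertex) condition separating the one- and two-segment cases and confirms that all non-extreme values of $q$ yield dominated points. The hypotheses $p<\tfrac12$ and $\utau\geq\tfrac12$ enter exactly here: $\utau\geq\tfrac12$ guarantees that treatment can be induced at all, and $p<\tfrac12$ fixes the orientation of the inequalities; the mirror case $p>\tfrac12$ is symmetric and deferred to the appendix.
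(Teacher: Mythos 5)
Your proposal is correct and follows essentially the same route as the paper's own proof: reduce to binary $X_1$, compute the generalized best- and worst-case payoffs $b_\tau$ and $w_\tau$ for $X_1\sim\mbox{Ber}(q)$, identify the flip threshold $q_H=\frac{1-2p}{1-2\ltau}$ and the peak $q_T=\frac{p-\ltau}{\utau-\ltau}$, and read off the one- versus two-segment cases from the comparisons $q_T\lessgtr q_H$ (equivalent to $p\lessgtr p^{(1)}_\tau$) and $q_T\lessgtr\tfrac12$ (equivalent to $p\lessgtr p^{(2)}_\tau$). Your derivations of the thresholds and the limit-implementation of $H_\tau$ match the paper's appendix argument.
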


\begin{lemma} \label{lemm:pImplementConstrained} Suppose $p<1/2$ and $\utau\geq 1/2$. Then:
\begin{itemize}
    \item[(a)] The trust point $T_\tau$ is implemented by $X_1 \sim \mbox{Ber}\left(\frac{p-\ltau}{\utau -\ltau}\right)$.
    \item[(b)] The hedge point $H_\tau$ is limit-implemented by $X_1 \sim \mbox{Ber}\left(\frac{1-2p}{1-2\ltau}\right)$.
    \item[(c)] The distrust point $D$ is implemented by any constant $X_1$.
\end{itemize}
\end{lemma}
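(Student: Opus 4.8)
The plan is to reduce all three cases to the one-dimensional reformulation developed in Section \ref{sec:ProofApproach} and used to prove Lemma \ref{lemm:pConstrained}: for a binary covariate $X_1 \sim \mbox{Ber}(q)$, the best-case payoff equals the maximum of $b_\tau(q')$ and the worst-case payoff the minimum of $w_\tau(q')$, as $q'$ ranges over the action probabilities that some $\tau$-admissible report can induce. Because $X_1$ is binary, the only inducible action functions treat the event $\{X_1=1\}$, the event $\{X_1=0\}$, both, or neither; ``treat both'' is never inducible, since a posterior of at least $1/2$ on each event would force $P(Y=1)\geq 1/2 > p$, and ``treat neither'' yields $0$. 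So for each target distribution I only need to decide which of the two single-event treatments the AI can induce and then compare the associated values of $b_\tau$ and $w_\tau$. Part (c) is then immediate: a constant $X_1$ carries no information, so every admissible report assigns posterior $p<1/2$ to the single event, the designer never treats, and both payoffs equal $0=D$.

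For part (a) I would first record the feasibility threshold. Treating the smaller event $\{X_1=1\}$ (of mass $q$) requires an inducible posterior there of at least $1/2$, which is always possible once $\utau\geq 1/2$. Treating the larger event $\{X_1=0\}$ (of mass $1-q$) requires its maximal inducible posterior $\frac{p-q\ltau}{1-q}$ to be at least $1/2$, which rearranges to $q\geq q_H:=\frac{1-2p}{1-2\ltau}$. Specializing to $q=q_T=\frac{p-\ltau}{\utau-\ltau}$, a short computation gives $q_T\geq q_H \iff p\geq p^{(1)}_\tau$, which is exactly the case split in (\ref{eq:Ttau}). The best case is achieved by having the aligned AI place posterior $\utau$ on $\{X_1=1\}$ and $\ltau$ on $\{X_1=0\}$ (feasible because $q_T\utau+(1-q_T)\ltau=p$), yielding $b_\tau(q_T)$; to confirm this dominates treating the larger event I would check $b_\tau(q_T)\geq b_\tau(1-q_T)$, which factors cleanly as $(1-2q_T)(1-2\ltau)\geq 0$. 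The worst case is then $w_\tau(q_T)$ when $p<p^{(1)}_\tau$ (larger event not inducible) and $w_\tau(1-q_T)$ when $p\geq p^{(1)}_\tau$, using $w_\tau(1-q_T)\leq w_\tau(q_T)$, which holds because both equal a positive multiple of $(2\ltau-1)<0$ and $1-q_T>q_T$. These two values are precisely the coordinates of $T_\tau$.

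Part (b) exploits the same threshold from the other side. For every $q<q_H$ the larger event is not inducible, so the misaligned AI is confined to $\{X_1=1\}$ and the pair $(\underline{v},\overline{v})$ equals $(w_\tau(q),b_\tau(q))$; letting $q\uparrow q_H$ and using continuity of $w_\tau$ and $b_\tau$ in $q$ delivers the hedge point $H_\tau=(w_\tau(q_H),b_\tau(q_H))$. The reason this is only a \emph{limit}-implementation, and the main obstacle of the argument, is the discontinuity at $q_H$: at $q=q_H$ exactly the larger event becomes inducible, so $\mbox{Ber}(q_H)$ itself delivers the strictly worse worst case $w_\tau(1-q_H)<w_\tau(q_H)$ rather than $H_\tau$. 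The delicate step is therefore pinning down this threshold behavior precisely—showing that the larger event flips from non-inducible to inducible exactly at $q_H$, including careful treatment of the tie at a posterior of exactly $1/2$—while the remaining content reduces to the factored inequalities above and the continuity of $b_\tau$ and $w_\tau$.
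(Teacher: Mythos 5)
Your reduction---binary $X_1$, identify which of the two events each type of AI can induce treatment on, then read off $b_\tau$ and $w_\tau$ at the resulting treatment probabilities---is exactly the route the paper takes in Appendix E (via $BP_\tau(p)$, the feasible set $\mathcal{Q}(p,\mu_q)$, and the functions $\phi^\tau_B,\phi^\tau_W$), and your identification of the threshold $q_H=\frac{1-2p}{1-2\ltau}$ for inducibility of the larger event, the resulting case split at $p^{(1)}_\tau$, and the explanation of why the hedge point is only \emph{limit}-implemented all match the paper's argument.

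There is, however, one genuine gap: your verifications in part (a) silently assume $q_T\leq 1/2$, equivalently $p\leq\frac{\ltau+\utau}{2}=p^{(2)}_\tau$, whereas the lemma covers all $p<1/2$ and the paper splits off $p\in\left(p^{(2)}_\tau,\frac12\right)$ as a separate case (its ``case (d)'') precisely because the operative branches of $b_\tau$ and $w_\tau$ change there. Concretely, when $q_T>1/2$ one has $p>(1-q_T)\utau+q_T\ltau$, so $b_\tau(1-q_T)=(1-q_T)(2\utau-1)$ rather than the branch your factorization uses, and $b_\tau(q_T)-b_\tau(1-q_T)=(2q_T-1)(2\utau-1)\geq 0$; your stated criterion $(1-2q_T)(1-2\ltau)\geq 0$ is \emph{false} in that regime and, taken literally, would tell you the aligned AI prefers the larger event. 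Likewise $w_\tau(q_T)$ is then no longer ``a positive multiple of $2\ltau-1$'' but equals $2\left(p-(1-q_T)\utau\right)-q_T$, and the needed comparison becomes $w_\tau(1-q_T)-w_\tau(q_T)=(2q_T-1)(1-2\utau)\leq 0$. The conclusions you need survive in this regime, so the gap is repairable, but the argument as written does not cover it. A second, smaller imprecision: treating the event $\{X_1=1\}$ of mass $q$ is \emph{not} ``always possible once $\utau\geq\frac12$''---it requires $q\leq\frac{2(p-\ltau)}{1-2\ltau}$ (the paper's Lemma \ref{lemm:BPtau})---though this does hold at the particular values $q_T$ and $q<q_H$ that you actually use, so it does not affect the result.
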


\begin{figure}
    \includegraphics[scale=0.25]{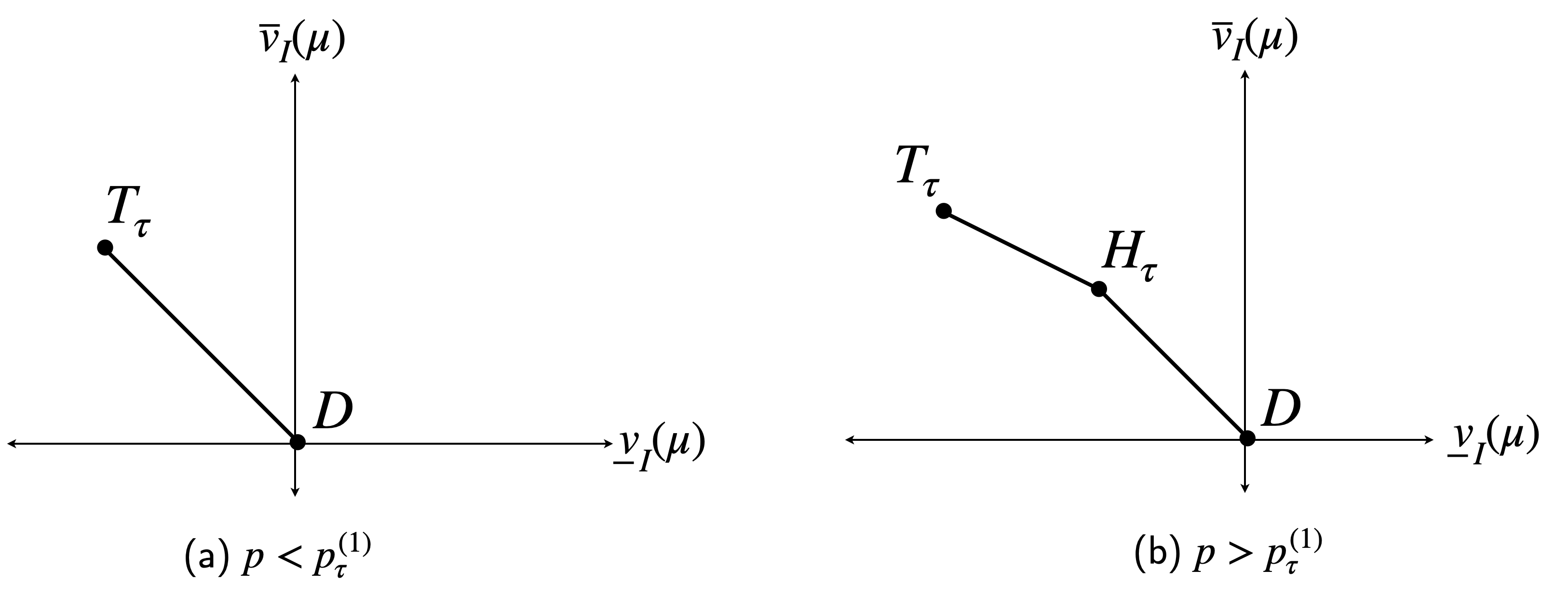}
    \caption{} \label{fig:taufrontier}
\end{figure}

This frontier is depicted in Figure \ref{fig:taufrontier}. As before, it is either a single line segment connecting the trust point to the distrust point, or two line segments with an intermediate hedge point.

To understand how $\tau$ influences the frontier, consider the simpler case in which it is a single line segment, as described in Part (a) of Lemma \ref{lemm:pConstrained}. When $\tau = (0,1)$ this line segment has  slope $-1$ (see Part (a) of Lemma \ref{lemm:p}). In   general, the  slope is $\frac{2\overline{\tau}-1}{2\underline{\tau}-1}$. When $\utau$ and $\ltau$ are symmetric around $1/2$---specifically, when $(\ltau,\utau)=(x,1-x)$ for some $0 \leq x \leq 1/2$---then the slope of this line segment is again $-1$. For these $(\ltau,\utau)$ pairs, the frontier is always a subset of the frontier described in Part (a) of Lemma \ref{lemm:pConstrained}, and the implementable part of that segment shortens as $x$ increases to $1/2$. Intuitively, tighter bounds reduce the AI's scope, much like coarsening  $X_1$ did in the previous section.

But when $\ltau$ and $\utau$ are asymmetric around $1/2$,  the tradeoff between the worst-case payoff and best-case payoff is no longer one-to-one. Specifically, when $\ltau+\utau >1$, the frontier is \emph{steeper} relative to when $\ltau+\utau=1$: each marginal increase in worst-case loss buys more best-case gain. Intuitively, this is because the maximum probability of treatment need lies farther above $1/2$ than the minimum probability lies below it, so the aligned AI has more power to effect a good outcome than the misaligned AI has to mislead the designer. Conversely, when $\ltau+\utau<1$, the lower bound is farther below $1/2$ than the upper bound is above it and the frontier flattens, reflecting a worse tradeoff.

The other impact of $\tau$ is on the trust point. At the trust point, the designer treats $\frac{p-\ltau}{\utau-\ltau}$ of the  patients. When the AI is aligned, a fraction $\utau$ of the treated patients truly need treatment, so the designer's payoff is $\left(\frac{p-\ltau}{\utau-\ltau}\right)(2\utau-1)$. This expression is increasing in $\utau$ and decreasing in $\ltau$: a more informative $X_1$ allows the aligned AI to generate a better payoff for the designer.

When these forces are considered together, a higher $\utau$ always benefits the designer: It improves the tradeoff between the worst-case and best-case, and it allows for a higher payoff when the AI is aligned. But there are two opposing effect to increasing $\ltau$: choosing a larger $\ltau$ improves the tradeoff between the worst-case and the best-case, but choosing a smaller $\ltau$ improves the best case payoff. In general the optimal choice of $(\ltau,\utau)$ can depend on the primitives in a complicated way, but the next section shows that the optimal choice admits a simple characterization when $p\leq 1/3$.

\subsection{Optimal choice of $\tau$} \label{sec:optimaltau}

Proposition \ref{prop:OptimalTau} characterizes the optimal choice of $\tau$ as a function of the designer's preference parameter $\alpha$, when the overall need for treatment $p$ is less than $1/3$.

\begin{proposition} \label{prop:OptimalTau} Suppose $p \leq 1/3 $. Define
$\ltau^\circ= 1-\sqrt{\frac{1-p}{2\alpha}}$ 
and
\[\underline{\alpha} (p) := \frac{1-p}{2} < \frac{1}{2(1-p)} =: \overline{\alpha}(p)\]
Then the optimal choices and utilities are given by:

\[\begin{array}{c|c|c|c|c}
\alpha \mbox{ } & \tau^* & \mu^* & u_1^* & \mbox{ extreme point} \\
\hline
\alpha < \underline{\alpha}(p) & (0,1) & X_1 \sim \mbox{Ber}(p)&  p(1-2\alpha) & \mbox{ trust} \\
\underline{\alpha}(p) < \alpha < \overline{\alpha}(p) & (\ltau^\circ,1) & X_1 \sim \mbox{Ber}\left(\frac{p-\ltau^\circ}{1-\ltau^\circ}\right)&  \frac{p-\ltau^\circ}{1-\ltau^\circ} (1-\alpha + \alpha(2\ltau^\circ -1)) & \mbox{ trust} \\
\alpha > \overline{\alpha}(p) & (p,1) & X_1 \mbox{ constant} & 0 & \mbox{ distrust} 
\end{array}\]
\end{proposition}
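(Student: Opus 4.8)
The plan is to leverage the frontier characterization in Lemmas \ref{lemm:pConstrained} and \ref{lemm:pImplementConstrained}. For any fixed $\tau$ with $\utau\ge 1/2$, the designer's objective $\alpha\underline{v}_{p,\tau}(\mu)+(1-\alpha)\overline{v}_{p,\tau}(\mu)$ is linear in $(\underline{v},\overline{v})$, so its maximum over $\mu$ is attained at an extreme point of $F_\tau(p)$: the distrust point $D=(0,0)$, the trust point $T_\tau$, or (when $p\in(p^{(1)}_\tau,p^{(2)}_\tau)$) the hedge point $H_\tau$. Bounds with $\utau<1/2$ force the frontier down to $D$ and can be ignored. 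Since distrust always yields $0$, the whole problem reduces to
\[\sup_{\tau}\ \max\bigl\{\alpha T_\tau^1+(1-\alpha)T_\tau^2,\ \alpha H_\tau^1+(1-\alpha)H_\tau^2,\ 0\bigr\}.\]
I would (i) show that on the trust branch it is optimal to take $\utau=1$, (ii) solve the resulting one-dimensional problem in $\ltau$, and (iii) verify that the hedge branch never improves on the result.

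For step (i), at the trust point the treated fraction is $q_T=\frac{p-\ltau}{\utau-\ltau}$, and since $q_T\utau+(1-q_T)\ltau=p$ one has $T_\tau^2=b_\tau(q_T)=q_T(2\utau-1)$, while $T_\tau^1=w_\tau(q_T)=q_T(2\ltau-1)$ whenever $p\le p^{(2)}_\tau$ (which holds near $\utau=1$, since then $p^{(2)}_\tau\ge 1/2$). The trust objective is therefore
\[V(\ltau,\utau)=\frac{p-\ltau}{\utau-\ltau}\bigl(2\alpha\ltau+2(1-\alpha)\utau-1\bigr),\]
and differentiation gives $\partial V/\partial\utau=(p-\ltau)\tfrac{1-2\ltau}{(\utau-\ltau)^2}\ge 0$ because $\ltau\le p\le 1/3<1/2$; hence pushing $\utau$ to $1$ is weakly optimal. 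At $\utau=1$ the threshold $p^{(1)}_\tau=\frac{1-2\ltau^2}{3-4\ltau}$ is increasing in $\ltau$ on $[0,1/2)$ (its derivative has numerator $4(2\ltau-1)(\ltau-1)>0$), so $p^{(1)}_\tau\ge 1/3\ge p$ (its value at $\ltau=0$ being $1/3$); thus no hedge point is present and the frontier is the single trust--distrust segment of Lemma \ref{lemm:pConstrained}(a).

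For step (ii), setting $\utau=1$ leaves
\[f(\ltau):=V(\ltau,1)=\frac{p-\ltau}{1-\ltau}\bigl(1-2\alpha+2\alpha\ltau\bigr),\qquad \ltau\in[0,p],\]
with $f(0)=p(1-2\alpha)$ and $f(p)=0$. Clearing denominators, the first-order condition becomes the upward-opening quadratic $2\alpha\ltau^2-4\alpha\ltau+(p-1+2\alpha)=0$, i.e. $(\ltau-1)^2=\frac{1-p}{2\alpha}$, whose root in $[0,1)$ is exactly $\ltau^\circ=1-\sqrt{\frac{1-p}{2\alpha}}$; the sign of $f'$ (which matches that of the quadratic) shows $\ltau^\circ$ is the maximizer when it lies in $[0,p]$. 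The boundary conditions give the thresholds directly: $\ltau^\circ\ge 0\iff\alpha\ge\underline{\alpha}(p)=\frac{1-p}{2}$ and $\ltau^\circ\le p\iff\alpha\le\overline{\alpha}(p)=\frac{1}{2(1-p)}$. Hence the maximizer is $\ltau=0$ for $\alpha<\underline{\alpha}(p)$ (value $p(1-2\alpha)$, implemented by $X_1\sim\mbox{Ber}(p)$), the interior $\ltau^\circ$ for $\alpha\in(\underline{\alpha}(p),\overline{\alpha}(p))$ (value $\frac{p-\ltau^\circ}{1-\ltau^\circ}\bigl(1-\alpha+\alpha(2\ltau^\circ-1)\bigr)$, implemented by $X_1\sim\mbox{Ber}(\frac{p-\ltau^\circ}{1-\ltau^\circ})$ via Lemma \ref{lemm:pImplementConstrained}(a)), and $\ltau=p$ for $\alpha>\overline{\alpha}(p)$ (giving $q_T=0$, a constant $X_1$, and value $0$). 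Substituting $1-\ltau^\circ=\sqrt{\frac{1-p}{2\alpha}}$ shows the interior value equals $q_T\bigl(1-\sqrt{2\alpha(1-p)}\bigr)$, which is positive precisely when $\alpha<\overline{\alpha}(p)$ and vanishes at $\overline{\alpha}(p)$, confirming that trust strictly beats distrust below $\overline{\alpha}(p)$ and ties it there.

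The main obstacle is step (iii), ruling out the hedge branch. Although $\utau=1$ admits no hedge point, a competitor $\tau$ with $\utau<1$ can satisfy $p\in(p^{(1)}_\tau,p^{(2)}_\tau)$, so $H_\tau$ is the optimal extreme point of $F_\tau(p)$ for intermediate $\alpha$; I must then show $\alpha H_\tau^1+(1-\alpha)H_\tau^2\le\max_{\ltau}f(\ltau)$. Because the hedge point uses the treated fraction $q_H=\frac{1-2p}{1-2\ltau}$ and a different branch of $b_\tau,w_\tau$ than the trust point, this does not follow from the $\partial V/\partial\utau\ge 0$ monotonicity and requires a separate computation: substitute the hedge coordinates and either compare the resulting expression algebraically with the trust value at $\utau=1$, or argue that raising $\utau$ to $1$ collapses the hedge kink into the weakly-better trust point (the pattern already seen at $\utau=1$, where the hedge point disappears). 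I expect this domination argument to be the one genuinely laborious part; the rest is the linear-programming reduction together with the single-variable optimization above.
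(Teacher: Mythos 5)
Your core computation coincides with the paper's: the reduction to extreme points of $F_\tau(p)$, the monotonicity $\partial V/\partial\utau\ge 0$ on the Case-1 trust branch (the paper's Fact \ref{fact:T1}), the quadratic first-order condition $2\alpha(\ltau-1)^2=1-p$ giving $\ltau^\circ$, and the boundary conditions producing $\underline{\alpha}(p)$ and $\overline{\alpha}(p)$ are all exactly the paper's Lemma \ref{lemm:T1}. Your sketch for eliminating the hedge branch is also the paper's actual argument (Lemma \ref{lemm:HedgeNotOptimal}): compare $U_H(\ltau,\utau)$ with the Case-1 trust formula at the \emph{same} $(\ltau,\utau)$ using $\frac{1-2p}{1-2\ltau}<\frac{p-\ltau}{\utau-\ltau}$ on the relevant region, then chain through $U_T^1(\ltau,1)\le U_T^1(\ltau^\circ,1)$; and when the common factor $\alpha(2\ltau-1)+(1-\alpha)(2\utau-1)$ is negative the hedge point is already beaten by distrust. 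So the step you flag as "genuinely laborious" is a short computation, not an obstacle.

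The genuine gap is elsewhere: you misidentify which trust point you are optimizing. By (\ref{eq:Ttau}), $T_\tau^1=w_\tau(q_T)=q_T(2\ltau-1)$ only when $p< p^{(1)}_\tau$; for $p^{(1)}_\tau\le p$ the misaligned AI can induce treatment on the complementary event and the trust point becomes $\bigl(w_\tau(1-q_T),b_\tau(q_T)\bigr)$ with first coordinate $(1-q_T)(2\ltau-1)$. Your stated condition $p\le p^{(2)}_\tau$ is the condition for which \emph{branch of the function} $w_\tau$ applies at the argument $q_T$, not the condition for which \emph{argument} enters the trust point. Consequently your objective $V(\ltau,\utau)$ and its $\utau$-monotonicity cover only the region $\mathcal{T}_1=\{\tau: p\le p^{(1)}_\tau\}$, and the candidates $\tau\in\mathcal{T}_2\cup\mathcal{T}_3$ whose trust point takes the Case-2 form are never ruled out. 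The paper needs two additional lemmas for this (Lemmas \ref{lemm:T2NotOptimal} and \ref{lemm:T3NotOptimal}): for $\alpha\ge 1/2$ the Case-2 trust payoff is negative, while for $\alpha<1/2$ one uses that $U_T^2$ is increasing in $\utau$ and decreasing in $\ltau$ to bound it by $U_T^2(0,1)=p-\alpha<p(1-2\alpha)$ when positive (using $p<1/3$), with a separate step for $\mathcal{T}_3$ showing $(\ltau,1)$ lands back in $\mathcal{T}_1\cup\mathcal{T}_2$. These comparisons are not hard, but they do not follow from anything in your proposal, and without them the claimed optimum has not been shown to dominate all feasible $(\tau,\mu)$.
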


\begin{figure}[h]
\begin{center}
\includegraphics[scale=0.3]{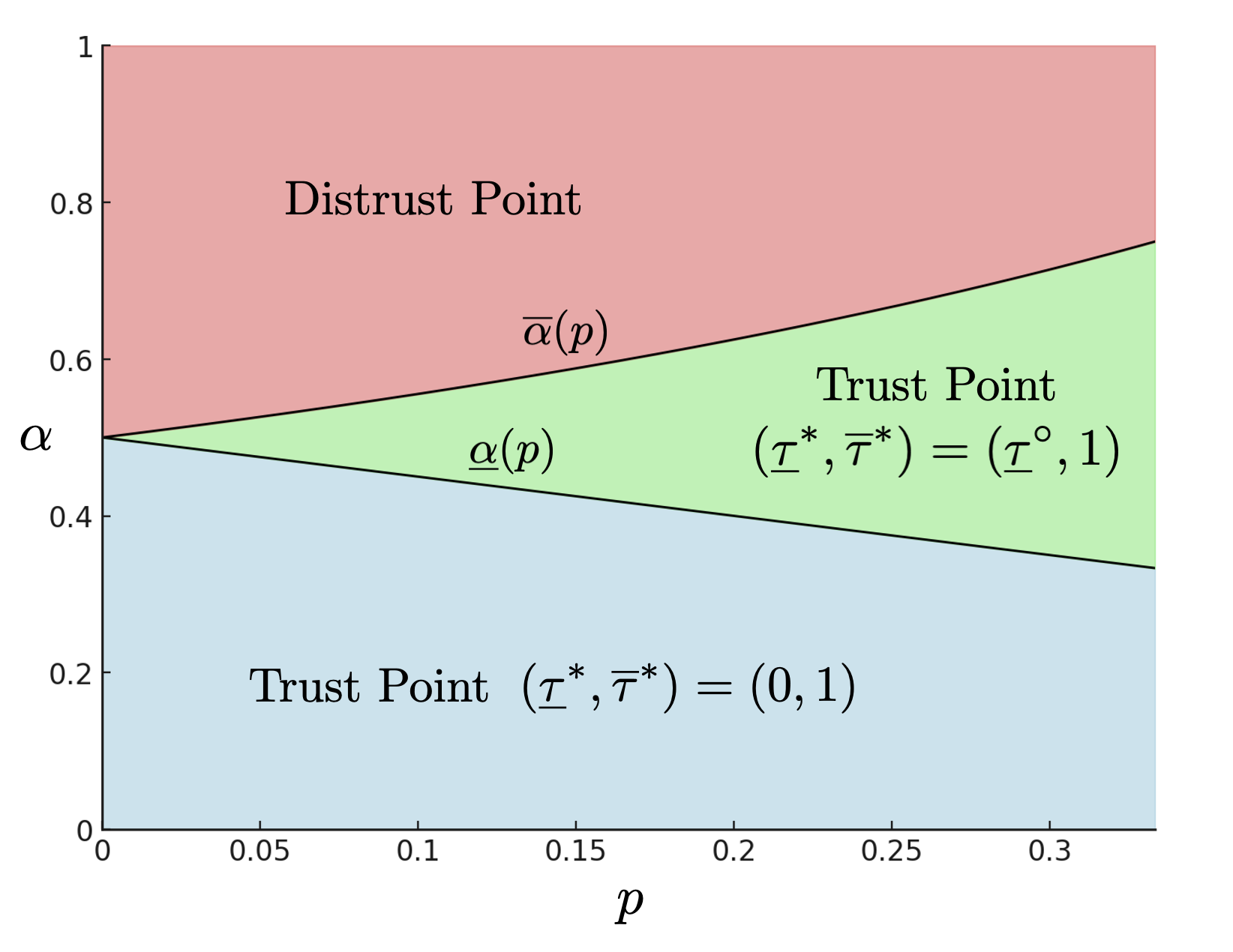}
\caption{Depiction of the optimal $\tau$ and extreme point when a fraction $p$ of patients need treatment and the designer places weight $\alpha$ on the worst-case payoff.}
\end{center}
\end{figure}

Unlike in the previous sections, here the frontier is itself endogenous. That is, the designer chooses both the frontier's shape (via $\tau$) and also the point to implement on it (via $\mu$). We show that no matter  the designer's $\alpha$, they  will always choose a $\tau$ that leads to the frontier taking the form of a single line segment from the trust point to the distrust point $(w_\tau(q_T),b_\tau(q_\tau))$. When $\alpha$ (the weight on the worst case) is sufficiently large,  the designer implements the distrust point by giving the AI no information. 

Otherwise, the designer implements the trust point, but the payoffs at this point vary depending on how the designer's  choice of $\tau$. When $\alpha$ is sufficiently small,  the optimal choice is $\tau=(0,1)$, corresponding to a covariate $X_1$ with maximum potential for informativeness. The optimal  distribution of this $X_1$ is Bernoulli with rate parameter $p$, so the solution in this case is the same as that given by Proposition \ref{thm:support} where    $\tau=(0,1)$.  For larger $\alpha$, the designer sets $\utau=1$ but chooses an interior $\ltau = 1-\sqrt{\frac{1-p}{2\alpha}}$ as a function both of the prior probability $p$ and also of the preference parameter $\alpha$. The distribution of $X_1$ is Bernoulli with rate parameter $\frac{p-\ltau^\circ}{1-\ltau^\circ}$. That is, the designer chooses a covariate that isolates a $\frac{p-\ltau^\circ}{1-\ltau^\circ}$ fraction of the population that needs treatment, with  the remaining patients needing treatment with probability $\ltau$. As $\alpha$ grows larger, the fraction of treated patients grows smaller, and the probability of need of treatment for the remaining patients grows larger. 

\begin{figure}[h]
    \includegraphics[scale=0.23]{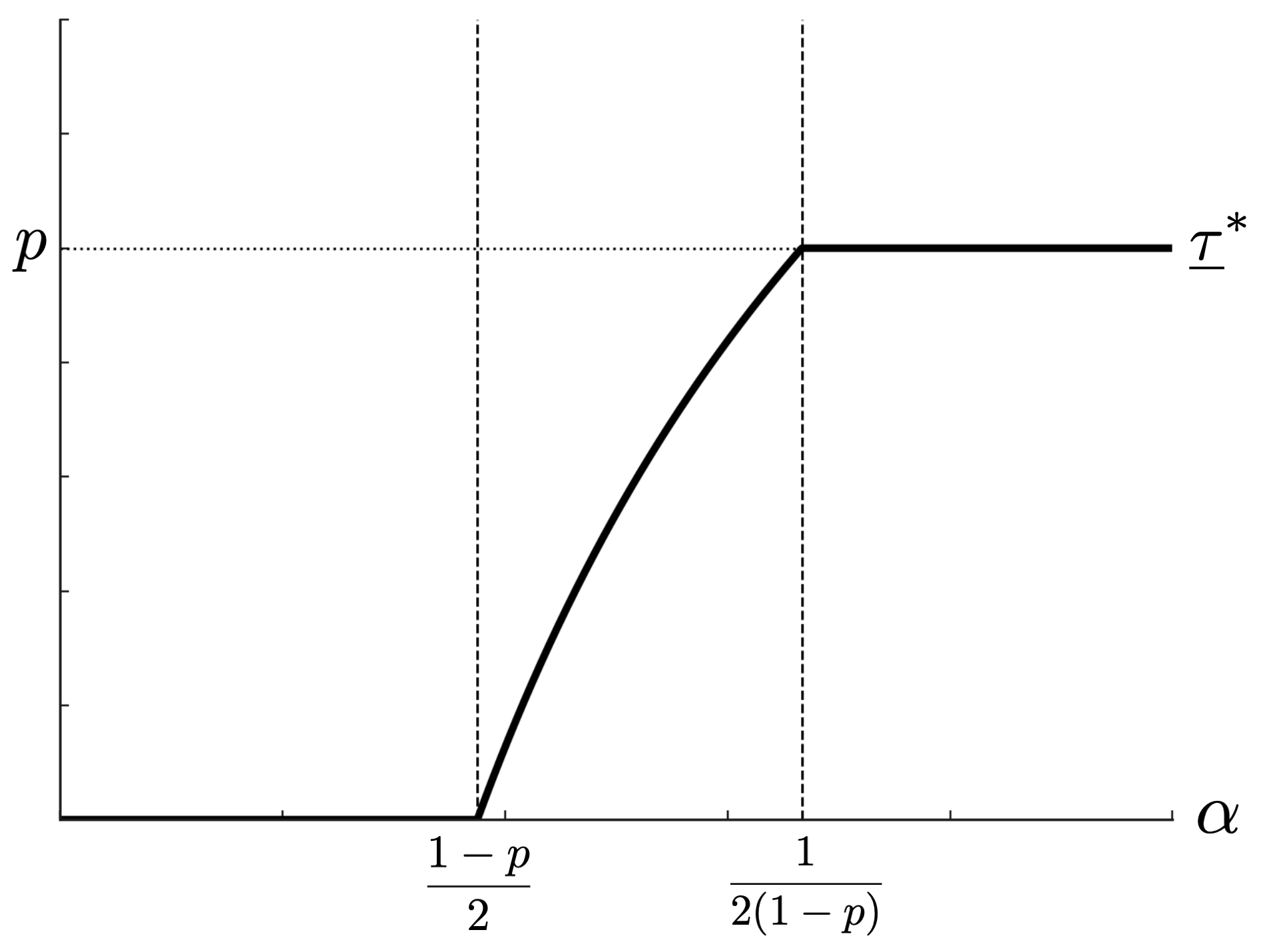}
    \caption{Plot of the optimal choice of lower bound $\ltau^*$}
\end{figure}

\section{Conclusion}

Our analysis leaves open several interesting questions for future work. First, this paper  considers a completely aligned or completely misaligned AI. It would be  interesting to formalize ``partial alignment'' and explore what its consequences would be.  Second, we conduct our analysis in the simplest possible decision setting: a one-time decision of which patients to treat. Future work could consider repeated interaction with an AI and more general state spaces and payoff functions. Third, we suppose that Nature's choice of distribution (i.e., the true relationship between attributes and treatment need) is as favorable or unfavorable as possible or (in Appendix \ref{app:AlternativePref}) a lottery between these two extremes. Future work could relax this assumption by  constraining the space of conceivable distributions given other observables, for example by supposing that our informativeness bounds $\ltau$ and $\utau$ are a function of the richness of the covariate space $\mathcal{X}_1$.

\appendix
\section{Preliminaries}

\begin{lemma} \label{lemm:Restate} The worst-case payoff is \begin{align*} \label{eq:RestatedWorstCase}
\underline{v}_I(\mu) & = \inf_{\pi_1,\pi_1^*,\pi_0,\pi_0^*,q \in [0,1]} q(2\pi_1^*-1) \mbox{ s.t. }\\
& \quad \quad \quad \quad 
\begin{array}{cc}
q\pi_1^* + (1-q)\pi_0^*=p & \mbox{(BP-1)} \\
q\pi_1 + (1-q)\pi_0=p & \mbox{(BP-2)} \\
\pi_1 \geq 1/2 > \pi_0 & \mbox{(IC)} \\
q = \mu(E) \mbox{ for some } E \subseteq \mathcal{X}_1 & \mbox{(M)}
\end{array}
\end{align*}
and the best-case payoff is
\begin{equation*}
\overline{v}_I(\mu) = \sup_{\pi_1,\pi_1^*,\pi_0,\pi_0^*,q} q(2\pi_1^*-1) \mbox{ s.t. (BP-1),(BP-2),(IC),(M)}.
\end{equation*}
\end{lemma}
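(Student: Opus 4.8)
The plan is to treat this as a reduction lemma: I would show that the infinite-dimensional program defining $\underline{v}_I(\mu)$ (and $\overline{v}_I(\mu)$) over admissible pairs $(P,P^*)$ has the same value as the finite-dimensional program in the statement, by exhibiting a value-preserving correspondence between admissible pairs and feasible tuples $(\pi_1,\pi_1^*,\pi_0,\pi_0^*,q)$. Throughout $\mathcal{X}_0$ is the relevant singleton, so $\mu$ is a distribution on $\mathcal{X}_1$ and $p=p_{x_0}$. For any admissible report $P$, write $E=\{x_1\in\mathcal{X}_1 : P(Y=1\mid X_1=x_1)\geq 1/2\}$ for the treatment region, so the induced action is $A=\mathbbm{1}(X_1\in E)$ and $q:=\mu(E)=P(A=1)=P^*(A=1)$, the last equality because $P$ and $P^*$ share the marginal $\mu$ on $\mathcal{X}_1$. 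Put $\pi_1=P(Y=1\mid A=1)$, $\pi_0=P(Y=1\mid A=0)$, $\pi_1^*=P^*(Y=1\mid A=1)$, $\pi_0^*=P^*(Y=1\mid A=0)$. The conceptual point that makes the reduction clean is that the payoff depends on $P$ only through the region $E$ and on $P^*$ only through its conditional behavior on $E$, so the joint optimization over $(P,P^*)$ decouples and produces a program with no cross-constraint linking the starred and unstarred variables.

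First I would establish the forward inclusion: every admissible $(P,P^*)$ produces a feasible tuple with matching objective. Constraint (M) is immediate from $q=\mu(E)$. Constraints (BP-1) and (BP-2) are the law of total probability applied to $P^*(Y=1)=p$ and $P(Y=1)=p$, split across $\{A=1\}$ and $\{A=0\}$. Constraint (IC) is an averaging argument: $\pi_1$ is a $\mu$-weighted average of the conditionals $P(Y=1\mid X_1=x_1)$ over $x_1\in E$, each $\geq 1/2$ by the definition of $E$, so $\pi_1\geq 1/2$; symmetrically $\pi_0<1/2$. Finally, since $A$ is a deterministic function of $X_1$, $U(P,P^*)=P^*(A=1,Y=1)-P^*(A=1,Y=0)=q\pi_1^*-q(1-\pi_1^*)=q(2\pi_1^*-1)$, which is the objective.

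Next I would establish the reverse inclusion: every feasible tuple is realized by some admissible pair with the same value. Given $(\pi_1,\pi_1^*,\pi_0,\pi_0^*,q)$ satisfying the constraints, use (M) to fix $E\subseteq\mathcal{X}_1$ with $\mu(E)=q$, and define $P,P^*$ to have marginal $\mu$ on $\mathcal{X}_1$ with constant conditionals $P(Y=1\mid X_1=x_1)=\pi_1$ and $P^*(Y=1\mid X_1=x_1)=\pi_1^*$ for $x_1\in E$, and $=\pi_0,\pi_0^*$ respectively for $x_1\notin E$. Admissibility reduces to $q\pi_1+(1-q)\pi_0=p$ and $q\pi_1^*+(1-q)\pi_0^*=p$, which are exactly (BP-2) and (BP-1). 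By (IC) the designer's posterior under $P$ is $\pi_1\geq 1/2$ on $E$ (treat, under the tie-break in \eqref{eq:Action}) and $\pi_0<1/2$ off $E$ (do not treat), so the induced action region is exactly $E$ and again $U(P,P^*)=q(2\pi_1^*-1)$.

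Combining the two inclusions, the sets of achievable objective values coincide, so their infima agree, yielding the stated expression for $\underline{v}_I(\mu)$; the identical argument with suprema yields $\overline{v}_I(\mu)$. The main obstacle is the degenerate endpoints $q\in\{0,1\}$, where one conditional posterior is undefined because we condition on a null event. I would dispatch these by observing that (IC) together with (BP) forces $q=0$ only when $p<1/2$ (take $\pi_0=p$) and $q=1$ only when $p\geq1/2$ (take $\pi_1=p$), so the constant-posterior construction still yields the intended region, and the objective $q(2\pi_1^*-1)$ vanishes at $q=0$ and is otherwise continuous through the boundary. Everything else is routine accounting with total probability.
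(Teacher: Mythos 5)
Your proposal is correct and follows essentially the same route as the paper's proof: both directions are established by the same correspondence, with the forward map conditioning on the treatment region $A_1=\{x_1: P(Y=1\mid X_1=x_1)\ge 1/2\}$ and the reverse map constructing $P,P^*$ with constant conditionals on $E$ and its complement. Your extra care with the degenerate endpoints $q\in\{0,1\}$ is a minor refinement the paper leaves implicit, but the substance of the argument is identical.
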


\begin{proof}
Fix any $X_1 \sim \mu$. We will show that there exist admissible $P,P^* \in \Delta(\mathcal{X}_1 \times \mathcal{Y})$ yielding  expected payoff  $v$ if and only if $v  = q(2\pi_1^*-1)$ for some $q,\pi_1,\pi_1^*,\pi_0,\pi_0^* \in [0,1]$ satisfying (BP-1),(BP-2),(IC), and (M).  

For the only-if direction, fix any admissible $P,P^* \in \Delta(\mathcal{X}_1 \times \mathcal{Y})$. Then the designer takes action $a=1$ on the event
\[A_1 := \left\{x_1 \in \mathcal{X}_1 : P(Y=1 \mid X_1 = x_1) \geq \frac12\right\}\]
and $a=0$ on the complementary event $A_0 := \mathcal{X}_1 \backslash A_1$. Define $q=P^*(A_1)$ to be the ex-ante probability with which treatment is recommended, and
 \begin{align*}
     \pi_1^* & = P^*(Y=1 \mid A_1) \quad \quad \pi_1 = P(Y=1 \mid A_1)  \\
      \pi_0^* & = P^*(Y=1 \mid A_0) \quad \quad \pi_0 = P(Y=1 \mid A_0)
 \end{align*}
  to be the conditional probabilities of need of treatment when treatment is or isn't recommended, respectively according to the distributions $P^*$ and $P$. 
  
The designer's expected payoff is
\begin{align*}
 \mathbb{E}_{P^*}[u(Y,a(X,P))] & = P^*(A_1) P^*(Y=1 \mid A_1 ) - P^*(A_1) P^*(Y=0 \mid A_1) \\
 & = P^*(A_1) ( 2P^*(Y=1 \mid A_1) - 1) = q(2\pi_1^*-1)
\end{align*}
so it remains to show that $q,\pi_1^*,\pi_0^*,\pi_1,\pi_0$ satisfy (BP-1), (BP-2), (IC), and (M).

By Part (1) of admissibility, $P^*(A_1) = \mu(A_1)$, so (M) follows immediately by definition of $A_1$ as a subset of $\mathcal{X}_1$. Moreover the definitions of $q$ and $\pi_0,\pi_1$ imply (IC). Finally (BP-1) and (BP-2) follow from Part (2) of admissibility, which requires  $P(Y=1)=P^*(Y=1)=p$.

To show the other direction, suppose there are $q,\pi_1^*,\pi_0^*,\pi_1,\pi_0$ satisfying (BP-1), (BP-2), (IC), and (M), where $v=q(2\pi_1^*-1)$. By (M) there is an event $E \subseteq \mathcal{X}_1$ such that $q=\mu(E)$. Now define distributions $P^*,P \in \Delta(\mathcal{X}_1 \times \mathcal{Y})$ such that their marginal distribution on $\mathcal{X}_1$ is $\mu$, and
\[P^*(Y=1 \mid X_1 =x_1 ) = \left\{\begin{array}{cc}
\pi^*_1 & \mbox{ if } x_1 \in E \\
\pi^*_0 & \mbox{ if } x_1 \notin E 
\end{array}\right.\]
\[P(Y=1 \mid X_1 =x_1 ) = \left\{\begin{array}{cc}
\pi_1 & \mbox{ if } x_1 \in E \\
\pi_0 & \mbox{ if } x_1 \notin E 
\end{array}\right.\]
Since by assumption, $\mbox{marg}_{\mathcal{X}_1} P = \mbox{marg}_{\mathcal{X}_1} P^* = \mu$, both $P$ and $P^*$ satisfy Part (1) of admissibility. Moreover (BP-1) and (BP-2) ensure Part (2). Finally, since $\pi_1 \geq 1/2 > \pi_0$ by (IC), $E$ is precisely the event on which treatment is recommended given $P$. So the designer's expected payoff is
\[P^*(E)P^*(Y=1 \mid E) - P^*(E) P^*(Y=0 \mid E) = q(2\pi_1^* -1) = v\]
as desired.
\end{proof}
\section{Proof of Lemmas \ref{lemm:p} and \ref{lemm:pImplement} } \label{appendix A}

By assumption, $\mathcal{X}_0$ is  a singleton. To ease notation we will write $p\equiv p_{x_0}$ for the prior probability of $Y=1$, and $\mathcal{P}(p,\mu)$ for the set of joint distributions $P$ that are admissible.

\subsection{Plan of the proof}

We will solve for $\underline{v}_I(\mu)$ and $\overline{v}_I(\mu)$ as given in Lemma \ref{lemm:Restate}. 
These were defined as an  infimum and a supremum over the variables $\pi_1,\pi_1^*,\pi_0,\pi_0^*,$ and $q \in [0,1]$ given the constraints  (BP-1),(BP-2),(IC), and (M). 
Say that $q$ is feasible if there exist $\pi_1^*, \pi_0^*, \pi_1,\pi_0$ that satisfy these constraints. Since we can always take $\pi_1^*=\pi_1$ and $\pi_0^*=\pi_0$, the constraint (BP-1) is redundant here, and the feasible set of treatment probabilities is \begin{equation} \label{eq:feasibleq}
\mathcal{Q}(p,\mu) = \{q \in [0,1]: \exists (\pi_1,\pi_0) \mbox{ s.t. (BP-2),(IC),(M)}\}.
\end{equation} 
Section \ref{sec:Feasibleq}  characterizes this set.

Section \ref{sec:phi} fixes an arbitrary feasible treatment probability $q \in \mathcal{Q}(p,\mu)$. Since the remaining payoff-relevant variable $\pi_1^*$ is constrained only by (BP-1), we provide closed-form solutions for
\begin{align*}
   \phi_W(q) \equiv \min_{\pi_1^*,\pi_0^* \in [0,1]} q(2\pi_1^* -1) \mbox{ s.t. (BP-1)}\\
   \phi_B(q) \equiv \max_{\pi_1^*,\pi_0^* \in [0,1]} q(2\pi_1^*  - 1) \mbox{ s.t. (BP-1)}
\end{align*}

Putting these parts together, the worst-case and best-case problems in Lemma \ref{lemm:Restate} can be restated as
\begin{align*}
\underline{v}_I(\mu) &= \inf_{q \in \mathcal{Q}(p,\mu)} \phi_W(q) \\
\overline{v}_I(\mu) &= \sup_{q \in \mathcal{Q}(p,\mu)} \phi_B(q)    
\end{align*}

Varying over $\mu$ thus produces a set of possible values for $(\underline{v}_I(\mu),\overline{v}_I(\mu))$, and we finally solve for the $\mu$ that implement $(\underline{v}_I(\mu),\overline{v}_I(\mu))$ that are undominated in the set
\[\mbox{conv}\{(\underline{v}_I(\mu),\overline{v}_I(\mu)) : \mu \in M\}\]
where $M$ is the set of all finitely supported measures. Section \ref{sec:2point} demonstrates that it is sufficient to consider distributions $\mu$ with a two-point support, and Section \ref{sec:frontier} uses this reduction to characterize the frontier. 

\subsubsection{The feasible values of $q$} \label{sec:Feasibleq}

We first solve for $\mathcal{Q}(p,\mu)$ as defined in (\ref{eq:feasibleq}), i.e., the values of $q$ satisfying (BP-2), (IC), and (M). Observe that we can decompose this set as $\mathcal{Q}(p,\mu) = BP(p) \cap \mathcal{E}(\mu)$ where
\[BP(p) = \{ q\in [0,1] : \exists (\pi_1,\pi_0) \mbox{ s.t. (BP-2) and (IC)}\}\]
are the probabilities with which treatment can  be recommended in an incentive-compatible way, and
\[\mathcal{E}(\mu) = \{q \in [0,1] : q \mbox{ satisfies (M)}\} = \left\{ \mu(E) : E \subseteq \mathcal{X}_1\right\}\]
is the set of all attainable event probabilities under $\mu$. 

The following lemma characterizes $BP(p)$.

\begin{lemma} \label{lemm:BP} For any $p\in [0,1]$, 
\[BP(p) = \left\{\begin{array}{cc}
 [0,2p] & \mbox{ if } 0\leq p < 1/2  \\
(0,2p] & \mbox{ if } p=1/2 \\
 (2p-1,1] & \mbox{ if }  1/2 < p \leq 1
\end{array}\right.\]
\end{lemma}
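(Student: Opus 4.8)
The plan is to fix the treatment probability $q$ and ask for which priors $p$ the constraints (BP-2) and (IC) admit a solution $(\pi_1,\pi_0)$; inverting this relationship then yields $BP(p)$. Constraint (BP-2) writes $p$ as the convex combination $q\pi_1+(1-q)\pi_0$, while (IC) restricts $\pi_1$ to $[1/2,1]$ and $\pi_0$ to $[0,1/2)$. Since the map $(\pi_1,\pi_0)\mapsto q\pi_1+(1-q)\pi_0$ is continuous on a connected domain, its image is an interval, and $q\in BP(p)$ exactly when $p$ lies in that interval, so it suffices to compute the interval's endpoints and check whether they are attained.

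First I would handle an interior $q\in(0,1)$. The minimum of $q\pi_1+(1-q)\pi_0$ is attained at $(\pi_1,\pi_0)=(1/2,0)$ and equals $q/2$; since $\pi_0=0$ is admissible, this endpoint is included. The supremum is approached by taking $\pi_1=1$ and letting $\pi_0$ increase to $1/2$, giving $\frac{1+q}{2}$; because $1-q>0$ and the value $\pi_0=1/2$ is excluded by the strict inequality in (IC), this endpoint is \emph{not} attained. Hence for $q\in(0,1)$ the image is the half-open interval $\left[\tfrac{q}{2},\tfrac{1+q}{2}\right)$, and $q\in BP(p)$ iff $\tfrac{q}{2}\le p<\tfrac{1+q}{2}$, equivalently $2p-1<q\le 2p$.

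Next I would treat the boundary values $q\in\{0,1\}$ separately, since these are precisely where the degenerate behavior of the strict constraint appears. When $q=0$ the objective reduces to $\pi_0\in[0,1/2)$ (with $\pi_1$ irrelevant), so $0\in BP(p)$ iff $p<1/2$. When $q=1$ the objective reduces to $\pi_1\in[1/2,1]$ (with $\pi_0$ now dropping out entirely, so its strict bound is vacuous), so $1\in BP(p)$ iff $p\ge 1/2$; note in particular that $p=1$ is feasible here, even though the interior formula would have excluded the upper endpoint.

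Finally I would assemble the three cases by intersecting the interior condition $2p-1<q\le 2p$ with $[0,1]$ and appending the boundary conclusions. For $0\le p<1/2$ the lower bound $2p-1$ is negative and $2p<1$, so together with $q=0$ feasible and $q=1$ infeasible one gets $[0,2p]$. For $p=1/2$ the interior condition gives $(0,1)$, $q=0$ drops out and $q=1$ is added, yielding $(0,1]=(0,2p]$. For $1/2<p\le 1$ the upper bound $2p$ exceeds $1$ while $2p-1>0$, so with $q=1$ feasible one obtains $(2p-1,1]$. The main obstacle is the bookkeeping of endpoint openness: the single strict inequality $\pi_0<1/2$ is what forces the intervals to be half-open, and the degeneracy at $q\in\{0,1\}$---where one of the two variables drops out of the objective---must be verified directly rather than read off the interior formula.
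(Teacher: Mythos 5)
Your proof is correct and is essentially the paper's argument viewed from the dual angle: the paper fixes $p$ and computes the infimum and supremum of $q=\frac{p-\pi_0}{\pi_1-\pi_0}$ over the feasible $(\pi_0,\pi_1)$ using monotonicity of that ratio, whereas you fix $q$, compute the half-open interval of achievable $p=q\pi_1+(1-q)\pi_0$, and invert. The endpoint bookkeeping and the resulting characterization coincide; your explicit verification of the degenerate cases $q\in\{0,1\}$, where one variable drops out of the objective, is a small point the paper handles only implicitly through the ratio form.
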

\noindent This set is depicted in Figure \ref{fig:Qp}.

\begin{figure}[h]
    \includegraphics[scale=0.22]{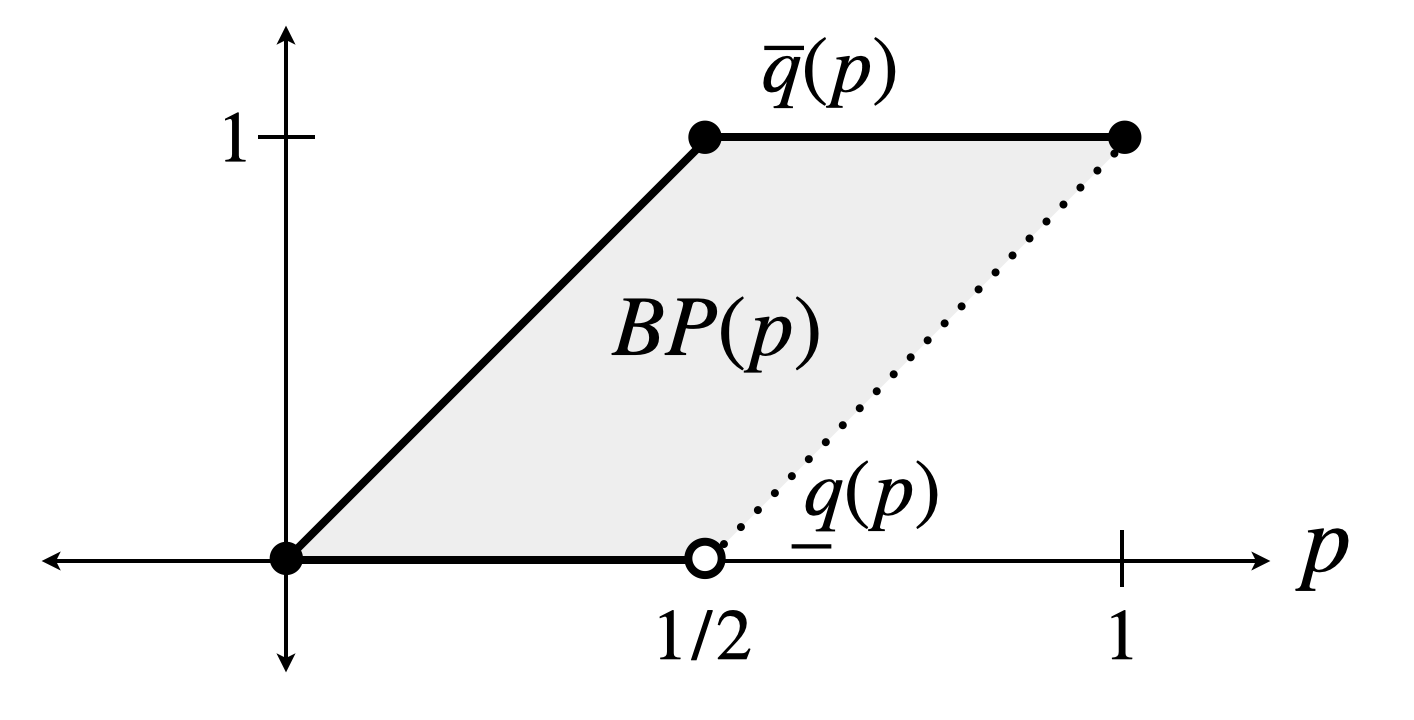}
    \caption{Illustration of $BP(p)$.} \label{fig:Qp}
\end{figure}

\begin{proof}
Rewriting (BP-2) we have
\[q = \frac{p-\pi_0}{\pi_1-\pi_0}\]
where $0\leq \pi_0\leq p \leq \pi_1 \leq 1$. Define
\[\underline{q}(p) = \inf_{\pi_0,\pi_1 \in [0,1]} \frac{p-\pi_0}{\pi_1-\pi_0} \quad \mbox{s.t. } \pi_0\leq p \leq \pi_1  \mbox{ and } \pi_0 < \frac12 \leq \pi_1 \]
\[\overline{q}(p) = \sup_{\pi_0,\pi_1 \in [0,1]} \frac{p-\pi_0}{\pi_1-\pi_0} \quad \mbox{s.t. } \pi_0\leq p \leq \pi_1 \mbox{ and } \pi_0 < \frac12 \leq \pi_1\]
The expression $\frac{p-\pi_0}{\pi_1-\pi_0}$ is decreasing in both $\pi_1$ and $\pi_0$. Thus we have
\[\underline{q}(p) = \left\{ \begin{array}{cc}
0 & \mbox{ if } p \in [0,1/2] \\
2p-1 & \mbox{ if } p \in (1/2,1]
\end{array}\right.\]
as attained exactly by $\pi_1=1$ and $\pi_0=p$ on $p \in [0,1/2)$, as the limit when $\pi_1 \uparrow \frac12$ and $\pi_0 \uparrow \frac12$ at $p=1/2$, and as the limit when $\pi_1=1$ and $\pi_0 \uparrow 1/2$ on $p \in (1/2,1]$. Moreover
\[\overline{q}(p) = \left\{ \begin{array}{cc}
2p & \mbox{ if } p \in [0,1/2] \\
1 & \mbox{ if } p \in (1/2,1]
\end{array}\right.\]
as attained by $\pi_1=1/2$ and $\pi_0=0$ on $p \in [0,1/2]$, and $\pi_1=p$ and $\pi_0=0$ on $p \in (1/2,1]$.
\end{proof}

\subsubsection{Solving the problem given a fixed $q$} \label{sec:phi}

Fix an arbitrary $q$ chosen by the AI agent. A benevolent Nature  maximizes the designer's expected payoff by solving
\begin{align*}
   \phi_B(q) = \max_{\pi_1^*,\pi_0^* \in [0,1]} q(2\pi_1^* & - 1) \\
   \mbox{s.t. } & \,\, q\pi_1^* + (1-q) \pi_0^* = p
\end{align*}
Since $q(2\pi_1^*-1)$ is increasing in $\pi_1^*$, this problem corresponds to maximizing $\pi_1^*$ subject to the constraint that there exists a $\pi_0^*$ such that $q\pi_1^* + (1-q) \pi_0^* = p$.
  There are two relevant cases.  If $p>q$, then $\pi_1^*=1$ is feasible (setting $\pi_0^*=\frac{p-q}{1-q}\in [0,1]$), in which case $q(2\pi_1^*-1)=q$. If instead $p \leq q$ then the maximum feasible value is $\pi_1^* = p/q$ (now setting $\pi_0^*=0$), implying $q(2\pi_1^*-1)=2p-q$. So the solution  is
\[\phi_B(q) = \left\{\begin{array}{cc}
q & \mbox{ if } q < \hat{q}(p) \\
2p-q & \mbox{ if } q \geq \hat{q}(p) \end{array}\right.\]
where $\hat{q}(p)=p.$

The lowest possible expected payoffs for the designer are
\begin{align*}
   \phi_W(q) = \min_{\pi_1^*,\pi_0^* \in [0,1]} q(2\pi_1^* & - 1) \\
   \mbox{s.t. } & \,\, q\pi_1^* + (1-q) \pi_0^* = p
\end{align*}
Here Nature wants to minimize $\pi_1^*$ subject to feasibility. If $\frac{p}{1-q}\leq 1$, then $\pi_1^*=0$ is feasible (setting $\pi_0^*=\frac{p}{1-q}\in [0,1]$), in which case $q(2\pi_1^*-1)=-q$. If instead $\frac{p}{1-q}>1$ then the minimum feasible value is $\pi_1^* = 1 - \frac{1-p}{q}$ (now setting $\pi_0^*=1$), implying $q(2\pi_1^*-1)=q - 2(1-p)$. So the solution to the above program is
\[\phi_W(q) = \left\{\begin{array}{cc}
-q & \mbox{ if } q\leq 1-\hat{q}(p) \\
q-2(1-p) & \mbox{ if } q> 1-\hat{q}(p) \end{array}\right.\]

Figure \ref{fig:phi} depicts the set of values of $(\phi_W(q),\phi_B(q))$ as $q$ ranges from 0 to 1. 

\begin{figure}[h]
    \includegraphics[scale=0.3]{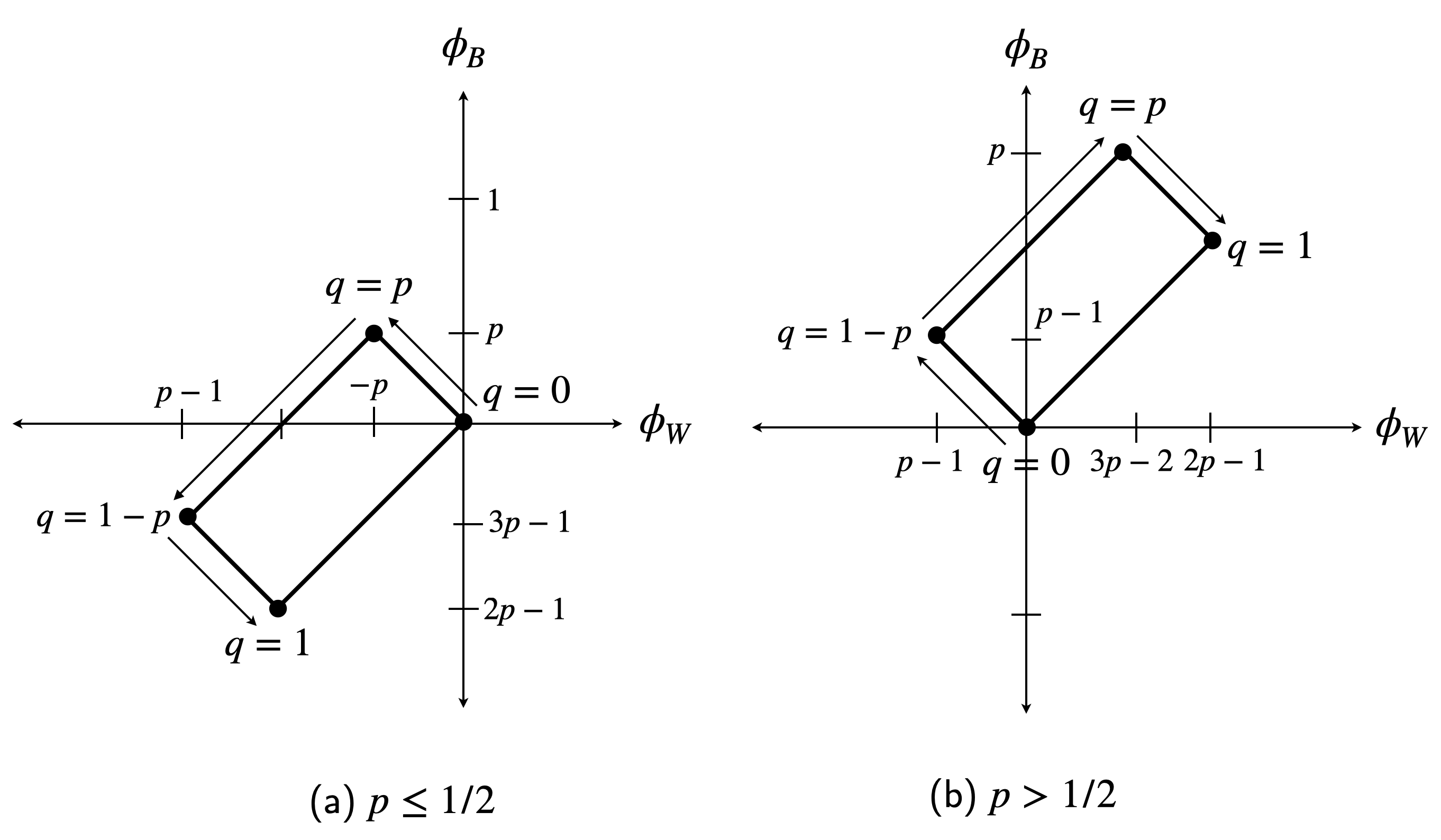}
    \caption{Illustration of $(\phi_W(q),\phi_B(q))$ as $q$ ranges from 0 to 1 for $p\leq 1/2$ (left) and $p>1/2$ (right).} \label{fig:phi}
\end{figure}

\subsubsection{The agent's choice of $q$} 
Given Nature's best reply (as characterized in Section \ref{sec:phi}), the AI agent optimizes over $q \in \mathcal{Q}(p,\mu)$ (characterized in Section \ref{sec:Feasibleq}). That is, the worst-case problem is equivalent to
\[\underline{v}_I(\mu) = \inf_{q \in \mathcal{Q}(p,\mu)} \phi_W(q)\]
and the best-case problem is equivalent to
\[\overline{v}_I(\mu) = \sup_{q \in \mathcal{Q}(p,\mu)} \phi_B(q)\]

It remains to determine the frontier of 
\[\mbox{conv}\{(\underline{v}_I(\mu),\overline{v}_I(\mu)) : \mu \in M\}\]
where $M$ is the set of all finitely supported measures. 

\subsubsection{Sufficiency of two-point supports} \label{sec:2point}

We first argue that to identify the set of undominated payoff vectors, it is sufficient to consider $\mu$ supported on two points. That is, the Pareto frontier is a subset of
\[\{(\underline{v}_I(\mu_q),\overline{v}_I(\mu_q)) : q\in [0,1]\}\]
where $\mu_q=(q,1-q)$ is the distribution of $X_1 \sim \mbox{Bernoulli}(q)$. Intuitively, once we've identified the value of $q$ that an aligned AI would pick, the other feasible values of $q$ do not improve the designer's best-case payoff and weakly reduce their worst-case payoff, so it's better to replace $\mu$ with the minimal distribution that permits the aligned AI's choice of $q$.\footnote{This intuition extends to settings where there are more than two possible actions, but not to settings with a third type of AI that is only partially aligned with the designer.}

In more detail, we will argue that every $(\underline{v}_I(\mu),\overline{v}_I(\mu))$ is weakly dominated by $(\underline{v}_I(\mu_q),\overline{v}_I(\mu_q))$ for some $q$. Consider an arbitrary $\mu$ and let
\[q^* = \argmax_{q \in \mathcal{Q}(p,\mu)} \phi_B(q).\]
be the value of $q$ that attains the best-case payoff.\footnote{Since $\mu$ has finite support, $\mathcal{E}(\mu)$ is a finite set, so the max is attained.}

Now define $\mu_{q^*} = (q^*,1-q^*)$. By assumption, $q^*$ is an attainable event probability under $\mu$, which implies that $1-q^*$ must be as well. So $\mathcal{E}(\mu) \supseteq \{0,q^*,1-q^*,1\} = \mathcal{E}(\mu_{q^*})$, i.e., the set of attainable event probabilities under $\mu_{q^*}$ is a subset of those attainable under $\mu$. Thus
\[\mathcal{Q}(p,\mu_{q^*}) = BP(p) \cap \mathcal{E}(\mu_{q^*}) \subseteq BP(p) \cap \mathcal{E}(\mu) = \mathcal{Q}(p,\mu),\]
which implies
\[\underline{u} \equiv \min_{q \in \mathcal{Q}(p,\mu)} \phi_W(q) \leq \min_{q \in \mathcal{Q}(p,\mu_{q^*})} \phi_W(q)  \equiv \underline{u}' \]
and
\[\overline{u} \equiv \max_{q \in \mathcal{Q}(p,\mu)} \phi_B(q) \geq \max_{q \in \mathcal{Q}(p,\mu_{q^*})} \phi_B(q)  \equiv \overline{u}'\]
But the latter must hold with equality because $q^*$ attains the max on the LHS (by definition of $q^*$) and $q^* \in \mathcal{Q}(p,\mu_{q^*})$ (by construction of $\mu_{q^*}$). So $(\underline{u}',\overline{u}') \geq (\underline{u},\overline{u})$ as desired.

\subsubsection{Undominated payoff vectors} \label{sec:frontier}

Applying the reduction from the previous section, the frontier is a subset of
\[\mbox{conv}\{(\underline{v}_I(\mu_q),\overline{v}_I(\mu_q)) : q \in [0,1/2]\}\]
where it is without loss to restrict $q<1/2$ since $\mathcal{E}(\mu_q) = \mathcal{E}(\mu_{1-q})$ (i.e., these distributions imply the same atainable event probabilities). 

Suppose that $p<1/2$, implying $1-\hat{q}(p)=1-p<p=\hat{q}(p)$. Besides the thresholds $\hat{q}(p)$ and $1-\hat{q}(p)$, a third key value of $q$ is  
$\overline{q} = 1-2p.$ The importance of this threshold is seen as follows. By Lemma \ref{lemm:BP},  the set of incentive-compatible values of $q$ for $p<1/2$ is $[0,2p]$. For the two-support distribution $\mu_q$ (corresponding to $X_1 \sim \mbox{Bernoulli}(q)$), the attainable event probabilities are $\{0,q,1-q,1\}$. If $q<\overline{q}$, then among the attainable event probabilities only 0 and $q$ are incentive compatible, so $\mathcal{Q}(p,\mu_q)=\{0,q\}$. But if $q>\overline{q}$, then  $1-q$ is also incentive compatible, so $\mathcal{Q}(p,\mu_q)=\{0,q,1-q\}$.

With this in mind, partition the values of $q \in [0,1/2]$ into three sets (some of which may be empty):
\[Q_1 = [0,\min\{\hat{q}(p),\overline{q}(p)\}) \quad \quad Q_2 = [\overline{q}(p),\hat{q}(p)] \quad \quad Q_3 = (\hat{q}(p),1/2).\]
For any $\mu_q$ with $q \in Q_1$, the feasible set is $\mathcal{Q}(p,\mu_q) = \{0,q\}$, and 
\begin{align*}
   (\underline{v}_I(\mu_q),\overline{v}_I(\mu_q)) & =  \left(\min_{q' \in \{0,q\}} \phi_W(q'),\max_{q' \in \{0,q\}} \phi_B(q') \right) \\ & = (\phi_W(q),\phi_B(q)) = (-q,q).
\end{align*}
For any $\mu_q$ with $q \in Q_2$, the feasible set is $\mathcal{Q}(p,\mu_q) = \{0,q,1-q\}$ and
\begin{align*}
(\underline{v}_I(\mu_q),\overline{v}_I(\mu_q)) & = \left(\min_{q' \in \{0,q,1-q\})} \phi_W(q'),\max_{q' \in \{0,q,1-q\}} \phi_B(q') \right) \\ & = (\phi_W(1-q),\phi_B(q))  = (2p-q-1,q).
\end{align*}
And finally for any $\mu_q$ with $q \in Q_3$, the feasible set is $\mathcal{Q}(p,\mu_q) = \{0,q,1-q\}$ and
\begin{align*}
    (\underline{v}_I(\mu_q),\overline{v}_I(p,\mu_q)) & = \left(\min_{q' \in \{0,q,1-q\}} \phi_W(q'),\max_{q' \in \{0,q,1-q\}} \phi_B(q') \right) \\ & = (\phi_W(1-q),\phi_B(q)) = (q-1,2p-q).
\end{align*}
Define
 \begin{align*}
   S_1 &= \{(-q,q): q \in Q_1\}, \\
   S_2 &= \{(2p-q-1,q) : q\in Q_2\},\text{ and } \\
   S_3 &= \{(q-1,2p-q) : q \in Q_3.\}
 \end{align*}
 Then the closure of the Pareto frontier are those points that are undominated in
 \[\mbox{cl}(\mbox{conv}(S_1 \cup S_2 \cup S_3)) = \mbox{conv}(\mbox{cl}(S_1) \cup \mbox{cl}(S_2) \cup \mbox{cl}(S_3)),\]
 where equality follows because $S_1,S_2,S_3 \subseteq \mathbb{R}^2$.

 When $\hat{q}(p)<\overline{q}(p)$ (or equivalently, $p <1/3$), then $Q_1 = [0,\hat{q}(p)]$ and $Q_2 $ is empty. The set $\mbox{cl}(S_1)$ is the line segment connecting $(-p,p)$ to the origin, while $\mbox{cl}(S_3)$ is the line segment connecting $(p-1,p)$ to $(-1/2,2-p-1/2)$. Every point on $\mbox{cl}(S_3)$ is dominated by a point on $\mbox{cl}(S_1)$, while the points on $\mbox{cl}(S_1)$ are unranked. So the closure of the frontier is simply $\mbox{cl}(S_1).$

When $\hat{q}(p) \geq \overline{q}(p)$ (equivalently, $p \in [1/3,1/2)$),  then $Q_1= [0,1-2p)$ and $Q_2 = [1-2p,p]$. The set $\mbox{cl}(S_1)$ is the line segment connecting $(2p-1,1-2p)$ to the origin, while both $\mbox{cl}(S_2)$ and $\mbox{cl}(S_3)$ are line segments with left endpoint $(p-1,p)$ and slope $-1$. This line lies everywhere below the line connecting $(p-1,p)$ and $(2p-1,1-2p)$, whose slope $(1-3p)/p$ is strictly larger than $-1$ for all $p< 1/2$.  So all of those points are dominated by a point that can be achieved by randomizing between $(p-1,p)$ and $(2p-1+\varepsilon,1-2p-\varepsilon)$ for $\varepsilon$ sufficiently small  (see Figure \ref{fig:dominated}). 

\begin{figure}[h]
    \includegraphics[scale=0.2]{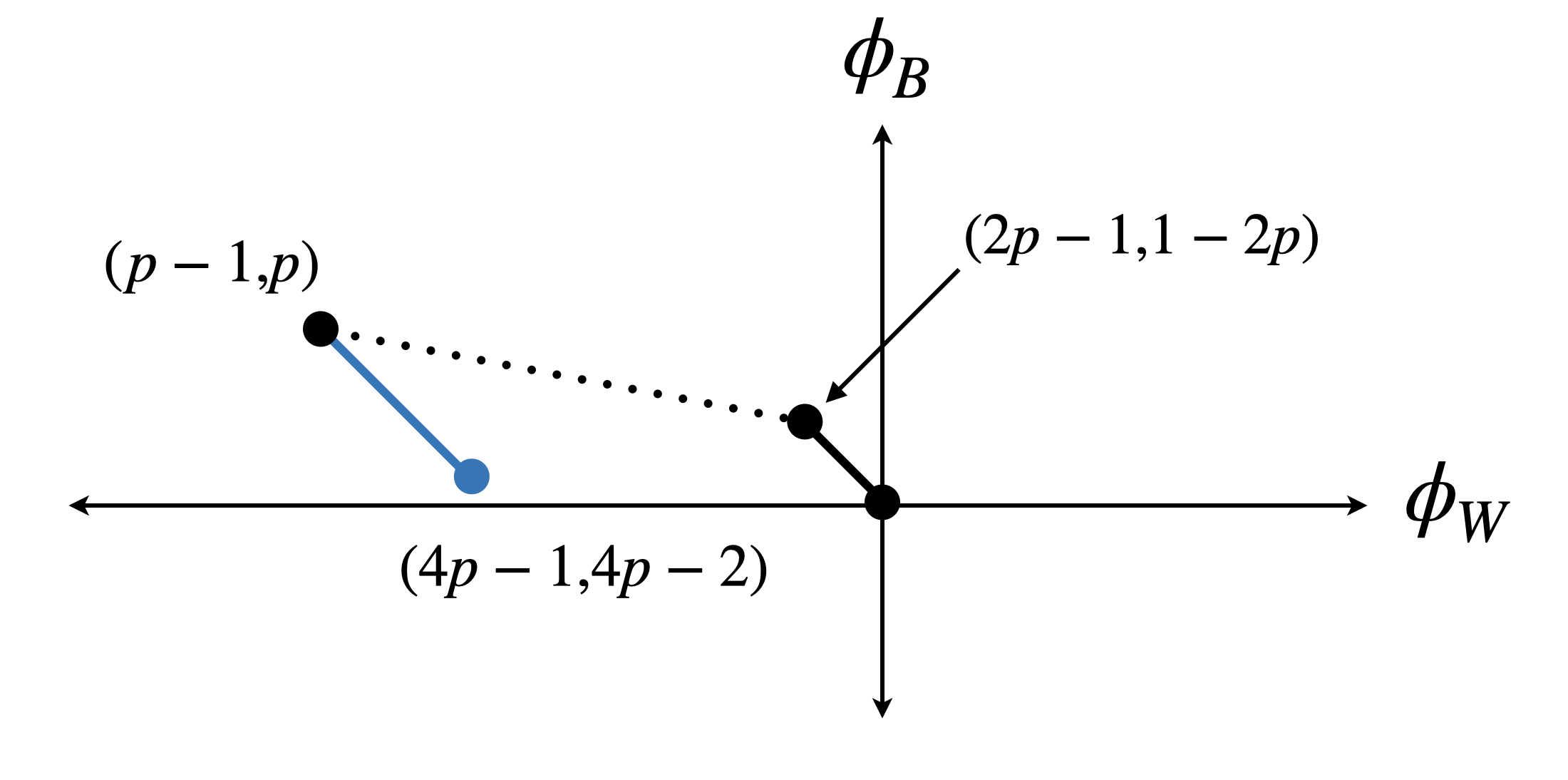}
    \caption{\small{The points on the line segment connecting $(p-1,p)$ to $(4p-1,4p-2)$ are dominated by points achieved by randomizing between $(p-1,p)$ and $(2p-1,1-2p)$ Thus those points (blue) are not on the frontier.}} \label{fig:dominated}
\end{figure}

The points on the union of the line segment connecting $(p-1,p)$ to $(2p-1,1-2p)$, and the line segment connecting $(2p-1,1-2p)$ to $(0,0)$ cannot be Pareto ranked, and so these constitute the closure of the full frontier for the case $p \in [1/3,1/2)$.

Finally if $p=1/2$ then $Q_1=\varnothing$ while $Q_2=[0,1/2]$. The set $\mbox{cl}(S_2)$ is the line segment from (0,0) to $(-1/2,1/2)$ and this constitutes the closure of the frontier.

The argument for $p>1/2$ proceeds nearly identically and is thus omitted.

\section{Proofs of Theorems \ref{thm:ParetoGeneral} and \ref{thm:supportGeneral}} \label{sec:Minkowski}

Now consider $I =(\mathcal{X}_0, \mu_0, (p_{x_0})_{x_0 \in \mathcal{X}_0})$ where $\mathcal{X}_0$ is not necessarily a singleton. For each $x_0$, let $I_{x_0}=(\{x_0\},\delta_{x_0},p_{x_0})$ correspond to prior information about subgroup $x_0$ only. Then choose a pair $(\underline{v}_{x_0}, \overline{v}_{x_0}) \in \mathcal{F}(I_{x_0},\mu)$ from each subgroup $x_0$'s frontier, corresponding to worst-case and best-case payoffs for patients in that subgroup. Because payoffs add linearly across $x_0$ and the prior weights $\mu_{x_0} := \mu_0(x_0)$ are fixed, all payoffs in 
\begin{equation} \label{eq:C}
C = \left\{ \sum_{x_0 \in \mathcal{X}_0} \mu_{x_0} (\underline{v}_{x_0}, \overline{v}_{x_0}) : (\underline{v}_{x_0}, \overline{v}_{x_0})\in \mathcal{F}(I_{x_0},\mu)\right\}
\end{equation}
are in the closure of the feasible set. This is the  weighted Minkowski sum of the sets $\{\mathcal{F}(I_{x_0},\mu) : x_0 \in \mathcal{X}_0\}$. The set of undominated points in $C$ is the efficient frontier $\mathcal{F}(I)$. That is, 
\[\mathcal{F}(I) = \{(\underline{v},\overline{v}) \in C : \nexists (\underline{v}',\overline{v}') \in C \mbox{ s.t. } (\underline{v}',\overline{v}') \succ (\underline{v},\overline{v})\}\]

It  remains to show that the upper frontier of $C$ follows the characterization given in Proposition  \ref{thm:Pareto}. For any preference parameter $\alpha$
define the linear functional
\[\Lambda_\alpha(\underline{v},\overline{v}) = \alpha \underline{v} + (1-\alpha) \overline{v}.\]
By linearity of $\Lambda_\alpha$ and the form of the Minkowski sum in (\ref{eq:C}), maximizing $\Lambda_\alpha$ over the elements of $C$ is equivalent to maximizing $\Lambda_\alpha$ covariate by covariate. That is, if for each $x_0$ we find a maximizer $(\underline{v}_{x_0}, \overline{v}_{x_0}) \in \mathcal{F}(p_{x_0},\mu)$, then $\sum_{x_0 \in \mathcal{X}_0} \mu_{x_0}(\underline{v}_{x_0}, \overline{v}_{x_0})$ is a maximizer over the elements of $C$, and moreover every maximizer over $C$ can be decomposed in this way.

So now fix an arbitrary $x_0$. Given Lemma \ref{lemm:p}, a maximizer of $\Lambda_\alpha$ must be one of the (at most) three extreme points: $T_{x_0}$, $H_{x_0}$, or $D_{x_0}$. Since a designer with preference parameter $\alpha$ has indifference curves that are lines with slope $-\frac{\alpha}{1-\alpha}$, we can determine which  point is optimal by comparing $-\frac{\alpha}{1-\alpha}$ to the slopes of $\overline{T_{x_0} H_{x_0}}$ and $\overline{H_{x_0} D_{x_0}}$. Recall that $\sigma(x_0) \in (-1,0]$ (as defined in Section \ref{sec:GeneralResults}) is the slope of the line $\overline{T_{x_0} H_{x_0}}$, while the slope of $\overline{H_{x_0}D_{x_0}}$ is $-1$ for every $px_0$. Thus one optimal point is
\[(\underline{v}(x_0), \overline{v}(x_0)) = \left\{\begin{array}{cc}
T_{x_0} & \mbox{ if } -\frac{\alpha}{1-\alpha} > \sigma(p_{x_0}) \\
H_{x_0} & \mbox{ if } \sigma(p_{x_0}) \geq -\frac{\alpha}{1-\alpha} \geq -1 \\
D_{x_0} & \mbox{ if } -1 \geq -\frac{\alpha}{1-\alpha}
\end{array}\right.\]

For each $j=1, \dots, k$ define $\alpha_j$ to satisfy
\[-\frac{\alpha_j}{1-\alpha_j} = \sigma_j.\]
Note that $\alpha_j$ are monotonically increasing in $j$ and satisfy $0 \leq \alpha_j \leq 1/2$, while $\sigma_j$ are monotonically decreasing in $j$ and satisfy $0 \geq \sigma_j \geq -1$.   So for $\alpha \in [0,\alpha_1]$, $-\alpha/(1-\alpha) > \sigma_j$ for all $j$, meaning  the optimal choice at each $x_0$ is $T_{x_0}$, and the optimal choice in $C$ is the trust point $\mathsf{T} = \sum_{x_0 \in \mathcal{X}_0} \mu_{x_0} T_{x_0}$. For each $\alpha \in [\alpha_j,\alpha_{j+1}]$, the hedge point $H_{x_0}$ is optimal for $x_0 \in x_0^{(1)}, \dots, x_0^{(j)}$, while $T_{x_0}$ is optimal for all remaining $x_0$. And finally for $\alpha \in [1/2,1]$, the point $D_{x_0}$ is optimal  for every $x_0$. Thus as $\alpha$ ranges from 0 to 1, the maximizers trace the line
\begin{equation} \label{eq:Chain}
\overline{\mathsf{T}\mathsf{H}_1} \cup \overline{\mathsf{H}_1\mathsf{H}_2} \cup \dots \cup \overline{\mathsf{H}_{k-1}\mathsf{H}_k} \cup \overline{\mathsf{H}_k\mathsf{D}}
\end{equation}
with slopes $\sigma_1, \dots, \sigma_k, -1$. So the union of these line segments must constitute a part of the frontier.

We finally argue that no other points can be a part of the frontier. Weighted Minkowski sums preserve the convexity and compactness of the individual frontiers $F(I_{x_0},\mu)$, so  the set $C$ is compact and convex. Therefore  every undominated point $v\in C$ is \emph{supported}, i.e., there exists an $\alpha$ such that $v=\argmax_{w \in C} \Lambda_\alpha(w)$. We have already shown that  these maximizers precisely trace (\ref{eq:Chain}). Hence any point outside of this union cannot be supported, and therefore cannot belong to the efficient frontier.

\section{Tie-Breaking} \label{app:TieBreak}

In the main text we supposed that the designer breaks ties in favor of treatment. We now show that this is without loss in the following sense. Suppose that the decision rule in (\ref{eq:aStar}) is replaced with a mixed strategy $\alpha: \mathcal{X}_0 \times \mathcal{X}_1 \rightarrow [0,1]$ mapping covariates into a probability of action $a=1$, where
\begin{equation}
\alpha(x_0,x_1) = \left\{\begin{array}{cc}
1 & \mbox{if } \mathbb{E}_{P}(Y\mid (X_0,X_1)=(x_0,x_1)) > \frac{1}{2} \\
\rho & \mbox{if } \mathbb{E}_{P}(Y\mid (X_0,X_1)=(x_0,x_1)) = \frac{1}{2} \\
0 & \mbox{otherwise}
\end{array}\right.
\end{equation}
and the randomization parameter is $\rho \in [0,1]$. Let $(\underline{v}_\rho(\mu),\overline{v}_\rho(\mu))$ denote the designer's worst-case and best-case payoffs given this decision rule and any $X_1 \sim \mu$. 

\begin{claim} For every $\rho<1$ and $\mu$, there is another $\mu'$ such that $\underline{v}_1(\mu') \geq \underline{v}_\rho(\mu)$ and $\overline{v}_1(\mu') \geq \overline{v}_\rho(\mu)$, i.e. $(\underline{v}_1(\mu'),\overline{v}_1(\mu'))$ \emph{weakly dominates} $(\underline{v}_\rho(\mu),\overline{v}_\rho(\mu))$.
\end{claim}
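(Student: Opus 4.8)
The plan is to reduce to the reduced-form description of Lemma~\ref{lemm:Restate} and then exhibit $\mu'$ explicitly as a two-point distribution. The central observation is that deterministic tie-breaking matters \emph{only} for the worst case: a tie event carries reported posterior exactly $1/2$, so it contributes $\rho\,t\,(2\cdot\tfrac12-1)=0$ to the designer's payoff for every $\rho$. Hence a benevolent AI never gains from creating ties, the best-case payoff is independent of $\rho$, and it equals $\phi_B(q^*)$ for some best-case treatment probability $q^*\in\mathcal{E}(\mu)\cap BP(p)$, induced by strict treatment of an event $E^*$ with $\mu(E^*)=q^*$. (This is a per-subgroup statement; for non-singleton $\mathcal{X}_0$ the objective separates across $x_0$ by linearity, so one repeats the construction within each subgroup.)

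First I would set $\mu'=\mathrm{Ber}(q^*)$ and dispatch the best-case inequality: under $\rho=1$ the aligned AI can still treat the $q^*$-atom strictly, so $\overline{v}_1(\mu')\ge \phi_B(q^*)=\overline{v}_\rho(\mu)$. For the worst case, I would argue that the misaligned AI facing $(\mu,\rho)$ can mimic full (strict) treatment of $E^*$ and, whenever it is IC-feasible, of its complement (mass $1-q^*$), with adversarial Nature placing only non-needers on the treated set. Crucially, strict treatment does not invoke the tie rule, so this is available for \emph{any} $\rho$; it yields the bound $\underline{v}_\rho(\mu)\le \min_{q\in\{q^*,1-q^*\}\cap BP(p)}\phi_W(q)$. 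On the other side, the attainable event probabilities of $\mathrm{Ber}(q^*)$ are $\{0,q^*,1-q^*,1\}$, so, using Lemma~\ref{lemm:BP} together with $\phi_W(0)=0$, the worst case under $\rho=1$ is exactly $\underline{v}_1(\mu')=\min_{q\in\{q^*,1-q^*\}\cap BP(p)}\phi_W(q)$. These are the same minimum over the same set, giving $\underline{v}_1(\mu')\ge \underline{v}_\rho(\mu)$, which completes both inequalities.

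The main obstacle—and the reason one cannot simply take $\mu'=\mu$—is that for a fixed $\mu$, moving to $\rho=1$ makes the misaligned AI weakly \emph{more} powerful, since every tie event becomes fully treated; thus $\underline{v}_1(\mu)\le \underline{v}_\rho(\mu)$ and the original distribution is dominated in the wrong direction. The resolution is that passing to $\mathrm{Ber}(q^*)$—the same two-point reduction used in Section~\ref{sec:2point}—simultaneously preserves the best case and shrinks the misaligned AI's full-treatment menu to exactly $\{q^*,1-q^*\}$, which is precisely the pair of events already available to it under $(\mu,\rho)$ (the complement of any mass-$q^*$ event has mass $1-q^*$, and IC-feasibility of each is governed solely by $BP(p)$). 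Verifying that these two menus coincide is the crux; everything else is routine case bookkeeping over the regimes $p<1/2$, $p=1/2$, $p>1/2$ and the comparison of $1-q^*$ with $2p$, which only decides which elements of $\{q^*,1-q^*\}$ survive in $BP(p)$.
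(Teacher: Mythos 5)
There is a genuine gap in your worst-case step, and it originates in your ``central observation.'' The payoff from treating a tie event of mass $t$ is $\rho\,t\,(2\pi^{*}-1)$, where $\pi^{*}$ is the \emph{true} conditional need on that event (chosen by Nature), not the \emph{reported} posterior of $1/2$; this contribution is not zero, and the entire reason the tie rule matters is that a misaligned AI can report exactly $1/2$ on an event whose true need Nature has driven down. Your menu-matching argument then fails precisely when inducing treatment on the complement (mass $1-q^{*}$) is IC-feasible \emph{only} via a reported posterior of exactly $1/2$, i.e.\ when $1-q^{*}=2p$, the closed endpoint of $BP(p)$ in Lemma \ref{lemm:BP} for $p<1/2$. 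There ``strict treatment'' of the complement is not available, so under $(\mu,\rho)$ the misaligned AI inflicts only $\rho\,\phi_W(1-q^{*})>\phi_W(1-q^{*})$, while under $(\mbox{Ber}(q^{*}),1)$ it inflicts the full $\phi_W(1-q^{*})$: the two menus you claim coincide do not, and the inequality runs the wrong way. Concretely, take $p=2/5$, $\mu=\mbox{Ber}(1/5)$, $\rho=3/4$. Then $q^{*}=1/5$, $1-q^{*}=4/5=2p$, and $(\underline{v}_{\rho}(\mu),\overline{v}_{\rho}(\mu))=(-3/10,\,1/5)$, whereas your $\mu'=\mbox{Ber}(1/5)$ gives $(\underline{v}_{1}(\mu'),\overline{v}_{1}(\mu'))=(-2/5,\,1/5)$, which does not dominate.

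Worse, in this example \emph{no} single design dominates $(-3/10,1/5)$: every Bernoulli design yields either $(-q,q)$ with $q<1/5$ (best case too low) or a point with worst case at most $-2/5$. The claim survives only because dominance is ultimately measured against the efficient frontier, which is defined over the \emph{convex hull} of implementable pairs; here $(-3/10,1/5)$ is dominated by a point on the segment $\overline{T_{x_0}H_{x_0}}$ obtained by randomizing over designs. That is how the paper proceeds: it shows the $\rho$-payoff vector is a convex combination of a $\rho=1$-achievable vector and an extreme point of the $\rho=1$ feasible set (the origin for $p<1/2$, the point $(2p-1,2p-1)$ for $p>1/2$), hence lies in the $\rho=1$ convex hull and is weakly dominated by the frontier. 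Your construction is fine away from the boundary case $2p\in\{q^{*},1-q^{*}\}$, but to close the argument you must either dispose of that case by randomization (or by a closure/limit argument), or show directly that the $\rho$-point lies in the $\rho=1$ feasible convex hull rather than exhibiting a single dominating $\mu'$.
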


Thus the designer can optimally choose $\rho=1$. The proof shows that if the designer randomizes, the best-case and worst-case payoffs are a convex combination of the distrust payoffs $D$ and the $\rho=1$ payoffs. Since the frontier is the upper boundary of a convex set, these convex combinations are  weakly dominated relative to the original frontier. Intuitively, since the designer's randomization is independent of the AI's choices, neither type of AI can exploit this randomization to effect its goals. So randomization is similar to blunting the AI's capabilities. 

\begin{proof} Suppose first that $p<1/2$. Then necessarily $\pi_0 <1/2$, so indifference can only occur if $\pi_1 = 1/2$. The designer's payoff in this case is $\rho q (2\pi_1^*-1)$, and it is straightforward to see that the values of $(q,\pi_1^*)$ that maximized (or minimized) the original objective $q(2\pi_1^*-1)$ likewise maximize (or minimize) this new objective. So 
\[(\underline{v}_\rho(\mu),\overline{v}_\rho(\mu)) = \rho (\underline{v}_1(\mu),\overline{v}_1(\mu))\]
i.e., $(\underline{v}_\rho(\mu),\overline{v}_\rho(\mu))$ lies on the line segment connecting $(\underline{v}_1(\mu),\overline{v}_1(\mu))$ and the origin. But the frontier is convex, and for $p<1/2$ the origin is an extreme point of the $\rho=1$ frontier. So $(\underline{v}_\rho(\mu),\overline{v}_\rho(\mu))$ either lies on the $\rho=1$ frontier or is dominated by some point on the $\rho=1$ frontier.  

Now suppose $p>1/2$. Then necessarily $\pi_1 >1/2$, so indifference can only occur if $\pi_0=1/2$. The designer's payoff in this case is 
\[q(2\pi_1^*-1) + \rho(1-q)(2\pi_0^*-1)\]
where $\pi_1^*$ and $\pi_0^*$ satisfy $q\pi_1^* +(1-q)\pi_0^*=p$. Rewriting the constraint we have
\[\pi_0^* = \frac{p - q \pi_1^*}{1-q}\]
and plugging this into the objective allows us to rewrite it as
\[(1-\rho)q(2\pi_1^*-1) + \rho(2p-1).\]
Since $1-\rho>0$, this objective is maximized (and minimized) by the same values of $(q,\pi_1^*)$ as the original objective, and we can write
\[(\underline{v}_\rho(\mu),\overline{v}_\rho(\mu)) = (1-\rho) (\underline{v}_1(\mu),\overline{v}_1(\mu)) + \rho (2p-1,2p-1)\]
i.e., $(\underline{v}_\rho(\mu),\overline{v}_\rho(\mu))$ lies on the line segment connecting $(\underline{v}_1(\mu),\overline{v}_1(\mu))$ and $(2p-1,2p-1)$. But for $p>1/2$, $(2p-1,2p-1)$ is an extreme point of the $\rho=1$ frontier. So by the same argument as above,  $(\underline{v}_\rho(\mu),\overline{v}_\rho(\mu))$ either lies on the $\rho=1$ frontier or is dominated by some point on it, as desired.  

Finally the case $p=1/2$ is trivial since  a tie can only occur if $\pi_1^*=\pi_0^*=1/2$, in which case the frontier collapses to the point $(0,0)$ regardless of $\rho$.
\end{proof}

\section{Proofs of Lemmas \ref{lemm:pConstrained} and \ref{lemm:pImplementConstrained}}\label{sec:constrained}

The following proof is very similar  to the proof of Lemmas \ref{lemm:p} and \ref{lemm:pConstrained}, so  we focus on the parts of the argument that are new.

\subsection{The feasible values of $q$}

We generalize $BP(p)$ to 
\[BP_\tau(p) = \{ q\in [0,1] : \exists \pi_1,\pi_0 \in [\ltau,\utau]  \mbox{ s.t. (BP-2) and (IC)}\}\]

\begin{lemma}  \label{lemm:BPtau} For any $\tau = (\ltau,\utau)$ and $p\in [\ltau,\utau]$, 
\[BP_\tau (p) = \left\{\begin{array}{cc}
 \left[0,\frac{2(p-\ltau)}{1-2\ltau}\right] & \mbox{ if } 0\leq p < 1/2  \\
\left(0,\frac{2(p-\ltau)}{1-2\ltau}\right] & \mbox{ if } p=1/2 \\
 \left(\frac{2p-1}{2\utau - 1},1\right] & \mbox{ if }  1/2 < p \leq 1
\end{array}\right.\]
\end{lemma}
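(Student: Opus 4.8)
The plan is to follow the proof of Lemma~\ref{lemm:BP} essentially verbatim, carrying the two extra bound constraints $\pi_0,\pi_1\in[\ltau,\utau]$ through the same computation. First I would solve (BP-2) for $q$ to write
\[
q=\frac{p-\pi_0}{\pi_1-\pi_0},
\]
so that $BP_\tau(p)$ is exactly the image of the continuous map $(\pi_0,\pi_1)\mapsto \frac{p-\pi_0}{\pi_1-\pi_0}$ over the feasible region $\{(\pi_0,\pi_1):\ \ltau\le\pi_0<\tfrac12\le\pi_1\le\utau,\ \pi_0\le p\le\pi_1\}$, where the last pair of inequalities is precisely the requirement $q\in[0,1]$. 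Throughout I assume the nondegenerate regime $\ltau<\tfrac12\le\utau$ (consistent with the standing hypothesis $\utau\ge\tfrac12$ in Lemma~\ref{lemm:pConstrained}); outside it no $(\pi_0,\pi_1)$ satisfies (IC) and $BP_\tau(p)=\varnothing$. Since the region is connected and the map is continuous, $BP_\tau(p)$ is an interval, so it suffices to compute its infimum $\underline q_\tau(p)$ and supremum $\overline q_\tau(p)$ and then decide whether each endpoint is attained.

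The computation rests on the same monotonicity observed for Lemma~\ref{lemm:BP}: on this region $\frac{p-\pi_0}{\pi_1-\pi_0}$ is weakly decreasing in both arguments, since the $\pi_1$-partial equals $-\frac{p-\pi_0}{(\pi_1-\pi_0)^2}\le0$ (as $\pi_0\le p$) and the $\pi_0$-partial equals $\frac{p-\pi_1}{(\pi_1-\pi_0)^2}\le0$ (as $p\le\pi_1$). Hence $q$ is maximized by pushing $\pi_0,\pi_1$ down and minimized by pushing them up. For the supremum when $p<\tfrac12$, I would take $\pi_1=\tfrac12$ and $\pi_0=\ltau$, giving $\overline q_\tau(p)=\frac{p-\ltau}{1/2-\ltau}=\frac{2(p-\ltau)}{1-2\ltau}$; this is feasible because (IC) requires $\pi_1=\tfrac12\ge\tfrac12>\ltau=\pi_0$, which holds since $\ltau\le p<\tfrac12$, so the upper endpoint is attained and closed. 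When $p\ge\tfrac12$ the choice $\pi_1=\tfrac12$ is infeasible (feasibility forces $\pi_1\ge p$), and instead $q=1$ is attained at $\pi_1=p\le\utau$ (with any admissible $\pi_0$), so the interval closes at $1$.

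For the infimum when $p\ge\tfrac12$, I would set $\pi_1=\utau$ and let $\pi_0\uparrow\tfrac12$, yielding $\underline q_\tau(p)=\frac{p-1/2}{\utau-1/2}=\frac{2p-1}{2\utau-1}$. The one genuinely delicate point—and the main obstacle—is that this value is \emph{not} attained: reaching it forces $\pi_0=\tfrac12$, violating the strict half $\pi_0<\tfrac12$ of (IC). I would confirm non-attainment by a one-line cross-multiplication showing that $\frac{p-\pi_0}{\pi_1-\pi_0}=\frac{2p-1}{2\utau-1}$ with $\pi_1\le\utau$ forces $\pi_0=\tfrac12$, so no interior point of the region achieves it; hence the lower endpoint is open. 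By contrast, when $p<\tfrac12$ the value $q=0$ is attained by $\pi_0=p<\tfrac12$ (with any admissible $\pi_1$), so the interval is closed at $0$; and at the knife-edge $p=\tfrac12$, attaining $q=0$ would require $\pi_0=p=\tfrac12$, again barred by (IC), so $0$ is excluded. Collecting the three regimes $p<\tfrac12$, $p=\tfrac12$, and $p>\tfrac12$ with these open/closed determinations reproduces the stated formula. Everything but this endpoint bookkeeping is the identical monotone-ratio argument as in Lemma~\ref{lemm:BP}, now with $0$ and $1$ replaced by the bounds $\ltau$ and $\utau$.
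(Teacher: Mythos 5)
Your proposal is correct and follows essentially the same route as the paper: rewrite (BP-2) as $q=\frac{p-\pi_0}{\pi_1-\pi_0}$, use monotonicity of this ratio in both arguments over the box $[\ltau,\utau]$, and compute the inf/sup with the attainment bookkeeping at $\pi_0=\tfrac12$. Your additional care about connectedness of the feasible region (so the image is an interval) and the explicit non-attainment check for the lower endpoint when $p>\tfrac12$ only make explicit what the paper leaves implicit.
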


\begin{proof}
Rewriting (BP-2) we have
\[q = \frac{p-\pi_0}{\pi_1-\pi_0}\]
where $\ltau \leq \pi_0\leq p \leq \pi_1 \leq \utau$. Define
\[\underline{q}(p) = \inf_{\pi_0,\pi_1 \in [\ltau,\utau]} \frac{p-\pi_0}{\pi_1-\pi_0} \quad \mbox{s.t. } \pi_0\leq p \leq \pi_1 \mbox{ and } \pi_0 < \frac12 \leq \pi_1 \]
\[\overline{q}(p) = \sup_{\pi_0,\pi_1 \in [\ltau,\utau]} \frac{p-\pi_0}{\pi_1-\pi_0} \quad \mbox{s.t. } \pi_0\leq p \leq \pi_1  \mbox{ and } \pi_0 < \frac12 \leq \pi_1.\]
Then
\[\underline{q}(p) = \left\{ \begin{array}{cc}
0 & \mbox{ if } p \in [0,1/2] \\
\frac{2p-1}{2\utau - 1} & \mbox{ if } p \in (1/2,1]
\end{array}\right.\]
This value is attained exactly by $(\pi_1,\pi_0)=(\utau,p)$ on $p \in [0,1/2)$, and is reached in the limit $\pi_1 \uparrow \frac12$,  $\pi_0 \uparrow \frac12$ at $p=1/2$, and in  the limit   $\pi_0 \uparrow 1/2$,  $\pi_1=\utau$ on $p \in (1/2,1]$. 

Similarly, 
\[\overline{q}(p) = \left\{ \begin{array}{cc}
\frac{2(p-\ltau)}{1-2\ltau} & \mbox{ if } p \in [0,1/2] \\
1 & \mbox{ if } p \in (1/2,1]
\end{array}\right.\]
which is attained by $(\pi_1,\pi_0)=(1/2,\ltau)$ on $p \in [0,1/2]$, and $(\pi_1,\pi_0)=(p,\ltau)$ on $p \in (1/2,1]$.
\end{proof}

\subsection{Solving the problem for a fixed $q$} The generalized versions of $\phi_W$ and $\phi_B$ from Appendix \ref{sec:phi} are
\begin{align*}
   \phi_W^\tau(q) = \min_{\pi_1^*,\pi_0^* \in [\ltau,\utau]} q(2\pi_1^*  - 1) \quad
   \mbox{s.t. } q\pi_1^* + (1-q) \pi_0^* = p
\end{align*}
and
\begin{align*}
   \phi_B^\tau(q) = \max_{\pi_1^*,\pi_0^* \in [\ltau,\utau]} q(2\pi_1^* - 1) 
   \quad \mbox{s.t. } q\pi_1^* + (1-q) \pi_0^* = p
\end{align*}
Their solutions are
\[\phi^\tau_B(q) = \left\{\begin{array}{cc}
q(2\utau - 1) & \mbox{ if } q\leq \hat{q}_\tau(p)  \\ 2p-2(1-q)\ltau -q & \mbox{ if } q> \hat{q}_\tau(p) \end{array}\right.\]
where $\hat{q}_\tau(p) = \frac{p-\ltau}{\utau-\ltau}$, and
\[\phi^\tau_W(q) = \left\{\begin{array}{cc}
q(2\ltau - 1) & \mbox{ if } q\leq 1-\hat{q}_\tau(p)  \\
2p-2(1-q)\utau - q & \mbox{ if } q> 1-\hat{q}_\tau(p). \end{array}\right.\]

As $q$ varies from 0 to 1, the set $(\phi^\tau_W(q),\phi^\tau_B(q))$ takes on one of the four forms depicted in Figure \ref{fig:phitau}. By simple algebra,
\begin{equation} \label{eq:Comparep}
\frac{\ltau+\utau}{2} \geq p \quad \Longleftrightarrow \quad \hat{q}_\tau(p) \leq 1-\hat{q}_\tau(p).
\end{equation}
The best-case payoff is always initially increasing in $q$ and eventually decreasing, while the worst-case payoff is always  initially decreasing in $q$ and eventually increasing. But when $\hat{q}_\tau(p) < 1- \hat{q}_\tau(p)$ (Panels (a) and (b)), the best-case payoff reaches its peak and starts decreasing before the worst-case payoff hits its minimum and starts improving. If instead $\hat{q}_\tau(p) > 1- \hat{q}_\tau(p)$ (Panels (c) and (d)), the worst-case payoff begins improving before the best-case payoff starts falling. \footnote{As before, the relationship between $p$ and $1/2$ determines whether the no-information payoff is zero (treating no patients) or $2p-1$ (treating all patients). In Panels (a) and (d), the distrust vector is the origin (achieved at $q=0$), while in Panels (b) and (c) the distrust vector is $(2p-1,2p-1)$ (achieved $q=1$).} 
\begin{figure}[h]
\begin{center}
    \includegraphics[scale=0.35]{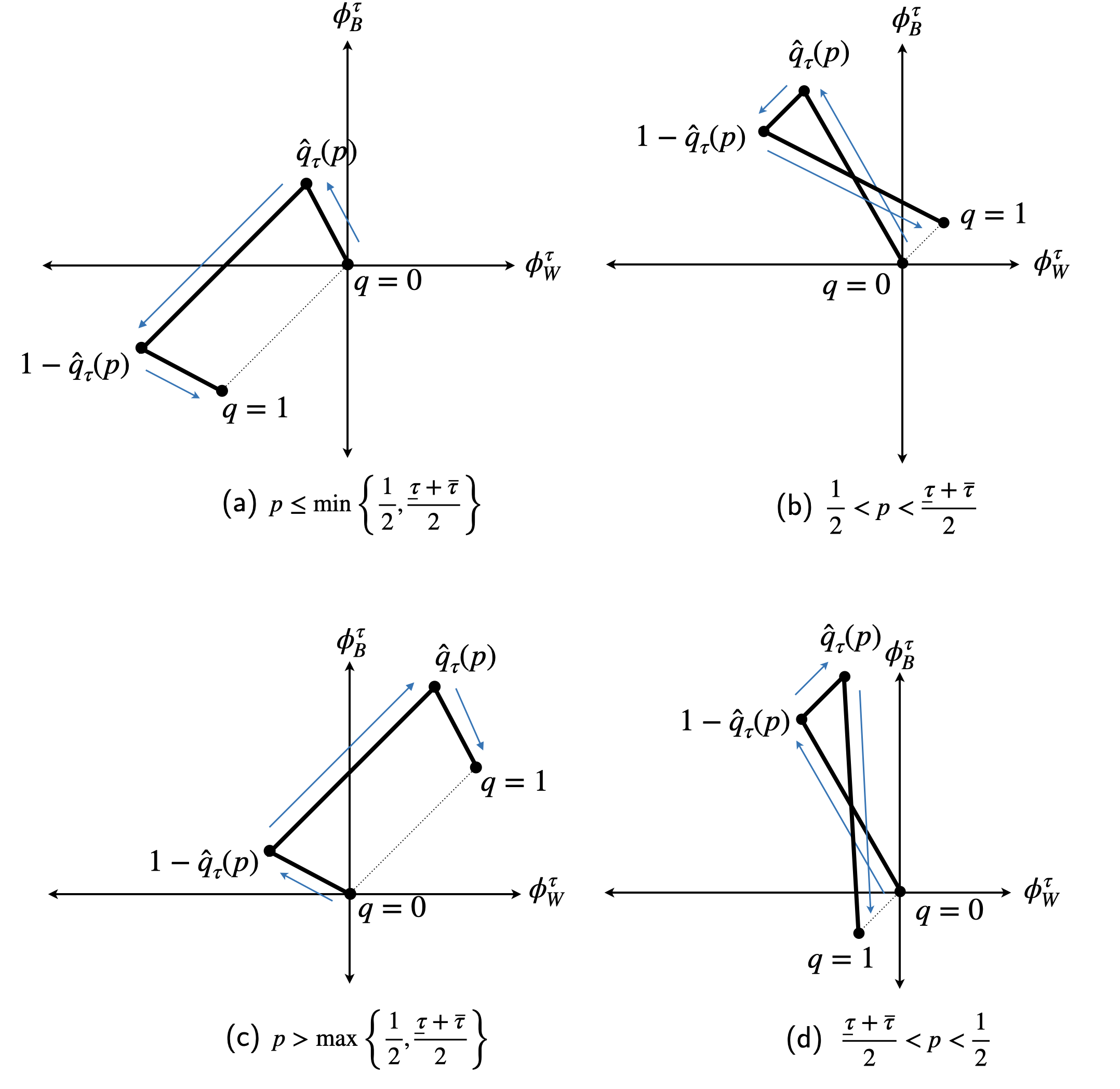}
    \end{center}
    \caption{Depiction of $(\phi^\tau_W(q),\phi^\tau_B(q))$ as $q$ varies from 0 to 1, for different ranges of $p$.} \label{fig:phitau}
\end{figure}

In our previous analysis the two thresholds of $1/2$ and $(\ltau+\utau)/2$ coincided, so we had only the analogs of cases (a) and (c). In this more general setting, the cases (b) and (d) emerge.

\subsection{Sufficiency of two-point supports.} The argument proceeds identically to Lemma \ref{lemm:p} and so is omitted.

\subsection{Undominated points.} Since we've restricted to $p<1/2$, we can focus on cases (a) and (d). 

Suppose we are in case (a), i.e.,  $p \leq \min \left\{\frac12, \frac{\ltau+\utau}{2}\right\}$.  Recall by Lemma \ref{lemm:BPtau} that for $p<1/2$ the set of incentive-compatible values of $q$ is $[0, 2 \frac{p-\underline{\tau}}{1-2\underline{\tau}}]$. As before, we first determine the largest value of $q$ at which $1-q$ is also incentive compatible, i.e, 
\[\overline{q}_\tau(p) = \sup\{ q\in [0,1]: 1- q \in BP_\tau(p)\}.\]
This generalizes the previous threshold $\overline{q}(p)$ from our earlier proof, and is solved by
\[\overline{q}_\tau(p) := \frac{1-2p}{1-2\underline{\tau}}.\]
Then if $\mu_{q'}=(q',1-q')$ is the two-support distribution corresponding to $X_1 \sim \mbox{Bernoulli}(q)$,  the set of feasible treatment probabilities $q$ is
\begin{equation} \label{eq:Q}
\mathcal{Q}(p,\mu_{q'})=\left\{\begin{array}{cc}
\{0,q'\} & \mbox{ if } q'  < \overline{q}_\tau(p) \\
\{0,q',1-q'\} & \mbox{ if } q' \in [\overline{q}_\tau(p),1/2]
\end{array}\right.
\end{equation}
When $\overline{q}_\tau(p) > \hat{q}_\tau(p)$, then as argued in the proof of Lemma \ref{lemm:p}, every point on the segment 
 \[\left\{(q'(2\ltau - 1),q'(2\utau-1)): q' \in [0, \hat{q}(p\right)]\}\]
 is implemented by $X_1 \sim \mbox{Bernoulli}(q')$ and this constitutes the frontier. This is because the feasible values of $q$ are $\{0,q'\}$, and both the aligned and misaligned AI optimally choose to recommend treatment with probability $q'$, yielding the worst-case payoff $q'(2\utau-1)$ and the best-case payoff $q'(2\utau-1)$. Since
 \[\overline{q}_\tau(p) > \hat{q}_\tau(p) \quad \Longleftrightarrow \quad \frac{p-\ltau}{\utau-\ltau} > \frac{1-2p}{1-2\ltau} \quad \Longleftrightarrow \quad p < p^* := \frac{\utau - 2\ltau^2}{1+2\overline{\tau}-4\ltau}\]
 the conditions $p \leq \min \left\{\frac12, \frac{\ltau+\utau}{2}\right\}$ and $\overline{q}_\tau(p) > \hat{q}_\tau(p)$ can be equivalently stated as  $p \leq \min \left\{\frac12, \frac{\ltau+\utau}{2},p^*\right\}$. But  $p^* < \frac12$ can be restated as $(2\ltau-1)^2 \geq 0 $, which always holds. And $p^* \leq\frac{\ltau+\utau}{2}$ if and only if $ (\utau-\ltau)(2\utau-1)$, which holds under our assumptions  $\utau\geq 1/2$ and $\utau\geq \ltau$. Thus
 \begin{equation} \label{eq:pstar}
 p^*\leq 1/2 \quad \mbox{and} \quad p^*\leq\frac{\ltau+\utau}{2}
 \end{equation}
 So $p<p^*$ is both necessary and sufficient for $p \leq \min \left\{\frac12, \frac{\ltau+\utau}{2},p^*\right\}$. This yields Part (a) of the result.

Now suppose we are in case (a) but $\hat{q}_\tau(p) > \overline{q}_\tau(p)$. Equivalently, suppose that  $p^* \leq p \leq \min \left\{\frac12,\frac{\ltau+\utau}{2}\right\}$. In this case the frontier consists of two line segments. The first segment is
  \[\left\{(q'(2\ltau - 1),q'(2\utau-1)): q' \in [0, \overline{q}_\tau(p\right)\}\]
   where each point is implemented by $X_1 \sim \mbox{Bernoulli}(q')$ for the corresponding $q'$. Here both the aligned and misaligned AI induce treatment for a $q'$-fraction of the population. This line segment connects the origin to the hedge point $H_\tau = (\overline{q}_\tau(p)(2\ltau-1),\overline{q}_\tau(p)(2\utau-1)).$ The second line segment connects $H_\tau$ to the trust point
 \begin{equation}
     \label{eq:TrustSplit}
T_\tau=\left( (1-\hat{q}_\tau(p))(2\ltau -1), \hat{q}_\tau(p)(2\utau-1) \right) \end{equation}
which is implemented by $X_1 \sim \mbox{Bernoulli}(\hat{q}_\tau(p))$. Here the aligned AI induces treatment for a $\hat{q}_\tau(p)$-fraction of patients, while the misaligned AI induces treatment for a $(1-\hat{q}_\tau(p))$-fraction of patients. The points on the line segment  $\overline{H_\tau T_\tau}$ are implemented by randomizing between the endpoints. That the points on $\overline{DH_\tau}\cup\overline{H_\tau T_\tau}$ points are undominated, and are the only undominated points, follows by the same reasoning as in the proof of Lemma \ref{lemm:p}.

Finally consider case (d), i.e., $\frac{\ltau+\utau}{2}<p<\frac12$. By (\ref{eq:Comparep}),
\begin{equation} \label{eq:hatq}
\frac{\ltau+\utau}{2}<p \quad \Longrightarrow \quad 1-\hat{q}_\tau(p) < \hat{q}_\tau(p).
\end{equation}
Moreover,
\[\hat{q}_\tau(p) = \frac{p-\ltau}{\utau-\ltau} \leq \frac{2(p-\ltau)}{1-2\ltau} \quad \Longleftrightarrow \quad \utau \geq 1/2\]
where the RHS is satisfied by assumption. Thus for $X_1 \sim \mbox{Bernoulli}(\hat{q}_\tau(p))$, the set of feasible choices of $q$ includes both $\overline{q}_\tau(p)$ and also $1-\overline{q}_\tau(p)$. The aligned AI picks $\hat{q}_\tau(p)$ and the misaligned AI picks $1-\hat{q}_\tau(p)$, yielding the trust point $T_\tau$ as defined in (\ref{eq:TrustSplit}). All points on the line between $T_\tau$ and $D=(0,0)$ can be implemented by randomization.

It remains to verify these points constitute the frontier. The slope of the $\overline{T_\tau D}$ line segment is
\[\frac{\hat{q}_\tau(p)(2\utau-1)}{(1-\hat{q}_\tau(p))(2\ltau-1)}  = \frac{p-\ltau}{\utau-p} \cdot \frac{2\utau-1}{2\ltau-1} \geq -1 \]
where the final inequality follows from our assumption that $p<1/2$.

For any $1-\hat{q}_\tau(p) < q< 1/2$, if  $X_1 \sim \mbox{Bernoulli}(q)$ then both $q$ and $1-q$ are feasible by Lemma \ref{lemm:BPtau}. The aligned AI induces treatment of fraction $1-q$ of the population, while the misaligned AI induces treatment of fraction $q$ of the population, yielding the point on the line connecting $T_\tau$ to 
\begin{align*}
B := \left(\phi^\tau_W \left(\frac12\right),\phi^\tau_B\left(\frac12\right)\right) = \left(2p-\utau-\frac12,\utau-\frac12\right).\end{align*}
The slope of the line segment $\overline{T_\tau B}$ is $-1$,
so $\overline{T_\tau B}$ lies everywhere below $\overline{T_\tau D}$ and is thus dominated.

If instead $\overline{q}_\tau(p) \leq q \leq 1-\hat{q}_\tau(p)$ then the aligned AI still induces treatment of fraction $1-q$ and the misaligned AI still induces treatment of fraction $q$, but this now yields a point on the line connecting $T_\tau$ to 
\begin{align*}
    C & := (\phi^\tau_W(\overline{q}_\tau(p)),\phi^\tau_B(1-\overline{q}_\tau(p))) \\
    & = (\overline{q}_\tau(p)(2\ltau-1), 2p-2\overline{q}_\tau(p)\ltau - (1-\overline{q}_\tau(p))) \\
    & = \left(2p-1,0\right)
    \end{align*}
The slope of $\overline{T_\tau C}$ is again $-1$, so it too lies everywhere below $\overline{T_\tau D}$.

And finally, if $ q < \overline{q}_\tau(p)$, then both the aligned and misaligned AI induce treatment of a fraction $q$ of the population (since $1-q$ is not incentive-compatible), thus yielding the point
\[(q(2\ltau-1),q(2\utau-1))\]
But the slope of the line connecting this point to the origin is
\[\frac{2\utau-1}{2\ltau-1}  > \frac{\hat{q}_\tau(p)(2\utau-1)}{(1-\hat{q}_\tau(p))(2\ltau-1)} \]
where the inequality follows from (\ref{eq:hatq}). So these points again lie below the purported frontier. This concludes the proof.

\subsection{The $p>1/2$ case} \label{sec:p>0.5}

The proof of the characterization for $p>\frac12$ follows identically and yields the following characterization. Define
\[p^{(3)}_\tau=\frac{2\utau + 2(\utau-1)\ltau(\utau-\ltau)}{2\utau-1)(\utau-\ltau)+2}\]
The trust point is  
\[T^\tau_p = \left\{\begin{array}{cc}
(w_\tau(p,\hat{q}), b_\tau(p,\hat{q})) & \mbox{ if }p > p^*_2 \\
(w_\tau(p,1-\hat{q}), b_\tau(p,\hat{q}))& \mbox{ if } p^*_2 > p \\
\end{array}\right.
\]
The hedge point is
$H_\tau=
\left(w_\tau\!\left(q_H\right),b_\tau\!\left(q_H\right)\right)$ where 
$q_H = 
\frac{2\utau-2p}{2\utau-1}$.

\begin{lemma} Suppose $p > 1/2$ and $\ltau >\frac12$. Then the frontier is simply the distrust point $D_p$.
\end{lemma}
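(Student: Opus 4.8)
The plan is to show that the hypothesis $\ltau > 1/2$ degenerates the designer's decision problem: the lower bound on the posterior already exceeds the treatment threshold, so the designer treats every patient regardless of what the AI reports, and neither type of AI can move the outcome. The whole argument is a direct computation, with no free parameters left for the AI to exploit.

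First I would argue that only the action ``treat everyone'' is ever induced. For any $\tau$-admissible report $P \in \mathcal{P}_\tau(p,\mu)$, condition (3) of $\tau$-admissibility forces $P(Y=1\mid X_1=x_1)\geq \ltau > 1/2$ for every $x_1 \in \mathcal{X}_1$. By the decision rule \eqref{eq:aStar}, the designer then takes $A=1$ at every realization of $X_1$. Equivalently, in the language of Lemma \ref{lemm:Restate}, the incentive constraint (IC) requires a posterior $\pi_0 < 1/2$ on the no-treatment event, which is incompatible with $\pi_0 \geq \ltau > 1/2$; hence the only feasible treatment probability is $q=1$, with no untreated group.

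Next I would compute the payoff. Since treatment is applied to the whole population, for any true distribution $P^* \in \mathcal{P}_\tau(p,\mu)$ the realized payoff is
\[
U(P,P^*) = P^*(Y=1)\cdot 1 + P^*(Y=0)\cdot(-1) = p-(1-p) = 2p-1,
\]
where the second equality uses condition (2) of $\tau$-admissibility, $P^*(Y=1)=p$. This value depends on neither $P$, $P^*$, nor $\mu$, so for every $\mu$ we obtain $\underline{v}_{p,\tau}(\mu)=\overline{v}_{p,\tau}(\mu)=2p-1$. The set of feasible payoff pairs is therefore the single point $(2p-1,2p-1)$, which (since $d(p)=2p-1$ for $p>1/2$) is exactly the distrust point $D_p=(d(p),d(p))$. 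A singleton feasible set is trivially its own efficient frontier, which gives the claim.

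There is essentially no obstacle in this regime; the only point requiring care is the boundary case $q=1$, where the $(\pi_1,\pi_0,q)$ parametrization of Lemma \ref{lemm:Restate} has an empty no-treatment event and $\pi_0$ is vacuous. The direct argument above sidesteps this by reasoning about the action rule \eqref{eq:aStar} rather than the parametrization; alternatively, one can simply evaluate $\phi^\tau_W(1)=\phi^\tau_B(1)=2p-1$ and note that $q=1$ is the unique feasible value once $\ltau>1/2$.
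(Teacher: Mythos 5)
Your proposal is correct and follows essentially the same reasoning as the paper, which justifies this lemma with the single observation that when $\ltau > \frac12$ the AI cannot induce a posterior below $\frac12$, so the designer always treats and receives $2p-1 = d(p)$ regardless of the report, the true distribution, or $\mu$. Your write-up merely makes explicit the computation and the boundary case $q=1$ that the paper leaves implicit.
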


In this trivial case, the AI cannot induce the designer to assign probability less than $1/2$ to need of treatment, so the designer always treats.

\begin{lemma} Suppose $p > 1/2$ and $\ltau \leq 1/2$. Then:
\begin{itemize}
    \item[(a)] If $p\notin \left[p^{(2)}_\tau,p^{(3)}_\tau\right]$, the efficient frontier is the line segment connecting the distrust point $D$ to the trust point $T_\tau$.
        \item[(b)] If $p\in \left(p^{(2)}_\tau,p^{(3)}_\tau\right)$, the efficient frontier is the union of the line segment connecting $D$ to $H_\tau$ and the line segment connecting $H_\tau$ to $T_\tau$.
\end{itemize}
\end{lemma}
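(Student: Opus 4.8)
The plan is to mirror the argument of Appendix \ref{sec:constrained} (the proof of Lemmas \ref{lemm:pConstrained} and \ref{lemm:pImplementConstrained}) under the reflection that swaps the roles of the two AI types. All of the structural reductions there are symmetric and carry over unchanged: Lemma \ref{lemm:Restate} still reduces the problem to optimizing the fixed-$q$ values $\phi^\tau_W$ and $\phi^\tau_B$ over the feasible treatment probabilities, the sufficiency of two-point supports holds verbatim, and the closed forms for $\phi^\tau_W,\phi^\tau_B$ and $BP_\tau(p)$ are already in hand; I would simply invoke the $1/2<p\le 1$ branch of Lemma \ref{lemm:BPtau}. The one genuinely new ingredient at the outset is that, because $p>1/2$, withholding information now means treating everyone, so the distrust point is $D=(2p-1,2p-1)$ and it is the probability $q=1$ (rather than $q=0$) that is always feasible. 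This is the exact mirror of the role played by $q=0$ and $D=(0,0)$ when $p<1/2$.

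First I would split on $p^{(2)}_\tau=\frac{\ltau+\utau}{2}$, which by (\ref{eq:Comparep}) is equivalent to comparing $\hat q_\tau(p)$ with $1-\hat q_\tau(p)$ (i.e.\ to whether $\hat q_\tau(p)\lessgtr 1/2$); these are panels (b) and (c) of Figure \ref{fig:phitau}, the reflections of (d) and (a). Introduce the mirror threshold $\underline q_\tau(p)=\frac{2p-1}{2\utau-1}$, the \emph{smallest} incentive-compatible treatment probability, so that for $X_1\sim\mathrm{Ber}(q')$ with $q'\le 1/2$ the feasible set is $\{1-q',1\}$ when $q'\le\underline q_\tau(p)$ and $\{q',1-q',1\}$ when $q'>\underline q_\tau(p)$. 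Throughout, the aligned AI targets the peak of $\phi^\tau_B$ at $\hat q_\tau(p)$ and the misaligned AI targets the trough of $\phi^\tau_W$ at $1-\hat q_\tau(p)$.

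In case (b), $1/2<p<p^{(2)}_\tau$ so $\hat q_\tau(p)<1/2$: the aligned AI attains the global best $b_\tau(\hat q)$ through $\mathrm{Ber}(\hat q_\tau)$ while the misaligned AI can always reach the trough $1-\hat q_\tau>1/2$, giving trust point $(w_\tau(1-\hat q),b_\tau(\hat q))$; every other Bernoulli point lies below $\overline{D\,T_\tau}$ by the same slope comparison as in the case-(d) argument, so the frontier is a single segment. In case (c), $p>p^{(2)}_\tau$ so $\hat q_\tau(p)>1/2$: as $q'$ ranges over $[0,\underline q_\tau(p)]$ both AIs are forced to the common probability $r=1-q'\in[q_H,1]$, where $q_H=1-\underline q_\tau(p)=\frac{2\utau-2p}{2\utau-1}$, tracing the co-monotone locus $(\phi^\tau_W(r),\phi^\tau_B(r))$ of slope $\frac{2\ltau-1}{2\utau-1}$. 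The decisive question is whether this locus already reaches the aligned AI's optimum $r=\hat q_\tau$, i.e.\ whether $q_H\le\hat q_\tau(p)$, equivalently $\hat q_\tau(p)+\underline q_\tau(p)\gtrless 1$. When $q_H\le\hat q_\tau$ the locus sweeps through $\hat q_\tau$, the separated single-Bernoulli points (available once $q'>\underline q_\tau$ lets the misaligned AI split off toward the trough) are dominated, and the frontier collapses to the single segment $\overline{D\,T_\tau}$ with $T_\tau=(w_\tau(\hat q),b_\tau(\hat q))$. When $q_H>\hat q_\tau$ the locus stops short, producing the hedge point $H_\tau=(w_\tau(q_H),b_\tau(q_H))$ (limit-implemented as $q'\uparrow\underline q_\tau$), and the upper piece $\overline{H_\tau T_\tau}$ up to $T_\tau=(w_\tau(1-\hat q),b_\tau(\hat q))$ is obtained by randomizing between $H_\tau$ and $T_\tau$ exactly as in Lemma \ref{lemm:p}: this is the two-segment frontier. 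The threshold separating the two subcases is defined by the coincidence $\hat q_\tau(p)+\underline q_\tau(p)=1$, which pins down $p^{(3)}_\tau$.

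The main obstacle is the algebra around $p^{(3)}_\tau$, the exact mirror of the role of $p^{(1)}_\tau$ for $p<1/2$. Concretely, I would (i) solve $\hat q_\tau(p)+\underline q_\tau(p)=1$ in closed form to obtain $p^{(3)}_\tau$, and (ii) verify the nesting inequalities $p^{(3)}_\tau\ge 1/2$ and $p^{(3)}_\tau\ge p^{(2)}_\tau$, the analogues of (\ref{eq:pstar}), which after simplification reduce to $(2\utau-1)^2\ge 0$ and $(\utau-\ltau)(1-2\ltau)\ge 0$ respectively; these guarantee that $[p^{(2)}_\tau,p^{(3)}_\tau]$ is precisely the two-segment region, so that its complement in $(1/2,1]$ yields the single-segment cases (b) and (c), as claimed. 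The remaining work---checking that the co-monotone and separated points \emph{not} on the asserted frontier are dominated---is a sequence of slope comparisons identical in spirit to those for $p<1/2$, and the treatment of the limit-implemented hedge point together with the randomization along $\overline{H_\tau T_\tau}$ is verbatim. I would be careful that the open boundary of $BP_\tau(p)$ at $\underline q_\tau(p)$ is handled as in Lemma \ref{lemm:BPtau}, since it is exactly what makes $H_\tau$ a limit point rather than an exactly attained one.
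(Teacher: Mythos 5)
Your proposal is correct and follows exactly the route the paper takes: the paper's own ``proof'' of this lemma is the single remark that the characterization for $p>\tfrac12$ ``follows identically,'' and your mirrored argument (the reflection $p\mapsto 1-p$, $(\ltau,\utau)\mapsto(1-\utau,1-\ltau)$, with $q=1$ and $D=(2p-1,2p-1)$ playing the roles of $q=0$ and the origin, the threshold $\underline q_\tau(p)=\frac{2p-1}{2\utau-1}$ replacing $\overline q_\tau(p)$, and $p^{(3)}_\tau$ pinned down by $\hat q_\tau(p)+\underline q_\tau(p)=1$) is precisely what that remark leaves implicit. One useful byproduct: solving your defining equation yields $p^{(3)}_\tau=\frac{2\utau^{2}-\ltau}{4\utau-2\ltau-1}$, which is the correct reflection of $p^{(1)}_\tau$ and satisfies the nesting inequalities you state, but which does not agree with the (typographically garbled) expression for $p^{(3)}_\tau$ displayed in Appendix \ref{sec:p>0.5} --- e.g.\ at $(\ltau,\utau)=(0,\tfrac34)$ your formula gives $\tfrac{9}{16}$ while the paper's displayed formula does not --- so the paper's stated threshold should be corrected to yours.
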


\begin{lemma} Suppose $p > 1/2$ and $\ltau \leq 1/2$. Then:
\begin{itemize}
    \item[(a)] The trust point $T_p^\tau$ is implemented by $X_1 \sim \mbox{Ber}\left(\frac{p-\ltau}{\utau -\ltau}\right)$.
    \item[(b)] The hedge point $H_p^\tau$ is limit-implemented by $X_1 \sim \mbox{Ber}\left(\frac{2\utau-2p}{2\utau-1}\right)$.
    \item[(c)] The distrust point $D$ is implemented by any constant $X_1$.
\end{itemize}
\end{lemma}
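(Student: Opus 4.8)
The plan is to reuse the apparatus already built in Appendix~\ref{sec:constrained} for the mirror case $p<1/2$ and re-run each step for $p>1/2$. As noted there, the two-point support reduction lets me restrict attention to Bernoulli covariates $X_1\sim\mbox{Ber}(q)$, so for any candidate $q$ the best-case and worst-case payoffs are obtained by optimizing the value functions $\phi^\tau_B$ and $\phi^\tau_W$ over the feasible treatment probabilities $\mathcal{Q}(p,\mu_q)=BP_\tau(p)\cap\{0,q,1-q,1\}$. For $p>1/2$, Lemma~\ref{lemm:BPtau} gives $BP_\tau(p)=\left(\frac{2p-1}{2\utau-1},\,1\right]$; in particular $0$ is never incentive-compatible while $1$ always is. The aligned AI selects the feasible $q$ maximizing $\phi^\tau_B$ and the misaligned AI the feasible $q$ minimizing $\phi^\tau_W$, and I would establish each part by pinning down these two choices.

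Part~(c) is immediate: a constant $X_1$ has $\mathcal{E}(\mu)=\{0,1\}$, and since $0\notin BP_\tau(p)$ the only feasible probability is $q=1$, forcing both AIs to treat everyone and yielding $\phi^\tau_B(1)=\phi^\tau_W(1)=2p-1$, which is $D=(2p-1,2p-1)$. For part~(a) I would take $X_1\sim\mbox{Ber}(\hat{q}_\tau(p))$ with $\hat{q}_\tau(p)=\frac{p-\ltau}{\utau-\ltau}$ and first check feasibility: the inequality $\hat{q}_\tau(p)>\frac{2p-1}{2\utau-1}$ reduces after cross-multiplication to $(\utau-p)(1-2\ltau)\ge 0$, which holds because $p\le\utau$ and $\ltau\le 1/2$. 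Since $\phi^\tau_B$ attains its global maximum at $\hat{q}_\tau(p)$, the aligned AI chooses it, giving best case $b_\tau(\hat{q}_\tau(p))$. The worst case then splits according to whether the global minimizer $1-\hat{q}_\tau(p)$ of $\phi^\tau_W$ lies above the left endpoint $\frac{2p-1}{2\utau-1}$: if it does, the misaligned AI ``flips'' onto it and the worst case is $w_\tau(1-\hat{q}_\tau(p))$; if not, it is pinned at $\hat{q}_\tau(p)$ with worst case $w_\tau(\hat{q}_\tau(p))$. These two subcases reproduce exactly the two branches of the trust point $T_p^\tau$.

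The crux is part~(b). The parameter $q_H=\frac{2(\utau-p)}{2\utau-1}$ is chosen so that its complement $1-q_H=\frac{2p-1}{2\utau-1}$ equals the \emph{open} left endpoint of $BP_\tau(p)$. Consequently, along the family $X_1\sim\mbox{Ber}(q')$ with $q'\to q_H$ from the side on which $1-q'$ is infeasible, the misaligned AI loses its ability to induce treatment of the complementary mass and is driven back onto $q'$; in the limit both AIs recommend treatment with probability $q_H$, producing $(w_\tau(q_H),b_\tau(q_H))=H_p^\tau$. This open-endpoint phenomenon is exactly why the hedge point is only \emph{limit}-implemented, mirroring the role played by the closed endpoint in the $p<1/2$ argument. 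The delicate bookkeeping is to confirm which branches of $b_\tau$ and $w_\tau$ are active at $q_H$; I expect $q_H>1-\hat{q}_\tau(p)$ to hold throughout (it reduces to $\ltau<1/2$), placing $w_\tau(q_H)$ on its upper branch, while the branch of $b_\tau(q_H)$ must be determined from the sign of $q_H-\hat{q}_\tau(p)$, and one must also verify $q_H$ itself lies strictly inside $BP_\tau(p)$ so that the aligned AI can realize $b_\tau(q_H)$.

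I expect part~(b) to be the main obstacle, both because of the limiting argument at the open boundary and because the active branches of $b_\tau$ and $w_\tau$ must be fixed by inequalities relating $q_H$, $\hat{q}_\tau(p)$, and $1-\hat{q}_\tau(p)$. Parts~(a) and~(c) are routine once $BP_\tau(p)$ and the value functions are in hand, and the interior points of each frontier segment then follow by randomizing between the extreme points $T_p^\tau$, $H_p^\tau$, and $D$, exactly as in the proof of Lemma~\ref{lemm:p}.
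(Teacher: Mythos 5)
Your proposal is correct and takes essentially the route the paper intends: the paper gives no separate proof for $p>1/2$, asserting only that it ``follows identically'' to the $p<1/2$ argument of Appendix E, and your mirrored reconstruction --- two-point reduction, $BP_\tau(p)=\left(\tfrac{2p-1}{2\utau-1},1\right]$ from Lemma 12, and each AI's choice among $\{q,1-q,1\}$ with the feasibility check $(\utau-p)(1-2\ltau)>0$ for part (a) --- is exactly that argument with the right bookkeeping. One small point runs opposite to your intuition: because the left endpoint of $BP_\tau(p)$ is \emph{open}, the flip onto $1-q_H=\tfrac{2p-1}{2\utau-1}$ is already infeasible at $q=q_H$ itself, so once you verify $q_H\in BP_\tau(p)$ (equivalently $q_H>1/2$, which holds on the relevant region $p<p^{(3)}_\tau$) the hedge point is exactly implemented for $p>1/2$ --- the genuinely limiting argument is needed only in the $p<1/2$ case, where the closed right endpoint keeps the flip feasible at the boundary --- but since implementation implies limit-implementation this does not affect the statement.
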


\section{Proof of Proposition \ref{prop:OptimalTau}}

Fix $p \in [0,1/3]$ and a preference parameter $\alpha \in [0,1]$.  Let
\[\mathcal{T} = \{(\ltau,\utau) \in [0,1]^2 : 0 \leq \ltau \leq p \leq \utau \leq 1\}\]
be the set of feasible values of $(\ltau,\utau)$. This set is partitioned by the thresholds 
\[p^*_1(\ltau,\utau) := \frac{\utau - 2\ltau^2}{1+2\utau - 4\ltau} \leq \frac{\ltau+\utau}{2} =: p^{(2)}_\tau(\ltau,\utau) \]
which define three regions:
\begin{align*}
\mathcal{T}_1  & = \left\{(\ltau,\utau) : p\leq p^*_1(\ltau,\utau)\right\} \cap \mathcal{T} \\
\mathcal{T}_2 & = \left\{(\ltau,\utau) : p^*_1(\ltau,\utau) < p \leq  p^*_2(\ltau,\utau)\right\} \cap \mathcal{T} \\ 
\mathcal{T}_3 & =  \left\{(\ltau,\utau) : p^*_2(\ltau,\utau) < p \right\}\cap \mathcal{T}.
\end{align*}

On $\mathcal{T}_1$, the frontier is the line segment connecting the distrust point $D = (0,0)$ to the Case 1 trust point
\[T^{(1)}_p(\tau) = \left(\frac{p-\ltau}{\utau-\ltau}\right)(2\ltau-1,2\utau-1)\]
On $\mathcal{T}_2$, the frontier is the union of the line segment connecting the distrust point $D$ to the hedge point
\[H_p(\tau) = \left(\frac{1-2p}{1-2\ltau}\right)(2\ltau-1,2\utau-1),\]
and the line segment connecting $H_p(\tau)$ to the Case 2 trust point
\[T^{(2)}_p(\tau) = \left(\left(1 - \frac{p-\ltau}{\utau-\ltau}\right)(2\ltau-1), \left(\frac{p-\ltau}{\utau-\ltau}\right)(2\utau-1)\right).\]
Finally, on $\mathcal{T}_3$, the frontier is the line segment connecting the distrust point $D$ to the trust point $T^{(2)}_p(\tau)$. 

The designer's payoff function is
\[U_\alpha(\underline{v},\overline{v}) = \alpha \underline{v} + (1-\alpha)\overline{v}.\]
Since this preference is linear, it is maximized at one of the extreme points of the frontier. To save on notation, define 
$U_T^1(\ltau,\utau) = U_\alpha\left(T_p^{(1)}(\tau)\right)$, $U_T^2(\ltau,\utau) = U_\alpha\left(T_p^{(2)}(\tau)\right)$, and $U_H(\ltau,\utau) = U_\alpha(H_p(\tau))$, noting that the payoff from the distrust point is always zero. Then the designer's problem can be rewritten as
\begin{equation} \label{eq:Max}
\max \left\{ \max_{\tau \in \mathcal{T}_p^1} U_T^1(\tau), \max_{\tau \in \mathcal{T}_p^2} U_T^2(\tau), \max_{\tau \in \mathcal{T}_p^3} U_T^2(\tau), \max_{\tau \in \mathcal{T}_p^2} U_H(\tau), 0\right\}
\end{equation}

The plan of the proof is as follows. Section \ref{sec:Facts} collects preliminary results about how the payoffs at the extreme points vary in $\ltau$ and $\utau$. Section \ref{sec:T1} characterizes the optimal solution supposing only the distrust point and case 1 trust point were available. Section \ref{sec:Eliminate} then shows that no other extreme points can be optimal, so that our result in Section \ref{sec:T1} is the global solution.

\subsection{Useful Facts} \label{sec:Facts}

\begin{fact} \label{fact:T2}
\begin{itemize}
\item[(a)] For any $\alpha \in [0,1/2)$ and $\ltau \in [0,1]$, the function $U_T^2(\ltau,\utau)$
is increasing in $\utau$; moreover, the function $U_T^2(\ltau,1)$ is decreasing in $\ltau$.
\item[(b)] For any $\alpha \in [1/2,1]$ and $\ltau \in [0,1]$, the function $U_T^2(\ltau,1)$
is decreasing in $\utau$.
\end{itemize}
\end{fact}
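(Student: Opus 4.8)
The plan is to reduce both parts to a sign analysis of two explicit partial derivatives. First I would substitute the coordinates of the Case~2 trust point
\[T^{(2)}_p(\tau) = \left(\left(1 - q_T\right)(2\ltau-1),\ q_T(2\utau-1)\right), \qquad q_T := \frac{p-\ltau}{\utau-\ltau},\]
into $U_\alpha(\underline{v},\overline{v}) = \alpha\underline{v} + (1-\alpha)\overline{v}$. Using $1-q_T = (\utau-p)/(\utau-\ltau)$ and clearing the common denominator, this gives the rational form
\[U_T^2(\ltau,\utau) = \frac{\alpha(\utau-p)(2\ltau-1) + (1-\alpha)(p-\ltau)(2\utau-1)}{\utau-\ltau},\]
which is the object whose monotonicity I need to establish.

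For the dependence on $\utau$, I would differentiate by the quotient rule and simplify the numerator. The point is that after cancellation the numerator collapses to a single product, yielding
\[\frac{\partial U_T^2}{\partial \utau} = \frac{(p-\ltau)(1-2\ltau)(1-2\alpha)}{(\utau-\ltau)^2}.\]
Under the maintained restrictions $\ltau \le p \le 1/3$ (so that $p-\ltau \ge 0$ and $1-2\ltau>0$), the sign is exactly that of $1-2\alpha$. This settles the ``increasing in $\utau$'' claim of part~(a) when $\alpha<1/2$ and the ``decreasing in $\utau$'' claim of part~(b) when $\alpha \ge 1/2$ in one stroke.

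For the dependence on $\ltau$, I would compute analogously
\[\frac{\partial U_T^2}{\partial \ltau} = \frac{(2\utau-1)(\utau-p)(2\alpha-1)}{(\utau-\ltau)^2},\]
and then specialize to $\utau=1$, where $(2\utau-1)(\utau-p) = 1-p>0$, so the sign is that of $2\alpha-1$. For $\alpha<1/2$ this is negative, which gives the ``$U_T^2(\ltau,1)$ decreasing in $\ltau$'' claim of part~(a).

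I expect the only real work to be the algebraic cancellation that produces the two factored numerators above; once those identities are verified, every monotonicity claim follows by reading off signs from $\ltau \le p \le 1/3$ and $\utau \ge 1/2$. The main thing to watch is the telescoping of the $\alpha$ and $(1-\alpha)$ terms, so that the cross-terms cancel and leave the clean factors $1-2\alpha$ and $2\alpha-1$; there is no conceptual obstacle beyond this bookkeeping.
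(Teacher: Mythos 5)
Your proposal is correct and follows essentially the same route as the paper: write $U_T^2$ explicitly from the Case~2 trust point, differentiate in $\utau$ and in $\ltau$, factor the numerators into $(p-\ltau)(2\ltau-1)(2\alpha-1)/(\utau-\ltau)^2$ and $(2\utau-1)(\utau-p)(2\alpha-1)/(\utau-\ltau)^2$ (your forms agree with the paper's up to sign-preserving rearrangement), and read off the signs using $\ltau\le p\le 1/3$ and $\utau\ge 1/2$. The only cosmetic difference is that you compute the general $\ltau$-derivative and then specialize to $\utau=1$, whereas the paper states the $\utau=1$ derivative directly; the content is identical.
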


\begin{proof} By algebra,
\[\frac{\partial U_T^2(\ltau,\utau)}{\partial \utau} = \frac{(2\alpha - 1) (2\ltau-1)(p-\ltau)}{(\utau-\ltau)^2}\]
Because $p-\ltau>0$ and $2\ltau-1<0$, $\sgn\left(\frac{\partial U_T^2(\ltau,\utau)}{\partial \utau}\right) = \sgn(1-2\alpha)$. So for $\alpha<1/2$, $U_T^2(\ltau,\utau)$ is increasing in $\utau$, and for $\alpha\geq 1/2$, it is decreasing in $\utau$.

Also
\[\frac{\partial U_T^2(\ltau,1)}{\partial \ltau} = \frac{(2\alpha - 1)(1-p)}{(\utau-\ltau)^2}\]
which is positive for $\alpha\geq 1/2$ and negative for $\alpha<1/2$.
\end{proof}

\begin{fact} \label{fact:H} For every $\ltau \in [0,1]$, the function $U_H(\ltau,\utau)$ is increasing in $\utau$; moreover, the function $U_H(\ltau,1)$ is increasing in $\ltau$.
\end{fact}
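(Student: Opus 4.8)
The plan is to collapse $U_H$ to a one-line closed form and then read off both monotonicity claims from elementary sign analysis, leaning on the standing hypothesis $p\le 1/3$.

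First I would substitute the hedge point $H_p(\tau)=\frac{1-2p}{1-2\ltau}\,(2\ltau-1,\,2\utau-1)$ into the linear objective $U_\alpha(\underline v,\overline v)=\alpha\underline v+(1-\alpha)\overline v$. The key simplification, which I would isolate at the outset, is that the worst-case (first) coordinate collapses to a constant: since $2\ltau-1=-(1-2\ltau)$, the scalar cancels the first entry of the direction vector to give $\frac{1-2p}{1-2\ltau}(2\ltau-1)=2p-1$, independent of both $\ltau$ and $\utau$. All $\tau$-dependence therefore sits in the best-case coordinate, and
\begin{equation*}
U_H(\ltau,\utau)=\alpha(2p-1)+(1-\alpha)\,\frac{(1-2p)(2\utau-1)}{1-2\ltau}.
\end{equation*}
This also explains why, unlike Fact \ref{fact:T2} for the trust point, there is no case split on $\alpha$ here: the trust point's worst-case coordinate varies with $\tau$ and produces a $\sgn(1-2\alpha)$ term, whereas the hedge point's does not.

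For the first claim I would differentiate in $\utau$, obtaining
\begin{equation*}
\frac{\partial U_H}{\partial\utau}=\frac{2(1-\alpha)(1-2p)}{1-2\ltau}.
\end{equation*}
The only real content is the sign check: $p\le 1/3$ gives $1-2p>0$; feasibility $\ltau\le p\le 1/3$ gives $1-2\ltau>0$; and $1-\alpha\ge 0$. Hence the derivative is nonnegative, so $U_H$ is increasing in $\utau$. For the second claim I would set $\utau=1$, reducing the expression to $U_H(\ltau,1)=\alpha(2p-1)+(1-\alpha)\frac{1-2p}{1-2\ltau}$, and differentiate in $\ltau$ to obtain
\begin{equation*}
\frac{\partial U_H(\ltau,1)}{\partial\ltau}=\frac{2(1-\alpha)(1-2p)}{(1-2\ltau)^2}\ge 0
\end{equation*}
by exactly the same three sign facts.

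I do not expect a genuine obstacle: the whole argument is the cancellation that pins the worst-case coordinate plus one sign check that hinges entirely on the domain restrictions $p<1/2$ and $\ltau<1/2$ keeping $1-2p$ and $1-2\ltau$ positive. The one point I would flag explicitly is that both ``increasing'' statements are weak (the derivatives vanish) in the degenerate case $\alpha=1$, and strict for $\alpha<1$.
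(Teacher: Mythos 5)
Your proof is correct and follows essentially the same route as the paper's: substitute the hedge point into the linear objective, observe that the worst-case coordinate collapses to the constant $2p-1$, and differentiate. Your computation of $\partial U_H(\ltau,1)/\partial\ltau = 2(1-\alpha)(1-2p)/(1-2\ltau)^2$ is in fact the correct one --- the paper's displayed denominator $2(\ltau-1)^2$ appears to be a typo, though the sign conclusion is unaffected --- and your remarks that the monotonicity is only weak at $\alpha=1$ and that the sign check relies on the feasibility restriction $\ltau\le p<1/2$ are both accurate refinements of the paper's terser ``by algebra'' argument.
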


\begin{proof} By algebra,
\[\frac{\partial U_H(\ltau,\utau)}{\partial \utau} = 2(1-\alpha) \frac{1-2p}{1-2\ltau} >0\]
and
\[\frac{\partial U_H(\ltau,1)}{\partial \ltau} =  \frac{2(1-\alpha)(1-2p)}{2(\ltau-1)^2}>0\]
as desired.
\end{proof}

\begin{fact} \label{fact:T1} For every $\ltau \in [0,1]$, the function 
$U_T^1(\ltau,\utau)$
is increasing in $\utau$.
\end{fact}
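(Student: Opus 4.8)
The plan is to exploit a feature that distinguishes this Fact from Fact \ref{fact:T2}: here the conclusion holds for \emph{every} $\alpha$, which suggests that monotonicity should hold coordinate-by-coordinate at the Case~1 trust point, before any aggregation. So rather than differentiate the aggregate directly, I would first show that both coordinates of
\[
T_p^{(1)}(\tau) = \left(\frac{p-\ltau}{\utau-\ltau}\right)\bigl(2\ltau-1,\ 2\utau-1\bigr)
\]
are nondecreasing in $\utau$, and then conclude that the convex combination $U_T^1 = \alpha\underline{v} + (1-\alpha)\overline{v}$ inherits this property because the weights $\alpha$ and $1-\alpha$ are nonnegative.

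For the worst-case coordinate $\underline{v} = (p-\ltau)(2\ltau-1)/(\utau-\ltau)$, the factor $(p-\ltau)(2\ltau-1)$ is constant in $\utau$, so I would simply differentiate $(\utau-\ltau)^{-1}$ and obtain $\partial\underline{v}/\partial\utau = -(p-\ltau)(2\ltau-1)/(\utau-\ltau)^2$. For the best-case coordinate $\overline{v} = (p-\ltau)(2\utau-1)/(\utau-\ltau)$, the quotient rule applied to $(2\utau-1)/(\utau-\ltau)$ should leave a numerator of $1-2\ltau$, giving $\partial\overline{v}/\partial\utau = (p-\ltau)(1-2\ltau)/(\utau-\ltau)^2$.

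The sign of both derivatives then follows from the standing hypotheses: every feasible $\tau\in\mathcal{T}$ satisfies $0\leq\ltau\leq p$, and $p\leq 1/3$ forces $2\ltau-1\leq -\tfrac13<0$ and $1-2\ltau>0$. Hence $p-\ltau\geq 0$ makes both $\partial\underline{v}/\partial\utau\geq 0$ and $\partial\overline{v}/\partial\utau\geq 0$, so $U_T^1$ is increasing in $\utau$ (strictly when $\ltau<p$). Equivalently, differentiating the aggregate directly collapses to $\partial U_T^1/\partial\utau = (p-\ltau)(1-2\ltau)/(\utau-\ltau)^2$, matching the pattern of the proofs of Facts \ref{fact:T2} and \ref{fact:H}. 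I do not expect a genuine obstacle here: the only things to watch are the quotient-rule cancellation for $\overline{v}$ and the (minor) observation that the argument in fact uses $p\leq 1/3$ only through the weaker consequence $\ltau\leq 1/2$.
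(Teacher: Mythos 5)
Your proposal is correct and follows essentially the same route as the paper: the paper simply differentiates the aggregate and obtains $\partial U_T^1/\partial\utau = (p-\ltau)(1-2\ltau)/(\utau-\ltau)^2 > 0$ using $p>\ltau$ and $1-2\ltau>0$, which is exactly the expression your coordinate-wise computation collapses to. Your observation that each coordinate of $T_p^{(1)}(\tau)$ is separately nondecreasing in $\utau$ (so the conclusion holds for every $\alpha$) is a mild strengthening of the presentation, and your remark that the hypothesis is used only through $\ltau<1/2$ matches the paper's proof.
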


\begin{proof} By algebra
\[\frac{\partial U_T^1(\ltau,\utau)}{\partial \utau} = \frac{(p-\ltau)(1-2\ltau)}{(\utau-\ltau)^2}>0\]
since $p>\ltau$ and $1-2\ltau>0$.
\end{proof}

\subsection{The Case 1 Optimum} \label{sec:T1}

\begin{lemma} \label{lemm:T1} Suppose $p<1/3$ and define
$\ltau^\circ= 1-\sqrt{\frac{1-p}{2\alpha}}$ 
and
\[\underline{\alpha} (p) := \frac{1-p}{2} < \frac{1}{2(1-p)} =: \overline{\alpha}(p)\]
Then 
\[\argmax_{(\ltau,\utau) \in \mathcal{T}_1}U_T^1(\ltau,\utau) = \left\{ \begin{array}{cc}
(0,1) & \mbox{ if } \alpha < \underline{\alpha}(p)\\
(\underline{\tau}^\circ,1) & \mbox{ if }  \underline{\alpha}(p) < \alpha < \overline{\alpha}(p) \\
(p,1) & \mbox{ if } \alpha > \overline{\alpha}(p)
\end{array}\right.\]
yielding the payoffs
\[\max_{(\ltau,\utau) \in \mathcal{T}_1}U_T^1(\ltau,\utau) = \left\{ \begin{array}{cc}
p(1-2\alpha) & \alpha < \underline{\alpha}(p)\\
\frac{p-\underline{\tau}^\circ}{1-\underline{\tau}^\circ}(1+2\alpha (\underline{\tau}^\circ-1)) & \mbox{ if } \underline{\alpha}(p) < \alpha < \overline{\alpha}(p) \\
0 & \mbox{ if } \alpha > \overline{\alpha}(p)
\end{array}\right.\]
\end{lemma}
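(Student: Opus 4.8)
The plan is to use the monotonicity in Fact \ref{fact:T1} to collapse the two-dimensional maximization over $\mathcal{T}_1$ to a one-dimensional problem on the top edge $\utau=1$, and then to solve that problem by strict concavity together with a comparison of the interior maximizer against the endpoints $\ltau=0$ and $\ltau=p$.

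First I would show that the whole segment $\{(\ltau,1):\ltau\in[0,p]\}$ lies in $\mathcal{T}_1$. Since $\mathcal{T}_1$ is cut out by $p\le p^*_1(\ltau,\utau)$ with $p^*_1(\ltau,1)=\frac{1-2\ltau^2}{3-4\ltau}$, and this is increasing in $\ltau$ on $[0,1/3]$ (its derivative has numerator $4(2\ltau-1)(\ltau-1)>0$ there), its minimum over $\ltau\in[0,p]$ is attained at $\ltau=0$, where $p^*_1(0,1)=\tfrac13\ge p$. Hence every such point is feasible. Because $U_T^1(\ltau,\utau)\le U_T^1(\ltau,1)$ for all admissible $\utau\le 1$ by Fact \ref{fact:T1}, and every $(\ltau,\utau)\in\mathcal{T}_1$ has $\ltau\in[0,p]$, this gives $\max_{\mathcal{T}_1}U_T^1=\max_{\ltau\in[0,p]}g(\ltau)$, where $g(\ltau):=U_T^1(\ltau,1)$.

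Next I would substitute $\utau=1$ and simplify. Writing $c:=1-p$ and $u:=1-\ltau$, a short computation gives
\[g(\ltau)=\frac{p-\ltau}{1-\ltau}\bigl(1-2\alpha(1-\ltau)\bigr)=1+2\alpha c-2\alpha u-\frac{c}{u}.\]
From this form $g''<0$ (the term $-c/u$ is strictly concave in $u$ and $u$ is affine in $\ltau$), so $g$ is strictly concave on $[0,p]$ with a unique interior stationary point solving $\frac{c}{(1-\ltau)^2}=2\alpha$, namely $\ltau^\circ=1-\sqrt{\tfrac{1-p}{2\alpha}}$. Equivalently $g'(\ltau)=2\alpha-\frac{c}{(1-\ltau)^2}$, which I would use for the endpoint comparison.

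Finally I would run the standard endpoint analysis for a concave objective. One checks $\ltau^\circ\ge 0\iff \alpha\ge\frac{1-p}{2}=\underline{\alpha}(p)$ and $\ltau^\circ\le p\iff \alpha\le\frac{1}{2(1-p)}=\overline{\alpha}(p)$. Thus: for $\alpha<\underline{\alpha}(p)$ the stationary point is negative, $g$ is decreasing on $[0,p]$, and the maximizer is $\ltau=0$ with $g(0)=p(1-2\alpha)$; for $\alpha>\overline{\alpha}(p)$ it exceeds $p$, $g$ is increasing, and the maximizer is $\ltau=p$ with $g(p)=0$; and for $\underline{\alpha}(p)<\alpha<\overline{\alpha}(p)$ the maximizer is interior at $\ltau^\circ$, with value $g(\ltau^\circ)=\frac{p-\ltau^\circ}{1-\ltau^\circ}\bigl(1+2\alpha(\ltau^\circ-1)\bigr)$, matching the claimed expression. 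That $\underline{\alpha}(p)<\overline{\alpha}(p)$ for $p>0$ follows from $(1-p)^2<1$, and restoring $\utau=1$ recovers the optimizers $(0,1)$, $(\ltau^\circ,1)$, $(p,1)$. The only step that requires genuine care is the reduction to the top edge: one must verify feasibility of $(\ltau,1)$ \emph{inside} $\mathcal{T}_1$ and not merely inside $\mathcal{T}$, i.e.\ that pushing $\utau$ to its largest value does not exit the Case-1 region; everything afterward is a routine single-variable concave optimization, so I anticipate no substantive difficulty there.
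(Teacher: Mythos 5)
Your proposal is correct and follows essentially the same route as the paper's proof: reduce to the edge $\utau=1$ via Fact \ref{fact:T1}, then solve the resulting one-dimensional concave problem for $g(\ltau)=\frac{p-\ltau}{1-\ltau}\bigl(1-2\alpha(1-\ltau)\bigr)$ with the interior critical point $\ltau^\circ$ and the endpoint thresholds $\underline{\alpha}(p),\overline{\alpha}(p)$. The only cosmetic difference is how feasibility of the top edge is verified (you show $p^*_1(\ltau,1)$ is increasing in $\ltau$ with minimum $1/3$ at $\ltau=0$, while the paper analyzes the sign of the quadratic $2\ltau^2-4p\ltau+3p-1$ between its roots); both computations establish the same fact.
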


\begin{proof}
The threshold $p^*_1(\ltau,\utau)$ is increasing in $\utau$ for every $\ltau$. Thus if $(\ltau, \utau) \in \mathcal{T}_1$ then also $(\ltau,1) \in \mathcal{T}_1$. Moreover, $U_T^1(\ltau,1) \geq U_T^1(\ltau,\utau)$ by Fact \ref{fact:T1}. So the designer sets $\utau^*=1$ and optimizes over $\ltau$. The payoff criterion simplifies to
\[\left(\frac{p-\ltau}{1-\ltau}\right)(2\alpha(\ltau-1)+1) := g(\ltau)\]
whose derivative is
\[g'(\ltau) = \frac{2 \alpha \ltau^2 -4\alpha \ltau +2\alpha +p-1}{(1-\ltau)^2}\]
This derivative is positive up until the critical point
\[\ltau^\circ= 1-\sqrt{\frac{1-p}{2\alpha}}\]
and negative after. So it remains to determine if $\underline{\tau}^\circ$ falls in the feasible region of values for $\ltau$.

Plugging in $\utau^*=1$, the inequality $p < p^*(\ltau,1)$ becomes
\[2\ltau^2 -4p\ltau +3p -1<0.\]
The two roots of this quadratic are 
\[\ltau_{\pm}(p) = \frac{4p \pm \sqrt{16 p^2 - 24p+8}}{4}\]
where $\ltau_-(p)< 0 < p < \ltau_+(p)$ for $p<1/3$. The quadratic is negative between the roots, so all $\ltau \in [0,p]$ are feasible.  Since finally $\ltau^\circ\geq 0$ if and only if $ \alpha \geq \frac{1-p}{2} := \alpha_0(p)$ and $\ltau^\circ \leq p$ if and only if $ \alpha \leq \frac{1}{2(1-p)} := \alpha_1(p)$, the result obtains.
\end{proof}

\subsection{The Other Extreme Points Cannot be Optimal} \label{sec:Eliminate}

 \begin{lemma}[The Case 2 Trust Point is Never Optimal]  \label{lemm:T2NotOptimal} For every $(\ltau,\utau) \in \mathcal{T}_2$, either $U_T^2(\ltau,\utau)<0$ or $U_T^2(\ltau,\utau) < \max_{(\ltau,\utau) \in \mathcal{T}_1} U_T^1(\ltau,\utau)$.
\end{lemma}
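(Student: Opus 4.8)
The plan is to show that the Case~2 trust point is pointwise dominated by the Case~1 trust point evaluated at the \emph{same} $\tau$, and then to transport that comparison into the region $\mathcal{T}_1$ by raising $\utau$ to $1$. Throughout I write $\hat q := \frac{p-\ltau}{\utau-\ltau}$, so that $T^{(1)}_p(\tau) = \hat q\,(2\ltau-1,\,2\utau-1)$ and $T^{(2)}_p(\tau) = \bigl((1-\hat q)(2\ltau-1),\,\hat q(2\utau-1)\bigr)$.

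First I would record the key dominance. By (\ref{eq:Comparep}), membership in $\mathcal{T}_2$ forces $p \le \frac{\ltau+\utau}{2}$, which is equivalent to $\hat q \le \tfrac12$, hence $1-\hat q \ge \hat q$. Since $\ltau \le p \le \tfrac13$ gives $2\ltau-1 < 0$, multiplying $1-\hat q \ge \hat q$ by $2\ltau-1$ reverses the inequality, so the worst-case coordinate of $T^{(2)}_p(\tau)$ satisfies $(1-\hat q)(2\ltau-1) \le \hat q(2\ltau-1)$, the worst-case coordinate of $T^{(1)}_p(\tau)$; the two points share the identical best-case coordinate $\hat q(2\utau-1)$. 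Because $U_\alpha$ is nondecreasing in each coordinate, this yields $U_T^2(\ltau,\utau) \le U_T^1(\ltau,\utau)$ for every $\tau \in \mathcal{T}_2$ and every $\alpha$.

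Next I would move the comparison into $\mathcal{T}_1$. By Fact~\ref{fact:T1}, $U_T^1(\ltau,\utau)$ is increasing in $\utau$, so $U_T^1(\ltau,\utau) \le U_T^1(\ltau,1)$; and the point $(\ltau,1)$ lies in $\mathcal{T}_1$, since substituting $\utau=1$ the defining inequality $p \le p^*_1(\ltau,1)$ reduces (exactly as in the proof of Lemma~\ref{lemm:T1}) to $2\ltau^2 - 4p\ltau + 3p - 1 \le 0$, which holds for all $\ltau \in [0,p]$ when $p \le \tfrac13$. Hence $U_T^1(\ltau,1) \le \max_{(\ltau,\utau)\in\mathcal{T}_1} U_T^1$, and chaining the inequalities gives $U_T^2(\ltau,\utau) \le \max_{\mathcal{T}_1} U_T^1$.

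Finally I would sharpen this to the strict inequality the statement requires. A short computation shows that every $\tau\in\mathcal{T}_2$ has $\ltau < p$ strictly: if $\ltau = p$ then $p^*_1(p,\utau) \ge p$ whenever $\utau \ge p$, which would place $\tau$ in $\mathcal{T}_1$ rather than $\mathcal{T}_2$. With $\ltau < p < \tfrac12$ the derivative $\frac{\partial U_T^1}{\partial \utau} = \frac{(p-\ltau)(1-2\ltau)}{(\utau-\ltau)^2}$ from Fact~\ref{fact:T1} is strictly positive, and since $\utau < 1$ on $\mathcal{T}_2$ (the point $(\ltau,1)$ being in $\mathcal{T}_1$), the step $U_T^1(\ltau,\utau) < U_T^1(\ltau,1)$ is strict. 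Thus $U_T^2(\ltau,\utau) < \max_{\mathcal{T}_1} U_T^1$, which is the second alternative in the statement. I expect the only genuine obstacle to be the bookkeeping that transports the pointwise dominance to a \emph{legitimate} competitor: for $\tau\in\mathcal{T}_2$ the expression $T^{(1)}_p(\tau)$ is only a formula, not a point on the frontier for that $\tau$, so the argument must route through the $\utau\mapsto 1$ monotonicity and the verification $(\ltau,1)\in\mathcal{T}_1$ — which is precisely where the hypothesis $p\le\tfrac13$ enters. Once $\hat q \le \tfrac12$ and $\ltau<p$ are in hand, the dominance and the monotonicity are immediate.
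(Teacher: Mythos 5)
Your proof is correct, but it takes a genuinely different route from the paper's. The paper splits on $\alpha$: for $\alpha>1/2$ it uses the monotonicity of $U_T^2$ itself in $\utau$ (Fact \ref{fact:T2}(b)) to push $\utau$ down to $p$ and conclude $U_T^2\le(1-\alpha)(2p-1)<0$, establishing the first disjunct; for $\alpha\le 1/2$ it uses Fact \ref{fact:T2}(a) to push $(\ltau,\utau)$ to $(0,1)$, obtains the closed form $U_T^2\le p-\alpha$, and compares it with the closed-form value $p(1-2\alpha)$ of the $\mathcal{T}_1$ optimum from Lemma \ref{lemm:T1} (noting that $\alpha<p$ and $p<1/3$ force $\alpha<\underline{\alpha}(p)$). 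You instead give a single argument uniform in $\alpha$: since $p\le p^{(2)}_\tau$ on $\mathcal{T}_2$ is equivalent to $\hat q\le 1/2$, the point $T^{(2)}_p(\tau)$ is coordinatewise dominated by the \emph{formula} $T^{(1)}_p(\tau)$ at the same $\tau$ (identical best-case coordinate, weakly worse worst-case coordinate because $2\ltau-1<0$), and then Fact \ref{fact:T1} transports the comparison to the legitimate competitor $(\ltau,1)\in\mathcal{T}_1$. This is exactly the structure the paper itself uses for the hedge point via inequality (\ref{ineq:TbeatsH}), so your version unifies the two eliminations, dispenses with Fact \ref{fact:T2} and with the closed-form solution of the $\mathcal{T}_1$ problem, and proves the second disjunct unconditionally (which suffices). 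Your strictness bookkeeping is also sound: $\ltau<p$ on $\mathcal{T}_2$ (else $(1-2p)(\utau-p)\ge 0$ would give $p\le p^{(1)}_\tau$), $\ltau\le p\le 1/3<1/2$, and $\utau<1$ because $(\ltau,1)\in\mathcal{T}_1$, so the derivative in Fact \ref{fact:T1} is strictly positive on the relevant segment. The only cosmetic point: the bound $p\le\frac{\ltau+\utau}{2}$ comes from the definition of $\mathcal{T}_2$ itself; (\ref{eq:Comparep}) is what converts it into $\hat q\le 1/2$.
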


\begin{proof} 
Suppose $\alpha > 1/2$. Then
\begin{align*}
U_T^2(\ltau,\utau) & \leq U_T^2(\ltau,p) && \mbox{ by Fact \ref{fact:T2} Part (b)}\\
&= (1-\alpha)(2p-1) <0 && \mbox{ since $p<1/2$}
\end{align*}
Since the distrust point yields a payoff of zero, this cannot be optimal.

Suppose $\alpha \leq 1/2$. Then Fact \ref{fact:T2} Part (a) implies that
\[U_T^2(\ltau,\utau) \leq U_T^2(\ltau,1) \leq U_T^2(0,1)=p-\alpha\]
But for every $(p,\alpha)$ for which $\alpha < p $ (so that the payoff is positive), it also holds that $\alpha < \frac{1-p}{2}$ (since $p<1/3$ implies $p < \frac{1-p}{2}$). Thus Lemma \ref{lemm:T1} implies $\max_{\tau \in \mathcal{T}_1} U_T^1(\tau)=p(1-2\alpha)$. Moreover
\[p-\alpha < p(1-2\alpha)\]
for all $p < 1/3$. So the case 2 trust point cannot be optimal for any $\alpha$.

\end{proof}

\begin{lemma}[The Hedge Point is Never Optimal] \label{lemm:HedgeNotOptimal} For every $\tau \in \mathcal{T}_2$,  either $U_H(\tau) < 0$ or $U_H(\tau)< \max_{\tau \in \mathcal{T}_1} U_T^1(\tau)$.
\end{lemma}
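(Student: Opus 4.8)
The plan is to mirror the proof of Lemma \ref{lemm:T2NotOptimal}, exploiting the monotonicity recorded in Fact \ref{fact:H} together with the key observation that the hedge point and the Case 1 trust point \emph{coincide} on the boundary separating $\mathcal{T}_1$ from $\mathcal{T}_2$. Recall that $H_p(\tau) = \overline{q}_\tau(p)\,(2\ltau-1,\,2\utau-1)$ and $T^{(1)}_p(\tau) = \hat{q}_\tau(p)\,(2\ltau-1,\,2\utau-1)$, where $\overline{q}_\tau(p) = \frac{1-2p}{1-2\ltau}$ and $\hat{q}_\tau(p) = \frac{p-\ltau}{\utau-\ltau}$. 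As computed in Appendix \ref{sec:constrained}, these two scalars are equal precisely when $p = p^*_1(\ltau,\utau)$, so along the surface $p = p^*_1$ we have $H_p(\tau) = T^{(1)}_p(\tau)$ and hence $U_H(\tau) = U_T^1(\tau)$. This is the hinge of the argument.

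First I would fix an arbitrary $\tau = (\ltau,\utau) \in \mathcal{T}_2$, hold $\ltau$ fixed, and view $U_H(\ltau,\cdot)$ as a function of $\utau$. By Fact \ref{fact:H} its derivative $2(1-\alpha)\frac{1-2p}{1-2\ltau}$ is strictly positive whenever $\alpha < 1$, since $p \le 1/3 < 1/2$ and $\ltau \le p < 1/2$. Because $\tau \in \mathcal{T}_2$ means $p^*_1(\ltau,\utau) < p$ and $p^*_1$ is increasing in $\utau$, there is a threshold $\utau^\circ(\ltau) > \utau$ with $p^*_1(\ltau,\utau^\circ(\ltau)) = p$. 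The second step is to verify $\utau^\circ(\ltau) \le 1$, so that this boundary is reached inside the admissible range $\utau \le 1$: this is exactly where $p \le 1/3$ is used, since $p^*_1(\ltau,1)$ is increasing in $\ltau$ on $[0,1/2)$ (its derivative has the sign of $(2\ltau-1)(\ltau-1) > 0$) with minimum $p^*_1(0,1) = 1/3 \ge p$, so $p^*_1(\ltau,1) \ge p$ and the level $p$ is attained at some $\utau^\circ(\ltau) \le 1$.

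The third step concatenates the inequalities. Strict monotonicity gives $U_H(\ltau,\utau) < U_H(\ltau,\utau^\circ(\ltau))$; the boundary coincidence gives $U_H(\ltau,\utau^\circ(\ltau)) = U_T^1(\ltau,\utau^\circ(\ltau))$; and since the point $(\ltau,\utau^\circ(\ltau))$ satisfies $p = p^*_1$ and lies in $\mathcal{T}$ (one checks $0 \le \ltau \le p \le \utau^\circ(\ltau) \le 1$), it belongs to $\mathcal{T}_1$, whence $U_T^1(\ltau,\utau^\circ(\ltau)) \le \max_{\tau' \in \mathcal{T}_1} U_T^1(\tau')$. Chaining these yields $U_H(\tau) < \max_{\tau' \in \mathcal{T}_1} U_T^1(\tau')$, the second alternative in the statement. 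The degenerate case $\alpha = 1$, where the derivative above vanishes and strict monotonicity fails, is handled directly: there $U_H(\tau) = 2p-1 < 0$, giving the first alternative.

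I expect the main obstacle to be the second step — establishing $\utau^\circ(\ltau) \le 1$ — because this is precisely where the hypothesis $p \le 1/3$ does its work. For $p > 1/3$ one could have $p^*_1(\ltau,1) < p$ at small $\ltau$, so pushing $\utau$ up to its feasible cap $\utau = 1$ would leave $\tau$ strictly inside $\mathcal{T}_2$ rather than landing it on the $\mathcal{T}_1$ boundary, and the clean reduction of $U_H$ to a Case 1 trust payoff would break down. Everything else — the monotonicity signs from Facts \ref{fact:H} and the algebraic identity $\overline{q}_\tau = \hat{q}_\tau \iff p = p^*_1$ — is routine and already available from the preceding material.
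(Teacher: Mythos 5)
Your proof is correct, but it takes a genuinely different route from the paper's. The paper's argument splits on the sign of the common factor $\alpha(2\ltau-1)+(1-\alpha)(2\utau-1)$: when it is negative, $U_H(\tau)<0$ outright; when it is nonnegative, the paper compares $U_H$ and $U_T^1$ \emph{at the same} $\tau\in\mathcal{T}_2$, using that $\frac{1-2p}{1-2\ltau}<\frac{p-\ltau}{\utau-\ltau}$ exactly when $p>p^*_1(\ltau,\utau)$, and then pushes $\utau$ up to $1$ via Fact \ref{fact:T1} and invokes the optimality of $\ltau^\circ$. You instead hold $\ltau$ fixed and ride $U_H(\ltau,\cdot)$ upward in $\utau$ until you hit the boundary $p=p^*_1$, where $\overline{q}_\tau$ and $\hat{q}_\tau$ coincide and hence $U_H=U_T^1$, landing the comparison point inside $\mathcal{T}_1$. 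The two arguments rest on the same algebraic identity (the coincidence of the two scalars at $p=p^*_1$), but yours avoids the case split on the sign of the bracket entirely---only the degenerate $\alpha=1$ case needs separate treatment, and you handle it correctly with $U_H=2p-1<0$---at the cost of the extra verification that $p^*_1(\ltau,1)\ge p$ so the boundary is reached before $\utau=1$. That verification is where $p\le 1/3$ enters for you; in the paper it enters instead through the claim (established in the proof of Lemma \ref{lemm:T1}) that $(\ltau,1)\in\mathcal{T}_1$ for all $\ltau\in[0,p]$, which is the same fact in different clothing. A side benefit of your version is that it delivers the \emph{strict} inequality $U_H(\tau)<\max_{\tau'\in\mathcal{T}_1}U_T^1(\tau')$ for every $\alpha<1$ from strict monotonicity of $U_H$ in $\utau$, whereas the paper's chain is written with weak inequalities and its first case asserts strict negativity at the boundary value of $\alpha$ where the bracket actually vanishes; your treatment is cleaner on that edge case.
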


\begin{proof}
Suppose $\alpha \geq \frac{2\utau-1}{2(\utau-\ltau)}$, implying $\alpha(2\ltau-1)+(1-\alpha)(2\tau-1) < 0$. Then
\[U_H(\ltau,\utau) = \left(\frac{1-2p}{1-2\ltau}\right)(\alpha(2\ltau-1)+(1-\alpha)(2\utau-1))<0 \]
since $\frac{1-2p}{1-2\ltau} \geq 0$ (by assumption that $p < 1/2$ and $\ltau < p$). So the hedge point is dominated by the distrust point.
 
 Alternatively, suppose $\alpha < \frac{2\utau-1}{2(\utau-\ltau)}$, in which case $\alpha(2\ltau-1)+(1-\alpha)(2\tau-1) \geq 0$. Then
\begin{align} \label{ineq:TbeatsH}
U_T^1(\ltau,\utau) &= \left(\frac{p-\ltau}{\utau-\ltau}\right)(\alpha(2\ltau-1)+(1-\alpha)(2\tau-1)) \\
& \geq \left(\frac{1-2p}{1-2\ltau}\right) (\alpha(2\ltau-1)+(1-\alpha)(2\tau-1)) = U_H(\ltau,\utau) \nonumber
\end{align}
since  $\frac{1-2p}{1-2\ltau} < \frac{p-\ltau}{\utau-\ltau}$  for all $(\ltau,\utau) \in \mathcal{T}_2$ (this inequality is equivalent to $p >p^*_1(\ltau,\utau)$). That is, for the \emph{same} $(\ltau,\utau)$ pair, the trust point characterization in $U_T^1$ yields a higher payoff than the hedge point characterization in $U_H$.

Finally, for any $(\ltau,\utau) \in \mathcal{T}_2$,
\begin{align*}
U_H(\ltau,\utau) & \leq U_T^1(\ltau,\utau) && \mbox{ by (\ref{ineq:TbeatsH})} \\
& \leq 
U_T^1(\ltau,1) && \mbox{ by Fact \ref{fact:T1}} \\
& \leq U_T^1(\ltau^\circ,1) && \mbox{ by optimality of $\ltau^\circ$ }
\end{align*}
where $(\ltau^\circ,1) \in \mathcal{T}_1$ as we showed in the proof of Lemma \ref{lemm:T1}.
\end{proof}
  
 \begin{lemma}[The Case 3 Trust Point is Never Optimal]  \label{lemm:T3NotOptimal} For every $(\ltau,\utau) \in \mathcal{T}_3$, either $U_T^2(\ltau,\utau)<0$, $U_T^2(\ltau,\utau) < \max_{(\ltau,\utau) \in \mathcal{T}_1} U_T^1(\ltau,\utau)$, or $U_T^2(\ltau,\utau) < \max_{(\ltau,\utau) \in \mathcal{T}_2} U_T^2(\ltau,\utau)$.
\end{lemma}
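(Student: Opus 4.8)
The plan is to mirror the proof of Lemma~\ref{lemm:T2NotOptimal}, showing that every Case~3 trust point is strictly dominated either by the distrust point or by the Case~1 optimum; it will turn out that the third alternative in the statement is never actually needed and holds a fortiori. The enabling observation is that the monotonicity properties of $U_T^2$ collected in Fact~\ref{fact:T2} are purely algebraic statements about the map $(\ltau,\utau)\mapsto U_T^2(\ltau,\utau)$, valid irrespective of which region contains $(\ltau,\utau)$; this lets me reuse the same two reductions (push $\utau$ to an endpoint, push $\ltau$ to $0$) that were used for $\mathcal{T}_2$. Throughout I restrict to the non-degenerate regime $\utau\ge 1/2$: for $\utau<1/2$ the frontier collapses to the single distrust point, so there is no Case~3 trust point to dominate and the claim is vacuous there. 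On $\{\utau\ge 1/2\}$ one has $\utau\ge 1/2>p$, which I use below.

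The first step is to record the geometry of $\mathcal{T}_3$. From $p>p^{(2)}_\tau(\ltau,\utau)=\tfrac{\ltau+\utau}{2}$ I get $\utau<2p-\ltau\le 2p\le\tfrac23<1$ and, since $\utau\ge p$ forces $\ltau<p$, also $p-\ltau>0$. The two strict bounds $\utau<1$ and $\utau>p$ are exactly what will upgrade the weak monotonicity comparisons to strict domination, while the sign $p-\ltau>0$ is what lets me read off the signs of the derivatives in Fact~\ref{fact:T2}.

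I would then split on $\alpha$. For $\alpha>1/2$, Fact~\ref{fact:T2}(b) gives that $U_T^2(\ltau,\cdot)$ is strictly decreasing, so $\utau>p$ yields $U_T^2(\ltau,\utau)<U_T^2(\ltau,p)=(1-\alpha)(2p-1)\le 0$, which is alternative~(i). For $\alpha\le 1/2$, Fact~\ref{fact:T2}(a) gives that $U_T^2$ is increasing in $\utau$ and that $U_T^2(\cdot,1)$ is decreasing in $\ltau$; combined with $\utau<1$ these give the strict chain $U_T^2(\ltau,\utau)<U_T^2(\ltau,1)\le U_T^2(0,1)=p-\alpha$. If $\alpha\ge p$ this is already $U_T^2<p-\alpha\le 0$ (alternative~(i)); if $\alpha<p$, then $p\le 1/3$ forces $\alpha<p\le\tfrac{1-p}{2}=\underline{\alpha}(p)$, so Lemma~\ref{lemm:T1} gives $\max_{\mathcal{T}_1}U_T^1=p(1-2\alpha)$, and the elementary inequality $p-\alpha\le p(1-2\alpha)$ (equivalent to $\alpha(1-2p)\ge 0$, true since $p<1/2$) yields $U_T^2<p(1-2\alpha)=\max_{\mathcal{T}_1}U_T^1$, which is alternative~(ii).

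I expect the only delicate point to be strictness rather than any computation. The scalar inequality $p-\alpha\le p(1-2\alpha)$ degenerates to equality at $\alpha=0$, and $(1-\alpha)(2p-1)$ vanishes at $\alpha=1$, so in both corner cases dominance cannot be read off from the final scalar bound alone; what rescues the argument is that $\mathcal{T}_3\cap\{\utau\ge 1/2\}$ satisfies the strict inequalities $\utau<1$ and $\utau>p$, making the relevant first link of each chain strict. This is the one place where membership in $\mathcal{T}_3$—rather than merely in $\mathcal{T}$—is used, and everything else transcribes the Case~2 reductions verbatim. Since the argument already delivers alternative~(i) or~(ii), the weaker alternative $U_T^2<\max_{\mathcal{T}_2}U_T^2$ is subsumed and the stated disjunction holds.
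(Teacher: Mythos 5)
Your proof is correct and follows essentially the same route as the paper's: split at $\alpha=1/2$, use the Fact~\ref{fact:T2} monotonicity in $\utau$ to push the Case~3 point toward the boundary, and compare against the distrust payoff or the explicit Case~1 optimum from Lemma~\ref{lemm:T1}. The only (harmless) variation is that the paper stops at $(\ltau,1)$, observes it lies in $\mathcal{T}_1\cup\mathcal{T}_2$, and invokes $U_T^1(\ltau,1)\ge U_T^2(\ltau,1)$, whereas you push all the way to $U_T^2(0,1)=p-\alpha$ and compare directly with $p(1-2\alpha)$ exactly as in Lemma~\ref{lemm:T2NotOptimal} --- which is why your argument never needs the third alternative, and your explicit sourcing of strictness from $\utau<1$ on $\mathcal{T}_3$ is in fact a touch more careful than the paper's.
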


\begin{proof} 

As we've argued in the proof of Lemma \ref{lemm:T2NotOptimal}, when $\alpha>1/2$ then $U^T_2(\ltau,\utau)<0$ for all $(\ltau,\utau)$, so the case 3 trust point is dominated by the distrust point.

Suppose $\alpha <1/2$. For every $(\ltau,\utau) \in \mathcal{T}_3$, 
\[\utau \leq 2p-\ltau <1\]
since $p<1/2$. So $(\ltau,1)$ belongs to either $\mathcal{T}_1$ or $\mathcal{T}_2$, and 
$U_T^1(\ltau,1) \geq U_T^2(\ltau,1) \geq U_T^2(\ltau,\utau).$ Thus $U_T^2(\ltau,\utau)$ is dominated by one of the other trust points.
\end{proof}

\section{Alternative Designer Preferences} \label{app:AlternativePref}

Following \citet{hurw1951} we consider designer preferences that maximize a weighted sum of payoffs  when Nature is benevolent and when it is adversarial, with parameter $\beta \in (0,1)$ corresponding to the probability assigned to an adversarial Nature. To keep notation simple, we henceforth refer to a benevolent Nature as \emph{aligned} and an adversarial Nature as \emph{misaligned}.

Specifically, Table \ref{tab:Cases} describes four payoffs for a fixed covariate $X_1\sim \mu$ depending on whether each of Nature and the AI are aligned or misaligned. We maintain the requirement that the true distribution $P^*$ and the AI's selected $P$ must be admissible.

\begin{table}[h]
\begin{center}
\footnotesize
\begin{tabular}{|c|c|c|c|}
\hline
      &                & \multicolumn{2}{c|}{\textbf{Nature}} \\ \cline{3-4}
      &                & misaligned & aligned \\ \hline
\multirow{2}{*}{\textbf{AI}}
      & misaligned & \cellcolor{gray!20}{$\displaystyle v_{MM}(\mu) := \inf_{P^*\in\mathcal{P}(I,\mu)}
                 \inf_{P\in\mathcal{P}(I,\mu)}U(P,P^*)$}
             & $\displaystyle v_{AM}(\mu) := \sup_{P^*\in\mathcal{P}(I,\mu)}
                 \inf_{P\in\mathcal{P}(I,\mu)}U(P,P^*)$\\
                 \hline
      & aligned & $\displaystyle v_{MA}(\mu) := \inf_{P^*\in\mathcal{P}(I,\mu)}
                 \sup_{P\in\mathcal{P}(I,\mu)}U(P,P^*)$
             & \cellcolor{gray!20}{$\displaystyle v_{AA}(\mu):= \sup_{P^*\in\mathcal{P}(I,\mu)}
                 \sup_{P\in\mathcal{P}(I,\mu)}U(P,P^*)$} \\ \hline
\end{tabular}
\end{center}
\caption{} \label{tab:Cases}
\end{table}

\noindent The order of subscripts in $v_{ij}$ is such that $i$ indicates the alignment of Nature and $j$ indicates the alignment of the AI.

The new ``worst case'' payoff corresponds to certainty that the AI is misaligned

 \[\underline{v}_\beta(\mu) = \beta v_{MM}(\mu) +(1-\beta) v_{AM}(\mu)\]
 while the new ``best case'' corresponds to certainty that the AI is aligned
 \[\overline{v}_\beta(\mu) = \beta v_{MA}(\mu) + (1-\beta) v_{AA}(\mu).\]

 To keep things simple we consider a singleton $\mathcal{X}_0$ with $p := p_{x_0} <1/3$. 
 
 \begin{proposition}\label{prop:small p} When $p := p_{x_0} <1/3$, the efficient frontier is the line connecting the distrust point $D=(0,0)$ to the trust point $T_\beta = (-\beta p,(1-\beta)p).$
\end{proposition}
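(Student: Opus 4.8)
The plan is to reduce the four ``corner'' payoffs in Table \ref{tab:Cases} to objects I already control and then trap the whole feasible set inside a single half-plane. First I would record the two trivial identities $v_{AA}(\mu)=\overline{v}_I(\mu)$ and $v_{MM}(\mu)=\underline{v}_I(\mu)$, which hold by definition since ``aligned--aligned'' and ``misaligned--misaligned'' coincide with the best- and worst-case problems of Section \ref{sec:Pareto}. The substance is in the two mixed corners. For $v_{MA}(\mu)=\inf_{P^*}\sup_P U$ I would argue the value is exactly $0$ for every $\mu$: the aligned AI can report truthfully and thereby treat precisely the set $\{x_1:P^*(Y=1\mid x_1)\ge\tfrac12\}$, so $\sup_P U(P,P^*)\ge 0$; conversely an adversarial Nature can choose the uninformative truth $P^*(Y=1\mid x_1)=p$ for all $x_1$, which is admissible and, because $p<\tfrac12$, leaves every posterior below the treatment threshold, forcing $\sup_P U=0$. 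For $v_{AM}(\mu)=\sup_{P^*}\inf_P U$ I only need the inequality $v_{AM}(\mu)\le 0$: a misaligned AI can always induce the empty treatment event, giving payoff $0$, so $\inf_P U\le 0$ and hence the outer supremum is $\le 0$ too.

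Combining these with $\overline{v}_\beta=\beta v_{MA}+(1-\beta)v_{AA}$ and $\underline{v}_\beta=\beta v_{MM}+(1-\beta)v_{AM}$ yields the clean relations $\overline{v}_\beta(\mu)=(1-\beta)\overline{v}_I(\mu)$ and $\underline{v}_\beta(\mu)\le\beta\,\underline{v}_I(\mu)$. Next I would establish the pointwise bound $\underline{v}_I(\mu)+\overline{v}_I(\mu)\le 0$ for every $\mu$. This follows from the formulas for $\phi_B$ and $\phi_W$ in Section \ref{sec:phi}: writing $q^\ast$ for the feasible treatment probability attaining $\overline{v}_I(\mu)=\phi_B(q^\ast)$, the same $q^\ast$ is available to the minimizing AI, so $\underline{v}_I(\mu)\le\phi_W(q^\ast)$, and a short case check gives $\phi_B(q)+\phi_W(q)\le 0$ for all $q$ when $p<\tfrac12$ (with equality exactly on $q\le p$).

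Chaining the inequalities yields, for every $\mu$,
\[
\underline{v}_\beta(\mu)\le\beta\,\underline{v}_I(\mu)\le-\beta\,\overline{v}_I(\mu)=-\tfrac{\beta}{1-\beta}\,\overline{v}_\beta(\mu),
\]
so the entire feasible set, and hence its convex hull $C(I)$, lies in the half-plane $\{\underline{v}+\tfrac{\beta}{1-\beta}\overline{v}\le 0\}$ whose bounding line passes through $D=(0,0)$ and $T_\beta=(-\beta p,(1-\beta)p)$. To finish I would show this line is attained along its whole relevant length by the Bernoulli designs $X_1\sim\mathrm{Ber}(q)$ with $q\in[0,p]$: for such $q$ the only incentive-compatible treatment events are $\varnothing$ and $\{X_1=1\}$, which makes both mixed corners vanish ($v_{MA}=v_{AM}=0$) while $\overline{v}_I=\phi_B(q)=q$ and $\underline{v}_I=\phi_W(q)=-q$, so the implemented pair is exactly $(-\beta q,(1-\beta)q)$. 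As $q$ runs over $[0,p]$ these trace $\overline{DT_\beta}$, and since $\overline{v}_I\le p$ caps the best case, the segment has the stated endpoints. Because every feasible point lies in the half-plane and the segment itself is feasible and lies on the boundary line, the undominated points are exactly $\overline{DT_\beta}$.

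I expect the main obstacle to be pinning down the two mixed corners---correctly handling the order of the supremum and infimum and the incentive-compatibility constraint on the AI's induced treatment event---since the whole argument hinges on $v_{MA}=0$ and $v_{AM}\le 0$ rather than on their full dependence on $\mu$. Notably, this route sidesteps any re-derivation of a two-point--support reduction for the Hurwicz objective: the half-plane bound disposes of \emph{all} $\mu$ at once, and only the explicit Bernoulli designs are needed to certify attainment. The remaining steps are routine bookkeeping.
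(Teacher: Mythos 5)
Your proposal is correct, and it reaches the result by a genuinely different route from the paper on the key ``no other point is undominated'' step. You and the paper agree on the essential corner computations: $v_{AA}=\overline{v}_I$, $v_{MM}=\underline{v}_I$, $v_{MA}(\mu)=0$ for every $\mu$ (adversarial Nature plays the uninformative $P^*$; the aligned AI's best reply is truthful reporting), and $v_{AM}(\mu)\le 0$ with equality on $\mathrm{Ber}(q)$, $q\le p$ (these are exactly the paper's Lemmas \ref{lemm:EG} and \ref{lemm:EB}, and your justification of the attainment case---Nature loads all treatment need onto the rare event $\{X_1=1\}$, which the misaligned AI then declines to use---is the same construction, if stated tersely). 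Where you diverge is in bounding the feasible set: the paper re-runs the two-point-support reduction for the Hurwicz objective (Lemma \ref{lemm:twopoint}, which requires the extra argument that coarsening to $\mu_{q^*}$ shrinks the set $IC(\mu)$ and hence weakly raises $v_{AM}$, followed by Lemma \ref{lemm:q} restricting to $q\in[0,p]$), whereas you prove the single inequality $\phi_B(q)+\phi_W(q)\le 0$ for $p<1/2$, chain it into $\underline{v}_\beta(\mu)\le-\tfrac{\beta}{1-\beta}\overline{v}_\beta(\mu)$, and thereby trap every $\mu$ (and the convex hull) in one half-plane whose boundary is the claimed line; attainment by $\mathrm{Ber}(q)$, $q\in[0,p]$, together with the cap $\overline{v}_I\le p$, then pins down the segment's endpoints. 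Your supporting-hyperplane argument is shorter and disposes of all covariate designs at once, which is exactly what makes it work here; the trade-off is that it exploits the fact that the frontier is a single segment through the origin, so it would not extend directly to the kinked, multi-segment frontiers that the paper's reduction machinery is built to handle (e.g., $p\in(1/3,1/2)$ or general $\mathcal{X}_0$). Both arguments are sound; yours is a clean special-purpose shortcut, the paper's is the general-purpose workhorse.
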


Thus the frontier is again a line segment connecting the distrust point to the trust point, although the slope of the segment now depends on the designer's belief parameter $\beta$. See Figure \ref{fig:beta}.

\begin{figure}[h]
    \centering
    \includegraphics[scale=0.3]{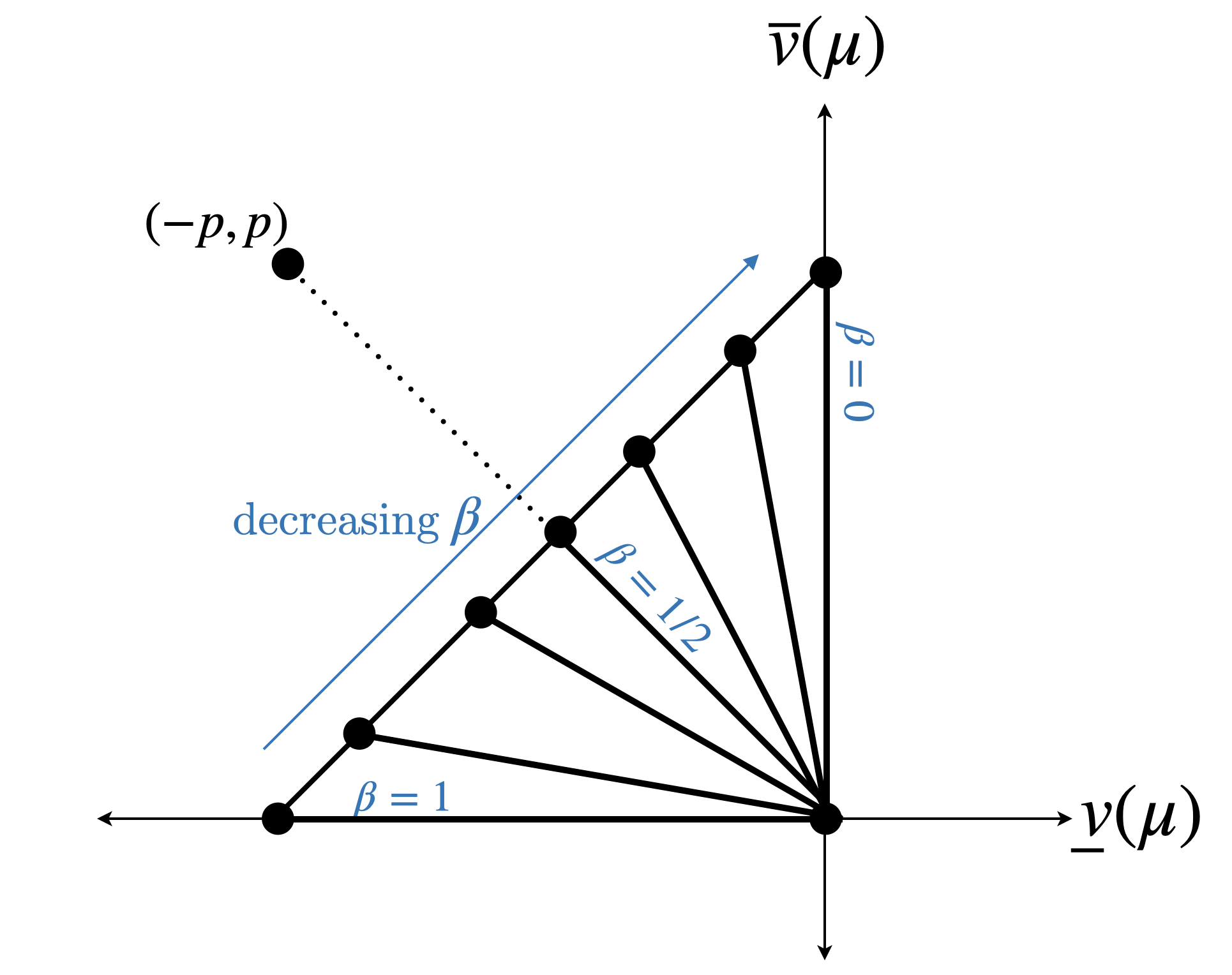}
    \caption{Comparison of the frontier for different values of $\beta$}
    \label{fig:beta}
\end{figure}

 We will write $\mathcal{P}(p,\mu)$ for the set of admissible distributions given prior probability of need of treatment $p=P(Y=1)$ and covariate $X_1 \sim \mu$. We have already shown that when $p <1/3$, 
 \[v_{MM}(\mu) = \inf_{P^*\in\mathcal{P}(p,\mu)} \inf_{P\in\mathcal{P}(p,\mu)} U(P,P^*) = -p\]
 while
 \[v_{AA}(\mu)= \sup_{P^*\in\mathcal{P}(p,\mu)} \sup_{P\in\mathcal{P}(p,\mu)} U(P,P^*)=p.\]

 Section \ref{sec:OffDiagonal} proves results about the remaining two cells of Table \ref{tab:Cases}, and Section \ref{sec:TwoPoints} uses these to complete the proof.

 \subsection{Preliminary Results} \label{sec:OffDiagonal}

 \begin{lemma} \label{lemm:EG} For every $\mu$ and $p<1/3$, \[v_{MA}(\mu) = \inf_{P^*\in\mathcal{P}(p,\mu)} \sup_{P\in\mathcal{P}(p,\mu)} U(P,P^*) = 0.\]
 \end{lemma}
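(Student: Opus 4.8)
The plan is to prove both inequalities $v_{MA}(\mu)\ge 0$ and $v_{MA}(\mu)\le 0$ using a single distribution: the ``independence'' (product) distribution $P_0\in\Delta(\mathcal{X}_1\times\mathcal{Y})$ defined by $\mathrm{marg}_{\mathcal{X}_1}P_0=\mu$ and $P_0(Y=1\mid X_1=x_1)=p$ for every $x_1$. First I would check that $P_0$ is admissible: its $\mathcal{X}_1$-marginal is $\mu$ by construction, and $P_0(Y=1)=\sum_{x_1}\mu(x_1)\,p=p$, so both admissibility conditions hold. The fact I would then record is that, since $p<1/3<1/2$, under $P_0$ the posterior probability of need of treatment equals $p<1/2$ at every $x_1$; hence the induced action treats no one, and, conversely, if $P_0$ is the \emph{true} distribution then treating any positive-mass set of patients is strictly harmful.

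For the lower bound $v_{MA}(\mu)=\inf_{P^*}\sup_P U(P,P^*)\ge 0$, I would argue that the aligned AI can secure $0$ against every $P^*$ by reporting $P=P_0$. Under this report the designer's posterior is $p<1/2$ at every $x_1$, so the action in \eqref{eq:aStar} is $a\equiv 0$ and therefore $U(P_0,P^*)=0$ for \emph{every} admissible $P^*$. Consequently $\sup_P U(P,P^*)\ge 0$ for each $P^*$, and taking the infimum over $P^*$ preserves the inequality.

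For the upper bound $v_{MA}(\mu)\le 0$, I would let Nature choose $P^*=P_0$ and show $\sup_P U(P,P_0)=0$, which bounds the outer infimum from above. For an arbitrary admissible report $P$, the designer treats exactly on the event $A_1=\{x_1:P(Y=1\mid x_1)\ge 1/2\}$, and evaluated under the true $P_0$ the payoff decomposes as $U(P,P_0)=\mu(A_1)\,(2p-1)$, because $P_0(Y=1\mid x_1)=p$ on $A_1$ and $P_0(A_1)=\mu(A_1)$. Since $2p-1<0$, every admissible report gives $U(P,P_0)\le 0$, with equality exactly when the report induces no treatment; hence $\sup_P U(P,P_0)=0$.

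Combining the two bounds yields $v_{MA}(\mu)=0$. Note the argument uses only $p<1/2$, which is implied by the hypothesis $p<1/3$. The only points requiring care are the bookkeeping of the $\inf$--$\sup$ order---observing that the lower bound needs a single report that works simultaneously against all $P^*$, whereas the upper bound needs a single $P^*$ that caps all reports---together with the verification that $P_0$ is admissible. I do not expect a substantive obstacle, since the same product distribution resolves both directions; the only alternative route, recasting the problem via Lemma \ref{lemm:Restate} in terms of $(q,\pi_1^*)$, is available but less transparent than the direct construction.
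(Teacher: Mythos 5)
Your proof is correct and matches the paper's argument in substance: both hinge on Nature choosing the uninformative product distribution $P_0$ (under which every admissible report yields payoff $\mu(A_1)(2p-1)\le 0$) together with the fact that a report inducing no treatment guarantees payoff $0$, and both use only $p<1/2$. The only cosmetic difference is in the lower bound, where the paper observes that the aligned AI's best response is the truthful report $P=P^*$, reducing the problem to $\inf_{P^*}U(P^*,P^*)\ge 0$, whereas you exhibit the single safe report $P_0$ that secures $0$ against every $P^*$; both steps are immediate.
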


 That is, the designer's expected payoff from an adversarial Nature and aligned AI is always zero. This is because the aligned AI will leverage any information in $X_1$ to help the designer, and so the adversarial Nature chooses $X_1$ to be completely uninformative. 

\begin{proof} For any $P^*$ chosen by Nature, the aligned AI best responds by setting $P=P^*$. So the problem reduces to solving
\begin{align} 
\inf_{P^* \in \mathcal{P}(p,\mu)} U(P^*,P^*) &  = \mathbb{E}_{P^*}[u(Y,a(X_1,P^*))] \nonumber \\
& = \sum_{x_1 \in A} \mu(x_1) (2\mathbb{E}_{P^*}[Y=1 \mid X_1=x_1]-1) \label{eq:Naturepayoff}
\end{align}
where
\[A = \{x_1 \in \mathcal{X}_1 \mid \mathbb{E}_{P^*}(Y=1 \mid X_1=x_1) \geq 1/2\}\]
is the event on which the designer treats. 
Clearly (\ref{eq:Naturepayoff}) is at least zero, since $x_1\in A$ if and only if $2\mathbb{E}_{P^*}(Y=1 \mid X_1=x_1)-1\geq 0$. To see that zero is attainable, observe that Nature can pick $P^*$ such that $P^*(Y=1 \mid X_1=x_1)=p$ for all $x_1 \in \mathcal{X}_1$. This corresponds to a completely uninformative $X_1$ and leads the designer to treat no one.  

\end{proof}

  \begin{lemma} \label{lemm:EB} Fix any $p<1/3$ and $X_1 \sim \mbox{Ber}(q)$. Then \begin{equation} \label{eq:infsup} v_{AM}(\mu) = \sup_{P^*\in\mathcal{P}(p,\mu)} \inf_{P\in\mathcal{P}(p,\mu)} U(P,P^*) \leq 0
  \end{equation}
  where the value of zero is attained when $ q \leq p$.
 \end{lemma}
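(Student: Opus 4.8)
The plan is to reduce both claims to the structure from Lemma~\ref{lemm:Restate}: once the misaligned AI fixes an admissible report $P$, the designer treats exactly on the event $A_1 = \{x_1 : P(Y=1\mid X_1=x_1)\geq 1/2\}$, and the realized payoff depends on the true $P^*$ only through the conditional success probabilities on $A_1$. Since $X_1\sim\mbox{Ber}(q)$, the only candidate treatment events are $\emptyset$, $\{X_1=1\}$, $\{X_1=0\}$, and $\{X_1=0,1\}$. Thus the inner infimum over $P$ is a minimization of the induced payoff over the \emph{incentive-compatible} such events, while Nature's outer problem is to choose $(\pi_1^*,\pi_0^*)$ with $q\pi_1^*+(1-q)\pi_0^*=p$ so as to maximize that minimum. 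This recasts $v_{AM}(\mu)$ as a finite maximin.

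For the inequality $v_{AM}(\mu)\leq 0$, I would observe that the misaligned AI can always induce the designer to treat no one: the uninformative report $P$ with $\pi_1=\pi_0=p<1/2$ is admissible and yields $A_1=\emptyset$, hence payoff $0$. Therefore $\inf_{P}U(P,P^*)\leq 0$ for every $P^*$, and taking the supremum over $P^*$ gives $v_{AM}(\mu)\leq 0$. This step holds for all $q$ and requires no case analysis.

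For the equality when $q\leq p$, the crux is to show that $\emptyset$ and $\{X_1=1\}$ are the \emph{only} incentive-compatible treatment events. Inducing treatment on $\{X_1=0\}$ would require $\pi_0\geq 1/2$, forcing $p=q\pi_1+(1-q)\pi_0\geq (1-q)/2$, i.e.\ $q\geq 1-2p$; but $q\leq p<1/3$ gives $q<1/3<1-2p$, a contradiction (and treating everyone would force $p\geq 1/2$). This is exactly the mechanism flagged in the footnote to Example~\ref{ex:EvenWorse}, and it can equivalently be read off Lemma~\ref{lemm:BP}, since $1-q\notin BP(p)$ and $1\notin BP(p)$. With only $\emptyset$ and $\{X_1=1\}$ available, the AI's value under $P^*$ is $\min\{0,\,q(2\pi_1^*-1)\}$. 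Nature then chooses $\pi_1^*=1$, which is admissible because $q\leq p$ makes $\pi_0^*=\frac{p-q}{1-q}\in[0,1]$; this yields $q(2\pi_1^*-1)=q\geq 0$, so the minimum equals $0$. Hence $\inf_{P}U(P,P^*)=0$ for this $P^*$, giving $v_{AM}(\mu)\geq 0$, and combined with the inequality, $v_{AM}(\mu)=0$.

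The only genuinely delicate point is the incentive-compatibility bookkeeping in the final step---pinning down precisely which of the four candidate treatment events the misaligned AI can actually induce---because the entire equality hinges on ruling out exactly those events that would let the AI push the payoff strictly below $0$. The remaining verifications (that $\pi_0^*\in[0,1]$ and that $q(2\pi_1^*-1)\geq 0$) are routine.
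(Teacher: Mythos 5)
Your proposal is correct and follows essentially the same route as the paper: the upper bound via the uninformative report, and for $q\leq p$ the construction where Nature sets $\pi_1^*=1$ and $\pi_0^*=\frac{p-q}{1-q}$, with the key step being that $p<1/3$ and $q\leq p$ rule out inducing treatment on $\{X_1=0\}$ (equivalently, $1-q>2p$), so the misaligned AI's only options yield payoff $0$ or $q\geq 0$. The paper phrases the last step as the constraint $\mu(A)\leq 2p$ on incentive-compatible treatment events, which is the same bookkeeping you do through the posteriors.
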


The lemma says that when $X_1 \sim \mbox{Bernoulli}(q)$, then the designer's expected payoff with an adversarial AI and benevolent Nature is weakly negative. The upper bound of zero is because an adversarial AI can always simply report that $X_1$ is uninformative, in which case the designer never treats. We show that when $q\in [0,p]$ then a payoff of zero can be attained: the designer chooses the conditional probability of need of treatment to be 1 on $X_1=1$, so that treating on this event yields a positive payoff. Since the AI cannot induce treatment on $X=0$ when $q <p$ (by our assumption that $p<1/3$), the adversarial AI's best response is to report that $X_1$ is uninformative, even though it is not. (Note that if Nature were to instead choose $X_1$ to be uninformative, the misaligned AI could induce a negative payoff for the principal by inducing treatment on $X_1=1$, so this is not optimal for a benevolent Nature.)

 \begin{proof}
As explained above, $v_{AM}(\mu) \leq 0$ because an adversarial AI can report that $X_1$ is uninformative. To see why $v_{AM}(\mu)=0$ when $q \leq p$ , observe that the optimization problem in (\ref{eq:infsup}) is equivalent to the following game: There is a given probability space $(\mathcal{X}_1,\mu)$ where $\mathcal{X}_1 = \{0,1\}$ and $\mu(X_1=1)=q<p$.  Nature first chooses a random variable $f: \mathcal{X}_1 \rightarrow [0,1]$ mapping each $x_1 \in \mathcal{X}_1$ to a conditional probability of need of treatment. The misaligned AI then picks an event $A \subseteq \mathcal{X}_1$ such that $\mu(A) \leq 2p$ (so that inducing treatment on $A$ is incentive compatible) in order to minimize $\sum_{x_1 \in A} \mu(x_1) (2f(x_1)-1)$. Nature's payoff is $\sum_{x_1 \in A} \mu(x_1) (2f(x_1)-1)$.

To attain a payoff of zero, Nature chooses
\[f(x_1) = \left\{\begin{array}{cc}
1 & \mbox{ if } x_1 =1 \\
\frac{p-q}{1-q} & \mbox{ if } x_1 =0
\end{array}\right.\]
where $\frac{p-q}{1-q} \in [0,1]$, so that the designer should optimally treat on $x_1=1$ and not on $x_1=0$. Since $\mu(X_1=1)=q<p$, then also $\mu(X_1=0)=1-q > 1-p$. But $1-p \geq 2p$ by assumption that $p<1/3$. So the AI can only choose $A$ to include $X_1=1$ and not $X_1=0$. Thus the AI chooses between giving Nature a payoff of zero (via $A = \varnothing$) or $q>0$ (via $A=\{1\}$), and minimizes this by choosing $A=\varnothing$.  

\end{proof}

\subsection{Completing the Proof of Proposition \ref{prop:small p}} \label{sec:TwoPoints}
 We will finally argue that it is without loss to restrict attention to two-point distributions $\mu_q=(q,1-q)$ with $q\in [0,p]$. This argument is separated into two parts: Lemma \ref{lemm:twopoint} shows that it is without loss to consider two-point distributions, and Lemma \ref{lemm:q} shows that it is without loss to suppose $q\in [0,p]$.
 
 \begin{lemma} \label{lemm:twopoint} For every distribution $\mu$ there is some $\mu_q=(q,1-q)$ such that $(\underline{v}_\beta(\mu_q),\overline{v}_\beta(\mu_q))$ dominates $(\underline{v}_\beta(\mu),\overline{v}_\beta(\mu))$.
 \end{lemma}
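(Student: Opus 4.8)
The plan is to mimic the two-point reduction of Section~\ref{sec:2point}, but to track all four cells of Table~\ref{tab:Cases} rather than a single worst- and best-case value. The key simplification is that, by Lemma~\ref{lemm:EG}, $v_{MA}(\mu)=0$ for every $\mu$, so the best case reduces to $\overline v_\beta(\mu)=(1-\beta)v_{AA}(\mu)$; and by Lemma~\ref{lemm:EB} (whose upper bound $v_{AM}(\mu)\le 0$ holds for arbitrary $\mu$, since the misaligned AI can always report an uninformative $X_1$ and induce no treatment) the worst case satisfies $\underline v_\beta(\mu)\le\beta v_{MM}(\mu)$. Recalling from Appendix~\ref{appendix A} that $v_{AA}(\mu)=\sup_{q\in\mathcal Q(p,\mu)}\phi_B(q)$ and $v_{MM}(\mu)=\inf_{q\in\mathcal Q(p,\mu)}\phi_W(q)$, and that for $p<1/3$ every feasible $q$ lies in $[0,2p]\subset[0,1-p)$ so that $\phi_W(q)=-q$, the two nondiagonal cells let me collapse the whole problem onto these two one-dimensional programs.

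The candidate two-point distribution is $X_1\sim\mbox{Ber}(q')$ with $q':=v_{AA}(\mu)$. Since $\phi_B$ is maximized at $q=p$ with value $p$, we have $q'\in[0,p]$, so by the computations following Lemma~\ref{lemm:EB} (the feasible set is $\{0,q'\}$, and all four cells evaluate to $v_{AA}=q'$, $v_{MM}=-q'$, $v_{MA}=v_{AM}=0$) this distribution yields exactly $(\underline v_\beta(\mu_{q'}),\overline v_\beta(\mu_{q'}))=(-\beta q',(1-\beta)q')$. The best-case coordinate is then preserved on the nose, since $\overline v_\beta(\mu_{q'})=(1-\beta)q'=(1-\beta)v_{AA}(\mu)=\overline v_\beta(\mu)$.

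It remains to check that the worst-case coordinate weakly improves, i.e. that $\underline v_\beta(\mu)\le-\beta q'$. Let $q^*\in\argmax_{q\in\mathcal Q(p,\mu)}\phi_B(q)$, so $q'=\phi_B(q^*)$; a direct inspection of $\phi_B$ gives $\phi_B(q^*)\le q^*$ in both regimes ($q^*\le p$ and $q^*>p$, the latter using $2p-q^*\le q^*\Leftrightarrow p\le q^*$), hence $q'\le q^*$. Because $q^*\in\mathcal Q(p,\mu)$ and $\phi_W(q^*)=-q^*$, we get $v_{MM}(\mu)=\inf_{q}\phi_W(q)\le-q^*\le-q'$, and therefore $\underline v_\beta(\mu)\le\beta v_{MM}(\mu)\le-\beta q'=\underline v_\beta(\mu_{q'})$. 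Combining the two coordinates shows $\mu_{q'}$ dominates $\mu$.

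The step needing genuine care---and the main obstacle---is the inequality $\underline v_\beta(\mu)\le-\beta q'$: unlike in the main text, where shrinking the feasible set $\mathcal Q$ automatically weakly raises the worst case, here the worst case mixes $v_{MM}$ and $v_{AM}$, and one must verify that the adversarial-AI/benevolent-Nature cell does not deteriorate when we pass to the two-point support. This is exactly where the $q'\le p$ clause of Lemma~\ref{lemm:EB} does the work: it guarantees $v_{AM}(\mu_{q'})=0$, so the off-diagonal term can only help and the bound on $v_{MM}$ alone suffices.
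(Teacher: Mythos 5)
Your proof is correct, and it diverges from the paper's in the one place that matters. The paper keeps the aligned AI's optimal feasible treatment probability $q^*$ itself as the Bernoulli parameter and then shows that each of the four cells of Table \ref{tab:Cases} weakly improves when passing from $\mu$ to $\mu_{q^*}$; the delicate step there is $v_{AM}(\mu)\le v_{AM}(\mu_{q^*})$, which the paper proves by a minimax argument showing that coarsening the covariate to $\{\varnothing,E,E^c,\mathcal{X}_1\}$ shrinks the misaligned AI's set of incentive-compatible treatment events, so the inner minimum can only rise for every choice of Nature's $f$. You instead take $q'=\phi_B(q^*)=v_{AA}(\mu)\le\min\{q^*,p\}$, which sidesteps that comparison entirely: since $q'\le p$, Lemma \ref{lemm:EB} gives $v_{AM}(\mu_{q'})=0$ exactly, and the crude bound $v_{AM}(\mu)\le 0$ (valid for any $\mu$, as you note, because the misaligned AI can always report an uninformative $P$) plus the chain $v_{MM}(\mu)\le\phi_W(q^*)=-q^*\le -q'$ closes the worst-case coordinate, while $v_{MA}\equiv 0$ pins the best-case coordinate to $(1-\beta)v_{AA}$ on both sides. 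Your route is shorter and avoids the IC-set nesting argument, at the cost of being more tied to the closed forms of $\phi_B,\phi_W$ and to the standing assumption $p<1/3$ (which guarantees $q^*\le 2p<1-p$ and $q'\le p$); the paper's argument for the $v_{AM}$ cell is structural and would survive in regimes where $v_{AM}(\mu_{q'})=0$ fails. Since the lemma is only ever invoked under $p<1/3$, your proof is a valid and self-contained alternative. One cosmetic point: state explicitly that the dominance you obtain is weak (best case equal, worst case weakly better), which is what the lemma needs given how it is used alongside Lemma \ref{lemm:q}.
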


\begin{proof}

Consider any distribution $\mu$, and as in Section \ref{sec:2point} let $q^* \in \mathcal{Q}(p,\mu)$ be the feasible treatment probability that the aligned AI would pick when Nature is benevolent. Define $\mu_{q^*}=(q^*,1-q^*)$. Then $v_{AA}(\mu)=v_{AA}(\mu_{q^*})$ by construction, and we showed in Section \ref{sec:2point} that $v_{MM}(\mu) \leq v_{MM}(\mu_{q^*})$. Moreover, Lemma \ref{lemm:EG} implies $v_{MA}(\mu) = v_{MA}(\mu_{q^*})$, since its value of zero does not depend on the distribution.

It remains to argue that $v_{AM}(\mu) \leq v_{AM}(\mu_{q^*}).$ Since $q^*$ is an attainable event probability, there is some event $E$ such that $\mu(E)=q^*$. Nature's choice of $P^*$ is equivalent to choosing a function $f:\mathcal{X}_1 \rightarrow [0,1]$ such that each $f(x_1)$ is the conditional probability of need of treatment at state $x_1$.

Now let $IC(\mu)$ denote the set of subsets of $\mathcal{X}_1$ on which treatment can be induced in an incentive-compatible way, i.e., 
\[IC(\mu) = \{ A \subseteq \mathcal{X}_1 \mid \mu(A) \leq 2p.\}\]
For each such $A$, inducing treatment on $A$ leads to a payoff of 
\[w_f(A)= \sum_{x_1 \in A} \mu(x_1) (2f(x_1)-1).\] The misaligned AI chooses $A$ to minimize this payoff, so Nature's payoff for any given choice of $f$ is $\min_{A \in IC(\mu)} w_f(A)$. But if Nature were instead to pick $\mu_{q^*}$---coarsening the $\sigma$-algebra to $\{\varnothing, E,E^c,\mathcal{X}_1\}$---then treatment could only be induced on
\[IC(\mu_{q^*}) = \{ A \in \{\varnothing, E,E^c,\mathcal{X}_1\} \mid \mu(A) \leq 2p\},\]
Importantly, $IC(\mu_{q^*}) \subseteq IC(\mu)$, so for every $f$ and $\mu$
\[\min_{A \in IC(\mu)} w_f(A) \leq \min_{A \in IC(\mu_{q^*})} w_f(A)\]
implying
\[v_{AM}(\mu) = \max_{f: \mathcal{X}_1 \rightarrow [0,1]} \left[\min_{A \in IC(\mu)} w_f(A)\right] \leq \max_{f: \mathcal{X}_1 \rightarrow [0,1]} \left[\min_{A \in IC(\mu_{q^*})} w_f(A)\right] = v_{AM}(\mu_{q^*} \]
as desired. So
\begin{align*}
\underline{v}_\beta(\mu) & = \beta v_{MM}(\mu) +(1-\beta) v_{AM}(\mu) \\
& \leq \beta v_{MM}(\mu_{q^*}) +(1-\beta) v_{AM}(\mu_{q^*}) = \underline{v}_\beta(\mu_{q^*})
\end{align*}
while
\begin{align*}
\overline{v}_\beta(\mu) & = \beta v_{MA}(\mu) +(1-\beta) v_{AA}(\mu) \\
& = \beta v_{MA}(\mu_{q^*}) +(1-\beta) v_{AA}(\mu_{q^*}) = \overline{v}_\beta(\mu_{q^*})
\end{align*}
implying that $(\underline{v}_\beta(\mu),\overline{v}_\beta(\mu))$ is dominated by $(\underline{v}_\beta(\mu_{q^*}),\overline{v}_\beta(\mu_{q^*}))$. Thus as in the proof of Lemma \ref{lemm:p} it is sufficient to restrict attention to two-point distributions $\mu_q$. 
\end{proof}

\begin{lemma} \label{lemm:q} For every distribution $\mu_q$ there is some $q'\in [0,p)$ such that $(\underline{v}_\beta(\mu_{q'}),\overline{v}_\beta(\mu_{q'}))$ weakly dominates $(\underline{v}_\beta(\mu_q),\overline{v}_\beta(\mu_q))$.
\end{lemma}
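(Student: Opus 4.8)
The plan is to combine the two–point reduction already established in Lemma \ref{lemm:twopoint} with an explicit comparison of the four cell payoffs $v_{MM},v_{AM},v_{MA},v_{AA}$ across the possible values of $q$. Since $\mathcal{E}(\mu_q)=\mathcal{E}(\mu_{1-q})$, the distributions $\mu_q$ and $\mu_{1-q}$ induce identical feasible sets $\mathcal{Q}(p,\mu_q)$ and hence identical payoffs, so it is without loss to take $q\in[0,1/2]$. Using Lemma \ref{lemm:Restate} I would write $v_{AA}(\mu_q)=\sup_{q''\in\mathcal{Q}(p,\mu_q)}\phi_B(q'')$ and $v_{MM}(\mu_q)=\inf_{q''\in\mathcal{Q}(p,\mu_q)}\phi_W(q'')$, and recall that for $p<1/3$ every incentive-compatible fraction lies in $[0,2p]\subseteq[0,1-p]$, so $\phi_W(q'')=-q''$ throughout; consequently $v_{MM}(\mu_q)=-\max\mathcal{Q}(p,\mu_q)\le -q$ whenever $q\le 2p$ (so that $q$ itself is incentive compatible). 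Meanwhile Lemma \ref{lemm:EG} gives $v_{MA}(\mu_q)=0$ and Lemma \ref{lemm:EB} gives $v_{AM}(\mu_q)\le 0$, for every $q$.

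First I would dispose of $q\in[0,p]$. Here the complementary fraction $1-q\ge 1-p>2p$ is not incentive compatible, so $\mathcal{Q}(p,\mu_q)=\{0,q\}$, giving $v_{MM}(\mu_q)=-q$ and $v_{AA}(\mu_q)=\phi_B(q)=q$, while $v_{AM}(\mu_q)=0$ by the equality clause of Lemma \ref{lemm:EB}. Hence $(\underline{v}_\beta(\mu_q),\overline{v}_\beta(\mu_q))=(-\beta q,(1-\beta)q)$; as $q$ ranges over $[0,p]$ these trace exactly the segment from $D=(0,0)$ to $T_\beta=(-\beta p,(1-\beta)p)$. Thus each such point with $q<p$ already has the required form with $q'=q$, and $T_\beta$ is obtained in the limit $q\uparrow p$, placing it in the closure that defines the frontier.

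The substantive step is $q\in(p,1/2]$. For $q\in(p,2p]$ I claim weak domination by $q'=2p-q\in[0,p)$. I would first verify $v_{AA}(\mu_q)=2p-q$: since $q>p$ we have $\phi_B(q)=2p-q$, and comparing $\phi_B$ at $0$, $q$, and (where incentive compatible) $1-q$ shows the maximum over $\mathcal{Q}(p,\mu_q)$ equals $2p-q$ for $q<1/2$. Then the best-case coordinates agree exactly, $\overline{v}_\beta(\mu_q)=(1-\beta)(2p-q)=(1-\beta)q'=\overline{v}_\beta(\mu_{q'})$, and for the worst-case coordinate, using $v_{MM}(\mu_q)\le -q$, $v_{AM}(\mu_q)\le 0$, and $q\ge p$,
\[\underline{v}_\beta(\mu_q)=\beta v_{MM}(\mu_q)+(1-\beta)v_{AM}(\mu_q)\le -\beta q\le -\beta(2p-q)=\underline{v}_\beta(\mu_{q'}),\]
where the final inequality is precisely $q\ge p$. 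For $q\in(2p,1/2]$ neither $q$ nor $1-q$ is incentive compatible, so $\mathcal{Q}(p,\mu_q)=\{0\}$, the point collapses to $D=(0,0)$, and domination by $q'=0$ is immediate.

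The main obstacle is the bookkeeping of incentive-compatible sets as $q$ crosses the thresholds $p$, $1-2p$, and $2p$: when $p\in[1/4,1/3)$ the complementary event $\{X_1=0\}$ can itself become incentive compatible for large $q$, which changes $v_{MM}(\mu_q)$ and, more delicately, lets $v_{AM}(\mu_q)$ fall strictly below $0$ (to $p-\tfrac12$). The key simplification I would stress is that the domination argument never uses the exact values of $v_{MM}$ or $v_{AM}$ in these harder regimes: it needs only the inequalities $v_{MM}(\mu_q)\le -q$ and $v_{AM}(\mu_q)\le 0$, both of which hold regardless of whether $\{X_1=0\}$ is incentive compatible. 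Only $v_{AA}$ must be pinned down exactly, and that is the routine maximization of $\phi_B$ over the at-most-three-element set $\mathcal{Q}(p,\mu_q)$. Combining this with Lemma \ref{lemm:twopoint} shows that all undominated points lie on $\{(-\beta q,(1-\beta)q):q\in[0,p]\}$, i.e. the segment $\overline{DT_\beta}$, which completes the proof of Proposition \ref{prop:small p}.
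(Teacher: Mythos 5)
Your proof is correct and follows essentially the same route as the paper's: restrict to $q\in[0,1/2]$ by symmetry, dispose of $q\le p$ immediately, and for $q>p$ combine $v_{MA}\equiv 0$ (Lemma \ref{lemm:EG}) and $v_{AM}\le 0$ (Lemma \ref{lemm:EB}) with the values of $v_{MM}$ and $v_{AA}$ computed earlier to exhibit a dominating two-point distribution. The only difference is cosmetic: the paper dominates every $q\in(p,1/2]$ by the single point $q'=p$, whereas you use $q'=2p-q$ (and $q'=0$ for $q>2p$), which matches the best-case coordinate exactly; your bookkeeping of when the complementary event $1-q$ is incentive compatible, and of the boundary point $q=p$, is if anything more careful than the paper's.
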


\begin{proof} Pick an arbitrary $q$, where we can restrict $q \in [0,1/2]$  without loss by symmetry. If $q \leq p$ then the statement trivially follows by setting $q'=q$. Otherwise from our proof of Lemma \ref{lemm:p}, $v_{MM}(\mu_q) < v_{MM}(\mu_p)$ and $v_{AA}(\mu_q) < v_{AA}(\mu_p)$. Moreover, by Lemmas \ref{lemm:EG} and \ref{lemm:EB} we have $v_{MA}(\mu_q) = v_{MA}(\mu_p)=0$ and $v_{AM}(\mu_q) \leq 0 = v_{AM}(\mu_p)$. So
\begin{align*}
  \underline{v}_\beta(\mu_q) & = \beta v_{MM}(\mu_q) +(1-\beta) v_{AM}(\mu_q) \\
  & \leq \beta v_{MM}(\mu_p) +(1-\beta) v_{AM}(\mu_p) = \underline{v}_\beta(\mu_p)
\end{align*}
while
\begin{align*}
\overline{v}_\beta(\mu_q) & = \beta v_{MA}(\mu_q) +(1-\beta) v_{AA}(\mu_q) \\
& = \beta v_{MA}(\mu_p) +(1-\beta) v_{AA}(\mu_p) = \overline{v}_\beta(\mu_p)
\end{align*}
implying that $(\underline{v}_\beta(\mu_q),\overline{v}_\beta(\mu_q))$ is dominated by $(\underline{v}_\beta(\mu_p),\overline{v}_\beta(\mu_p))$.
\end{proof}

 Thus we can restrict attention to $\{\mu_q : q\in [0,p]\}$. Since for any such $\mu_q$,
 \[\underline{v}_\beta(\mu) = \beta (-q) + (1-\beta)\cdot 0 = -\beta q\]
 while 
 \[\overline{v}_\beta(\mu) = \beta \cdot 0 + (1-\beta)\cdot (q) = (1-\beta) q\] 
we obtain the frontier $\{(-\beta q, (1-\beta)q) :q \in [0,p]\}$ as claimed.

\end{document}